\PassOptionsToPackage{unicode}{hyperref}
\PassOptionsToPackage{hyphens}{url}
\PassOptionsToPackage{dvipsnames,svgnames,x11names}{xcolor}
\documentclass[
  12pt]{article}

\usepackage{amsthm}
\usepackage{amsmath,amssymb}
\usepackage{multirow}
\usepackage[ruled,noend]{algorithm2e}
\usepackage{algpseudocode}
\usepackage[caption=false,position=top]{subfig}
\usepackage{iftex}
\ifPDFTeX
  \usepackage[T1]{fontenc}
  \usepackage[utf8]{inputenc}
  \usepackage{textcomp} % provide euro and other symbols
\else % if luatex or xetex
  \usepackage{unicode-math}
  \defaultfontfeatures{Scale=MatchLowercase}
  \defaultfontfeatures[\rmfamily]{Ligatures=TeX,Scale=1}
\fi
\usepackage{lmodern}
\ifPDFTeX\else  
    % xetex/luatex font selection
\fi
% Use upquote if available, for straight quotes in verbatim environments
\IfFileExists{upquote.sty}{\usepackage{upquote}}{}
\IfFileExists{microtype.sty}{% use microtype if available
  \usepackage[]{microtype}
  \UseMicrotypeSet[protrusion]{basicmath} % disable protrusion for tt fonts
}{}
\makeatletter
\@ifundefined{KOMAClassName}{% if non-KOMA class
  \IfFileExists{parskip.sty}{%
    \usepackage{parskip}
  }{% else
    \setlength{\parindent}{0pt}
    \setlength{\parskip}{6pt plus 2pt minus 1pt}}
}{% if KOMA class
  \KOMAoptions{parskip=half}}
\makeatother
\usepackage{xcolor}
\setlength{\emergencystretch}{3em} % prevent overfull lines
\setcounter{secnumdepth}{5}
% Make \paragraph and \subparagraph free-standing
\makeatletter
\ifx\paragraph\undefined\else
  \let\oldparagraph\paragraph
  \renewcommand{\paragraph}{
    \@ifstar
      \xxxParagraphStar
      \xxxParagraphNoStar
  }
  \newcommand{\xxxParagraphStar}[1]{\oldparagraph*{#1}\mbox{}}
  \newcommand{\xxxParagraphNoStar}[1]{\oldparagraph{#1}\mbox{}}
\fi
\ifx\subparagraph\undefined\else
  \let\oldsubparagraph\subparagraph
  \renewcommand{\subparagraph}{
    \@ifstar
      \xxxSubParagraphStar
      \xxxSubParagraphNoStar
  }
  \newcommand{\xxxSubParagraphStar}[1]{\oldsubparagraph*{#1}\mbox{}}
  \newcommand{\xxxSubParagraphNoStar}[1]{\oldsubparagraph{#1}\mbox{}}
\fi
\makeatother

\usepackage{longtable,booktabs,array}
\usepackage{calc} % for calculating minipage widths
% Correct order of tables after \paragraph or \subparagraph
\usepackage{etoolbox}
\makeatletter
\patchcmd\longtable{\par}{\if@noskipsec\mbox{}\fi\par}{}{}
\makeatother
% Allow footnotes in longtable head/foot
\IfFileExists{footnotehyper.sty}{\usepackage{footnotehyper}}{\usepackage{footnote}}
\makesavenoteenv{longtable}
\usepackage{graphicx}
\makeatletter
\def\maxwidth{\ifdim\Gin@nat@width>\linewidth\linewidth\else\Gin@nat@width\fi}
\def\maxheight{\ifdim\Gin@nat@height>\textheight\textheight\else\Gin@nat@height\fi}
\makeatother
% Scale images if necessary, so that they will not overflow the page
% margins by default, and it is still possible to overwrite the defaults
% using explicit options in \includegraphics[width, height, ...]{}
\setkeys{Gin}{width=\maxwidth,height=\maxheight,keepaspectratio}
% Set default figure placement to htbp
\makeatletter
\def\fps@figure{htbp}
\makeatother

\addtolength{\oddsidemargin}{-.5in}%
\addtolength{\evensidemargin}{-.1in}%
\addtolength{\textwidth}{1in}%
\addtolength{\textheight}{1.7in}%
\addtolength{\topmargin}{-1in}
\makeatletter
\@ifpackageloaded{caption}{}{\usepackage{caption}}
\AtBeginDocument{%
\ifdefined\contentsname
  \renewcommand*\contentsname{Table of contents}
\else
  \newcommand\contentsname{Table of contents}
\fi
\ifdefined\listfigurename
  \renewcommand*\listfigurename{List of Figures}
\else
  \newcommand\listfigurename{List of Figures}
\fi
\ifdefined\listtablename
  \renewcommand*\listtablename{List of Tables}
\else
  \newcommand\listtablename{List of Tables}
\fi
\ifdefined\figurename
  \renewcommand*\figurename{Figure}
\else
  \newcommand\figurename{Figure}
\fi
\ifdefined\tablename
  \renewcommand*\tablename{Table}
\else
  \newcommand\tablename{Table}
\fi
}
\@ifpackageloaded{float}{}{\usepackage{float}}
\floatstyle{ruled}
\@ifundefined{c@chapter}{\newfloat{codelisting}{h}{lop}}{\newfloat{codelisting}{h}{lop}[chapter]}
\floatname{codelisting}{Listing}

\makeatother
\makeatletter
\makeatother
\makeatletter
\@ifpackageloaded{caption}{}{\usepackage{caption}}
\@ifpackageloaded{subcaption}{}{\usepackage{subcaption}}
\makeatother

\ifLuaTeX
  \usepackage{selnolig}  % disable illegal ligatures
\fi
\usepackage[]{natbib}
\bibliographystyle{agsm}
\usepackage{bookmark}

\IfFileExists{xurl.sty}{\usepackage{xurl}}{} % add URL line breaks if available
\urlstyle{same} % disable monospaced font for URLs
\hypersetup{
  pdftitle={Title},
  pdfauthor={Author 1; Author 2},
  pdfkeywords={3 to 6 keywords, that do not appear in the title},
  colorlinks=true,
  linkcolor={blue},
  filecolor={Maroon},
  citecolor={Blue},
  urlcolor={Blue},
  pdfcreator={LaTeX via pandoc}}

\newtheorem{assu}{Assumption}
\newtheorem{thm}{Theorem}

\newcommand{\anon}{1}

%set the key \texttt{anon} to ``0'' to hide the authors and acknowledgements,
%  producing the required anonymized version. 
%Set the key \texttt{anon} to ``1'' to produce the manuscript with author details and
% acknowledgments. 

\begin{document}

\def\spacingset#1{\renewcommand{\baselinestretch}%
{#1}\small\normalsize} \spacingset{1}

%%%%%%%%%%%%%%%%%%%%%%%%%%%%%%%%%%%%%%%%%%%%%%%%%%%%%%%%%%%%%%%%%%%%%%%%%%%%%%

\if1\anon
{
  \title{\bf Learn-As-you-GO (LAGO) Trials: Optimizing Trials for Effectiveness and Power to Prevent Failed Trials}
  \author{Ante Bing\thanks{
    All authors were supported by NIH Grant R01 HL167936.\;\;\;\;\;\;}\hspace{.2cm}\\
    Department of Mathematics and Statistics, Boston University\\
    Donna Spiegelman\\
    Department of Biostatistics, Yale University\\ 
    Judith J. Lok\\
    Department of Mathematics and Statistics, Boston University
    }
  \maketitle
} \fi

\if0\anon
{
  \bigskip
  \bigskip
  \bigskip
  \begin{center}
    {\LARGE\bf Title}
\end{center}
  \medskip
} \fi

\bigskip
\begin{abstract}
The Learn-As-you-GO (LAGO) design provides a rigorous framework for adapting the intervention package based on accumulating data while the trial is ongoing. This article improves the flexibility of the LAGO design by incorporating statistical power as an optimization criterion (power goal) in LAGO optimizations. We propose the unconditional and conditional power approaches to add a power goal. Both approaches estimate the power at the end of the LAGO trial using data from prior stages, and increase the power at the end of the LAGO trial when the original trial was underpowered. Including a power goal maintains the asymptotic properties of the estimators of the treatment effect while preserving the asymptotic level of the statistical test at the end of the trial. We illustrate the benefits of our methods through a retrospective application to the BetterBirth Study, a large-scale study of maternal-newborn care that failed to show a significant effect on its primary outcome. This analysis demonstrates how our methods could have led to more intensive interventions and potentially significant results. The LAGO design with power goal optimizations provides investigators with a powerful tool to reduce the risk of failed trials due to insufficient power. 
\end{abstract}

\noindent%
{\it Keywords:} Adaptive clinical trial; Statistical power; Implementation trial; Optimization; Public health.
\vfill

\newpage
\spacingset{1.8}

\section{Introduction}\label{sec-intro}

Adaptive trial designs offer flexibility by allowing pre-planned modifications based on accumulating data \citep{Huskins2018Adaptive}. Among adaptive trial designs, the Learn-As-you-GO (LAGO) design stands out for its capability to optimize an intervention package based on accumulating data while the trial is ongoing \citep{nevo2021analysis, bing2023learnasyougo}. LAGO trials are typically multi-stage, multi-component intervention studies. After each stage, the results collected up to that stage are analyzed, the intervention package is reassessed, and a revised intervention package is implemented in the subsequent stage. By iteratively optimizing the intervention package (LAGO optimizations), the LAGO design aims to prevent trial failure while reducing costs.

Current adaptive trial designs allow changes to randomization and dropping treatment arms \citep{FDAadaptiveNew}, but cannot modify intervention components based on interim results. This limitation may explain why some large trials have not shown significant effects \citep{stensland2014adult, semrau2017outcomes, fogel2018factors}. In practice, investigators frequently adapt evidence-based interventions to better suit local contexts and populations \citep{escoffery2018systematic, movsisyan2019adapting}. Such adaptations are often made ad hoc, without a systematic approach for determining when and how interventions are modified \citep{aarons2012dynamic}. The LAGO design formalizes the conditions under which interventions are adapted during an ongoing trial, allowing investigators to systematically optimize interventions to prevent trial failure while maintaining the integrity of the trial, and preserving the Type I error rate of the primary hypothesis test \citep{nevo2021analysis, bing2023learnasyougo}. The LAGO design also aligns with recent calls in implementation science for developing ongoing learning systems for intervention adaptation \citep{chambers2023advancing}.

The LAGO design has been compared to other designs such as the Multiphase Optimization Strategy (MOST) \citep{collins2007multiphase} and Sequential Multiple Assignment Randomized Trials (SMART) \citep{murphy2005experimental}. MOST and SMART differ significantly from the LAGO design. MOST is a three-phase design: a formative phase that does not involve quantitative data collection, an optimization phase that uses a factorial experiment to select intervention package components at fixed doses in relation to short-term surrogate outcomes, and a third phase consisting of a full-scale trial of the intervention package determined in phase 2, with no further adaptations. MOST uses data from only phase 3 for estimation and inference. MOST phase 2 can be paired with LAGO in phase 3 to foster continued learning after the trial launches, provided that the trial in phase 3 has multiple stages. SMART is a study design methodology for identifying dynamic treatment regimens in which fixed rules determine the next step in a patient's treatment depending on that particular patient's own outcomes to date. In contrast, LAGO uses adaptive methods to identify a complex, multi-component, multi-level intervention package that does not vary over time.

Initially developed for binary outcomes under a logistic regression model \citep{nevo2021analysis}, the LAGO design has since been extended to accommodate continuous outcomes under a flexible conditional mean model \citep{bing2023learnasyougo}. 
However, neither \citet{nevo2021analysis} nor \citet{bing2023learnasyougo} has directly incorporated statistical power into the LAGO design.

LAGO trials consist of $K>1$ stages. To distinguish between the recommended interventions and the optimal interventions, note that at the end of each stage $k$, where $k=1,\ldots,K-1$, the recommended intervention for stage $k+1$ is calculated using data from stages $1$ to $k$. At the end of the LAGO trial, the optimal intervention is estimated based on the data from all $K$ stages. LAGO optimization refers to the process of calculating the recommended interventions after each stage and estimating the optimal intervention in the final analysis. LAGO optimizations allow for various optimization criteria. For example, \citet{nevo2021analysis} considered an optimization criterion based on an outcome goal in LAGO trials with binary outcomes, where the intervention component composition was estimated to achieve a pre-specified success probability while minimizing cost. \citet{bing2023learnasyougo} considered another optimization criterion based on an outcome goal in LAGO trials with continuous outcomes, where the intervention component composition was estimated to achieve a pre-specified mean while minimizing cost.
LAGO optimizations based on an outcome goal, as described in previous work on LAGO, help to prevent failed trials by increasing the success probability or the outcome mean, but do not directly incorporate power.

This article considers LAGO optimizations with an additional power goal. The recommended interventions are calculated using an optimization criterion based on both an outcome goal and a power goal, while the optimal intervention at the end of the trial is estimated using an optimization criterion based only on an outcome goal. Incorporating a power goal into the LAGO optimizations presents challenges because power depends on the pre-specified statistical test at the end of the trial, and only partially observed data for the selected test are available during the LAGO optimizations. Thus, new methods are required to adapt the intervention package composition using partially observed data. We explore both unconditional and conditional power approaches to incorporate a power goal into the LAGO optimizations, with further details provided in Section~\ref{power goal section}. Section~\ref{level section} shows that regardless of the power goal approach, incorporating an additional power goal directly into the LAGO design preserves the asymptotic level of the pre-specified test at the end of the trial.

The LAGO design, incorporating both an outcome goal and a power goal, is illustrated through the BetterBirth Study, a costly failed trial of maternal and newborn care, involving approximately 160,000 mothers and children in Uttar Pradesh, India \citep{hirschhorn2015learning, semrau2017outcomes}. 
Despite its scale, the BetterBirth Study did not find a significant reduction in its primary composite outcome, which consisted of stillbirth, early neonatal death, maternal death, and self-reported maternal severe complications within 7 days after delivery \citep{semrau2017outcomes}. 
The BetterBirth Study consisted of three stages, with stage 1 having insufficient sample size for a LAGO optimization.
Section \ref{application} retrospectively applies the LAGO design with an outcome goal and a power goal to calculate the stage 3 recommended intervention that aims to increase the outcome to a pre-specified threshold and might have prevented a failed trial while minimizing cost. While the BetterBirth Study serves as a hypothetical example, the LAGO design is not merely a theoretical concept. Real-world studies are currently being optimized using the LAGO design, including PULESA-Uganda \citep{PULESA}, TASKPEN-Zambia \citep{herce2024evaluating}, MAP-IT-Nigeria \citep{aifah2023study}, and HPTN 096 \citep{HPTN096}.

This article is organized as follows.
Section~\ref{setting} provides an overview of LAGO trials. 
Section~\ref{power goal section} discusses the constraints imposed by the two versions of the power goal and how the new constraints affect the recommended interventions.
Section~\ref{asymptotic properties section} describes asymptotic properties of the estimator for the treatment effect in the final analysis based on data from all stages.
Section~\ref{level section} shows that incorporating a power goal in LAGO optimizations preserves the asymptotic level of the pre-specified test for two types of tests at the end of the trial.
Section~\ref{simulation} presents simulations evaluating the performance of the LAGO design with both an outcome goal and a power goal. 
Section~\ref{application} retrospectively applies the LAGO design with both an outcome goal and a power goal to the BetterBirth Study. 
Section~\ref{Discussion} addresses implications, limitations, and future directions for the LAGO design.

\section{Overview of LAGO Trials}\label{setting}
LAGO trials are typically multi-stage, multi-component intervention studies. A pre-specified number of centers, denoted by $J^{(k)}$, are enrolled in each stage $k$, where $k=1,\ldots,K$. Centers are randomized to either the intervention or control group, with each center participating in only one stage.  
For simplicity,  the main text focuses on the case where $K=2$, with details for $K>2$ provided in Appendix Section G. 

In LAGO trials, the intervention package, denoted by $\boldsymbol{x}$ (or $\boldsymbol{x}_j$ for center-specific intervention packages), consists of $P$ unique components. 
The cost of implementing $\boldsymbol{x}$ is calculated using a pre-specified cost function $C(\boldsymbol{x})$, where $C(\boldsymbol{x})$ is determined by subject-matter considerations, and typically represents the total monetary cost of the intervention package.
For example, a linear cost function with $P=2$ could be $C(\boldsymbol{x}) = 4x_1 + x_2$.
Let $\boldsymbol{\mathcal{U}}=(\mathcal{U}_1, ..., \mathcal{U}_P)$ and $\boldsymbol{\mathcal{L}}=(\mathcal{L}_1, ..., \mathcal{L}_P)$ denote the upper and lower bounds of the $P$ components of the intervention package $\boldsymbol{x}$.
Let $n_j^{(k)}$ denote the number of participants in center $j$ of stage $k$, where $j=1,\dots,J^{(k)}$.
Let $n^{(k)}$ denote the total number of participants in stage $k$, and let $n$ denote the total number of participants across all stages.
Let $Y_{ij}^{(k)}$ denote the outcome for participant $i$ in center $j$ of stage~$k$.

In a two-stage LAGO trial, stage 1 could implement either a standard design, such as a fractional factorial design, or a pre-specified intervention package $\boldsymbol{x}^{(1)}$, based on either the research team's educated guesses or calculations from pre-trial information. Appendix Section E describes the calculation of an optimal $\boldsymbol{x}^{(1)}$ from pre-trial information.
Let $\boldsymbol{a}_j^{(1)}$ be the actual intervention package implemented in center $j$ of stage 1. 
To properly estimate the treatment effect parameters, sufficient variation in the $\boldsymbol{a}_j^{(1)}$ values is required. Such variation could come from the study design, unplanned variation in the interventions, or both.
In settings without unplanned variation in the interventions, the actual intervention is the same as the recommended intervention, or $\boldsymbol{a}^{(1)} = \boldsymbol{x}^{(1)}$.
In large-scale intervention studies, centers may not adhere strictly to the recommended intervention. In such cases, it is assumed that $\boldsymbol{a}_j^{(1)} = h_j^{(1)} ( \boldsymbol{x}^{(1)} )$, where $h_j^{(1)}$ is a center-specific continuous deterministic function for each center $j$ in stage 1. The functions $h_j^{(1)}$ are typically unknown, and only the $\boldsymbol{a}_j^{(1)}$ are observed. 
Assuming sufficient variation in the interventions in stage 1,
either a logistic regression model or a generalized linear model (GLM) is fitted for the outcome of interest at the end of stage 1, depending on the outcome type. The outcome models are defined in Assumptions \ref{logistic assumption} and \ref{glm assumption} in Section \ref{modeling the outcome in LAGO}, respectively. 
Let $\hat{\boldsymbol{x}}^{(2, n^{(1)})}$ be the stage 2 recommended intervention. The superscript in $\hat{\boldsymbol{x}}^{(2, n^{(1)})}$ stresses that $\hat{\boldsymbol{x}}^{(2, n^{(1)})}$ is calculated using data from all $n^{(1)}$ participants in stage 1. Section~\ref{stage 1 of LAGO Trials} describes the details of calculating the stage 2 recommended intervention $\hat{\boldsymbol{x}}^{(2, n^{(1)})}$ based on the stage 1 data.

Once $\hat{\boldsymbol{x}}^{(2, n^{(1)})}$ is calculated, the LAGO trial continues to stage 2.
Let $\boldsymbol{A}_{j}^{(2, n^{(1)})}$ be the actual intervention package implemented at center $j$ in stage 2.
As in stage 1, when there is no unplanned variation in the interventions, $\boldsymbol{A}^{(2, n^{(1)})} = \hat{\boldsymbol{x}}^{(2, n^{(1)})}$. With unplanned variation in the interventions,
$\boldsymbol{A}_{j}^{(2, n^{(1)})} =  h_j^{(2)} (\hat{\boldsymbol{x}}^{(2, n^{(1)})})$, where $h_j^{(2)}$ is a center-specific continuous deterministic function for each center $j$ in stage 2.
Following stage 2, the final analysis estimates the optimal intervention, $\hat{\boldsymbol{x}}^{opt}$, based on data from both stage 1 and stage 2, where optimality is defined as achieving a pre-specified success probability/outcome mean while minimizing cost.

\subsection{Modeling the Outcome in LAGO Trials}\label{modeling the outcome in LAGO}
In LAGO trials with binary outcomes, consistency and asymptotic normality of the final estimator for the treatment effect, based on data from all stages combined, have been proven when using a logistic regression model \citep{nevo2021analysis}, and for LAGO trials with continuous outcomes, using a more flexible GLM \citep{bing2023learnasyougo}. 
Let $Y_{ij}$ denote the outcome for participant $i$ in center $j$.
Binary and continuous outcomes are modeled differently in this article.
\begin{assu}\label{logistic assumption}
In LAGO trials with binary outcomes, the probability of success for an individual $i$ in center $j$, given the recommended intervention package $\boldsymbol{X}=\boldsymbol{x}$ and the actual intervention packages $\boldsymbol{A}_j=\boldsymbol{a}_j$, follows a logistic regression model: 
$$
\operatorname{logit}\left(\operatorname{pr}\left(Y_{i j}=1 \mid \boldsymbol{A}_j=\right. \left.\boldsymbol{a}_j, \boldsymbol{X}=\boldsymbol{x}; \boldsymbol{\beta}\right)\right)
= 
\operatorname{logit}(p_{\boldsymbol{a}_j}\left( \boldsymbol{\beta}\right))
= 
\beta_0+\boldsymbol{\beta}_1^T \boldsymbol{a}_j.
$$
$\boldsymbol{\beta}^T=({\beta}_{0}, \boldsymbol{\beta}_{1}^T )$ are unknown parameters to be estimated. 
Let $\boldsymbol{\beta}^*$ denote the true parameter vector $\boldsymbol{\beta}$.
This logistic regression model assumes that the success probability only depends on the recommended intervention $\boldsymbol{x}$ through the actual interventions $\boldsymbol{a}_j$. 
\end{assu}

\begin{assu}\label{glm assumption}
In LAGO trials with continuous outcomes, 
the expected outcome for an individual $i$ in center $j$, given the recommended intervention package $\boldsymbol{X}=\boldsymbol{x}$ and the actual intervention packages $\boldsymbol{A}_j=\boldsymbol{a}_j$, follows a GLM:
$$
g\left(E\left(Y_{ij} | \boldsymbol{A}_j=\boldsymbol{a}_j, \boldsymbol{X}=\boldsymbol{x}; \boldsymbol{\beta} \right)\right) = \beta_{0}+\boldsymbol{\beta}_{1}^{T} \boldsymbol{a}_j,
$$
where $g()$ is a twice continuously differentiable link function. This GLM assumes that the expected outcome only depends on the recommended intervention $\boldsymbol{x}$ through the actual interventions $\boldsymbol{a}_j$. 
\end{assu}

\subsection{LAGO Optimizations Based on Outcome and Power Goals}\label{stage 1 of LAGO Trials}
Section~\ref{stage 1 of LAGO Trials} first provides an overview of LAGO optimizations based on an outcome goal, which was previously considered in \citet{nevo2021analysis} and \citet{bing2023learnasyougo}, and then explains how to modify the LAGO optimizations to add an additional power goal.

In LAGO trials, the pre-specified outcome goal is a minimal success probability ($\Tilde{p}$) for binary outcomes, or a minimal acceptable mean ($\Tilde{\mu}$) for continuous outcomes.
Assuming that a positive effect of the overall intervention package on the outcome is expected, the optimal interventions satisfy the following:
{\small
\begin{equation}\label{optimization no hat}
\text{Min}_{\boldsymbol{x}} C\left(\boldsymbol{x}\right)
\; \text {subject to the constraint that} \;
\left\{
\begin{array}{ll}
{p}_{\boldsymbol{x}}\left( {\boldsymbol{\beta}}\right) \geq \tilde{p} & \text{for binary outcomes,} \\
{E}\left(Y_{ij} | \boldsymbol{x}; {\boldsymbol{\beta}} \right) \geq \tilde{\mu} &\text{for continuous outcomes.}
\end{array}
\right. 
\;
\end{equation} 
}
Let $\hat{\boldsymbol{\beta}}^{(1)}$ denote the stage 1-based estimate of $\boldsymbol{\beta}$.
In practice, the stage 2 recommended intervention, $\hat{\boldsymbol{x}}^{(2, n^{(1)})}$, is obtained by solving
{\small
\begin{equation}\label{optimization}
\text{Min}_{\boldsymbol{x}} C\left(\boldsymbol{x}\right)
\; \text {subject to the constraint that} \;
\left\{
\begin{array}{ll}
{p}_{\boldsymbol{x}}\bigl( \hat{\boldsymbol{\beta}}^{(1)}\bigr) \geq \tilde{p} \;\; \text{for binary outcomes,} \\
{E}\bigl(Y_{ij} | \boldsymbol{x}; \hat{\boldsymbol{\beta}}^{(1)} \bigr) \geq \tilde{\mu} \;\text{for continuous outcomes.}
\end{array}
\right. 
\;
\end{equation} 
}

To incorporate a power goal, the LAGO optimization additionally considers the form of the pre-specified statistical test used at the end of the LAGO trial. 
Let $\Pi$ be a pre-specified power goal. 
Assuming that the LAGO optimizations consider both an outcome goal and a power goal, 
$\hat{\boldsymbol{x}}^{(2, n^{(1)})}$ is obtained by solving
\vspace{-0.2cm}
{\small
\begin{equation}\label{optimization two constraints}
\text{Min}_{\boldsymbol{x}} C\left(\boldsymbol{x}\right)
\; \text {subject to the constraint that} 
\begin{cases}
&{p}_{\boldsymbol{x}}\bigl( \hat{\boldsymbol{\beta}}^{(1)} \bigr) \geq \tilde{p} \;\; 
\text{or} \;\; {E}\bigl(Y_{ij} | \boldsymbol{x}; \hat{\boldsymbol{\beta}}^{(1)} \bigr) \geq \tilde{\mu} \;\;
, \\
&{Power}\bigl(\boldsymbol{x}; \hat{\boldsymbol{\beta}}^{(1)} \bigr) \geq \Pi.
\end{cases}
\end{equation}
}
\vspace{-0.1cm}
\noindent The power in equation (\ref{optimization two constraints}) refers to the power of the pre-specified test at the end of the trial that assesses the null hypothesis that the intervention package has no effect on the outcome of interest, regardless of variations in implementation. Most of this article focuses on equation (\ref{optimization two constraints}). Calculating $\hat{\boldsymbol{x}}^{(2, n^{(1)})}$ based on the power goal from equation (\ref{optimization two constraints}) is non-trivial, since the power depends on the test used, and only partial data for such test is observed at the end of stage 1. Section~\ref{power goal section} considers the power goal of the two-sample z-test for the difference between two proportions as a concrete example, and describes in detail both unconditional and conditional power approaches for calculating $\hat{\boldsymbol{x}}^{(2, n^{(1)})}$.
Appendix Section D presents examples of power goals of other common tests, including the two-sample t-test for the difference between two means.

Other LAGO optimizations can be considered. For example, in applications where the budget of the intervention package is fixed, the recommended intervention could aim to maximize power while requiring that the intervention package leads to a specified outcome goal, all within a fixed budget $c$. That is,
\vspace{-0.2cm}
{\small
\begin{equation*}
\text{Max}_{\boldsymbol{x}} \;{Power}\bigl(\boldsymbol{x}; \hat{\boldsymbol{\beta}}^{(1)} \bigr)
\; \text {subject to the constraint that} 
\begin{cases}
&{p}_{\boldsymbol{x}}\bigl( \hat{\boldsymbol{\beta}}^{(1)} \bigr) \geq \tilde{p} \;\; 
\text{or} \;\; {E}\bigl(Y_{ij} | \boldsymbol{x}; \hat{\boldsymbol{\beta}}^{(1)} \bigr) \geq \tilde{\mu} \;\;
, \\
&C\left(\boldsymbol{x}\right) \leq c.
\end{cases}
\end{equation*}
}

\vspace{-0.7cm}
Let $\hat{\boldsymbol{\beta}}^{(2)}$ be the estimate of $\boldsymbol{\beta}$ based on all data from stage 1 and stage 2 combined.
At the end of the LAGO trial, the optimal intervention, $\hat{\boldsymbol{x}}^{opt}$, is estimated based only on an outcome goal and solves
\small{
\begin{equation}\label{optimization opt int}
\text{Min}_{\boldsymbol{x}} C\left(\boldsymbol{x}\right)
\; \text {subject to the constraint that} \;
\left\{
\begin{array}{ll}
{p}_{\boldsymbol{x}}\bigl( \hat{\boldsymbol{\beta}}^{(2)}\bigr) \geq \tilde{p} \;\; \text{for binary outcomes,} \\
{E}\bigl(Y_{ij} | \boldsymbol{x}; \hat{\boldsymbol{\beta}}^{(2)} \bigr) \geq \tilde{\mu} \;\text{for continuous outcomes.}
\end{array}
\right. 
\;
\end{equation} 
}

\normalsize 
\vspace{-0.5cm}
\subsection{Hypothesis Testing in LAGO Trials}\label{hypothesis testing section}
A main objective of LAGO trials is to test the null hypothesis that the intervention package has no effect on the outcome of interest, regardless of variations in implementation. 
There are two ways to test this null hypothesis. One approach is to carry out a 1 degree-of-freedom (1-df) test, such as a t-test or z-test, comparing the difference in means or proportions between the intervention and control groups. The other approach is to carry out a $P$ degrees-of-freedom ($P$-df) test, such as a Wald test or likelihood ratio test, for the subvector of $\boldsymbol{\beta}$ characterizing the effect of the intervention.

The main text presents the details of conducting the 1-df two-sample z-test for the difference between two proportions in LAGO trials with binary outcomes. 
Let $S_1$ and $S_0$ be the sums of the outcomes in the intervention and control groups, respectively, in the final combined data from both stages.
Let $N_1$ and $N_0$ be the total sample sizes in the intervention and control groups in the final combined data from both stages, such that $\hat{p}_1 = S_1 / N_1$ and $\hat{p}_0 = S_0 / N_0$ are the estimated success probabilities in the intervention and control groups in the final combined data from both stages.
Let $S_1^{(k)}$ and $S_0^{(k)}$, and $n_1^{(k)}$ and $n_0^{(k)}$ be the sums of the outcomes and the total number of participants, respectively, in the intervention and control groups in stage $k$ $(k=1$ or $2)$.
Let $\alpha_{1}^{(k)} = \lim_{n \rightarrow \infty} n_1^{(k)}/N_1 > 0$ and $\alpha_{0}^{(k)} = \lim_{n \rightarrow \infty} n_0^{(k)}/N_0 > 0$, which assume that the ratios between the total sample sizes in the intervention and control groups in stage $k$ and the total sample sizes in the intervention and control groups in the final combined data from both stages converge to non-zero constants as $n$ approaches infinity. This leads to reasonable approximations if each stage contributes a substantial proportion of the total sample size.

\vspace{-0.2cm}
At the end of stage 2, the test statistic $Z$ of the two-sample z-test for the difference between two proportions with unpooled variance is
\begin{equation}\label{test statistic}
    Z = \frac{\hat{p}_1 - \hat{p}_0}{\sqrt{\frac{\hat{p}_1(1-\hat{p}_1)}{N_1} + \frac{\hat{p}_0(1-\hat{p}_0)}{N_0}}},
    \;\text{where} \;\;\hat{p}_1 = \frac{S_1^{(1)}+S_1^{(2)}}{N_1},
    \;\; \hat{p}_0 = \frac{S_0^{(1)}+S_0^{(2)}}{N_0}.
\end{equation}
Let $z_{\alpha}$ be the critical value corresponding to the upper $\alpha$ tail probability of a standard normal distribution. $H_0: p_1=p_0$ is rejected when $|Z|>z_{\alpha/2}$.
At $\alpha=0.05$, $z_{\alpha/2}\approx1.96$.

\subsection{Additional Notation and Assumptions}\label{notation and assumptions}
Recall that $\boldsymbol{a}_j^{(1)}$ is a $P\times1$ vector that represents the actual intervention package implemented in center $j$ of stage 1.
Let $\overline{\boldsymbol{a}}^{(1)}=\bigl(\boldsymbol{a}_{1}^{(1)}, \ldots, \boldsymbol{a}_{J^{(1)}}^{(1)}\bigr)$ be the actual interventions for all stage 1 centers.
Let $\boldsymbol{Y}_j^{(1)} = \bigl(Y_{1 j}^{(1)}, \ldots, Y_{n_{j}^{(1)} j}^{(1)} \bigr)$ be the observed outcomes for participants $1, \ldots, n_j^{(1)}$ in center $j$ of stage 1, and $\overline{\boldsymbol{Y}}^{(1)}=\bigl(\boldsymbol{Y}_{1}^{(1)}, \ldots, \boldsymbol{Y}_{J^{(1)}}^{(1)}\bigr)$ be the outcomes for all stage 1 centers.
The set of all stage 2 recommended interventions is denoted by
$\overline{\hat{\boldsymbol{x}}}^{\left(2, n^{(1)}\right)}=\bigl(\hat{\boldsymbol{x}}_{1}^{\left(2, n^{(1)}\right)}, \ldots, \hat{\boldsymbol{x}}_{J^{(2)}}^{\left(2, n^{(1)}\right)}\bigr)$. 
Let $\overline{\boldsymbol{A}}^{\left(2, n^{(1)}\right)}=\bigl(\boldsymbol{A}_{1}^{\left(2, n^{(1)}\right)}, \ldots, \boldsymbol{A}_{J^{(2)}}^{\left(2, n^{(1)}\right)}\bigr)$ be the actual interventions for all stage 2 centers. 
Let $Y_{ij}^{(2,n^{(1)})}$
be the observed outcome for participant $i$ in center $j$ of stage 2. 
Let $\boldsymbol{Y}_j^{(2,n^{(1)})} = \bigl(Y_{1 j}^{(2, n^{(1)})}, \ldots, Y_{n_{j}^{(2)} j}^{(2, n^{(1)})} \bigr)$ be the observed outcomes for participants $1, \ldots, n_j^{(2)}$ in center $j$ of stage 2, and $\overline{\boldsymbol{Y}}^{\left(2, n^{(1)}\right)}=\bigl(\boldsymbol{Y}_{1}^{\left(2, n^{(1)}\right)}, \ldots, \boldsymbol{Y}_{J^{(2)}}^{\left(2, n^{(1)}\right)}\bigr)$ be the outcomes for all stage 2 centers.

\begin{assu} \label{conditional indep assu}
Given $\overline{\hat{\boldsymbol{x}}}^{\left(2, n^{(1)}\right)}$, 
$\bigl(\overline{\boldsymbol{A}}^{\left(2, n^{(1)}\right)},\overline{\boldsymbol{Y}}^{\left(2, n^{(1)}\right)}\bigr)$ are independent of the stage 1 data $\bigl(\overline{\boldsymbol{a}}^{(1)},\overline{\boldsymbol{Y}}^{(1)} \bigr)$. 
\end{assu}  
Assumption \ref{conditional indep assu} states that learning from data collected so far takes place only through the calculation of the recommended intervention.

\begin{assu} \label{intervention_cvg_assumption}
For non center-specific stage 2 recommended intervention, $\hat{\boldsymbol{x}}^{(2,n^{(1)})}
\xrightarrow{P}
\boldsymbol{x}^{(2)}$ as $n^{(1)} \xrightarrow{} \infty$, where $\boldsymbol{x}^{(2)}$ is the limit of the recommended intervention.
For center-specific stage 2 recommended interventions, for each $j \in \{1,\ldots,J^{(2)}\}$, $\hat{\boldsymbol{x}}^{(2,n^{(1)})}_{j}
\xrightarrow{P}
\boldsymbol{x}_j^{(2)}$ as $n^{(1)} \xrightarrow{} \infty$, where $\boldsymbol{x}_j^{(2)}$ are center-specific limits of the recommended interventions.
\end{assu}
Section \ref{asymptotic properties section} discusses why Assumption \ref{intervention_cvg_assumption} holds when the stage 2 recommended intervention is obtained by solving an optimization problem with an outcome goal and a power goal.

As in \citet{nevo2021analysis} and \citet{bing2023learnasyougo}, under Assumption \ref{intervention_cvg_assumption}, the assumption on the $h_j$ that map recommended interventions to actual interventions, combined with the Continuous Mapping Theorem, imply that the $\boldsymbol{A}_j^{(2,n^{(1)})}$ converge in probability to some $\boldsymbol{a}_j^{(2)}$.

\begin{assu}\label{glm_unif_bound_assu}
The outcomes $Y_{ij}$ take values in a compact space, and the possible parameter vectors $\boldsymbol{\beta}$ are in a compact space.
\end{assu}

\begin{assu}\label{ind_error}
For LAGO trials with continuous outcomes,
let $\epsilon_{ij}$ be the error term for individual $i$ in center $j$ of the GLM from Assumption~\ref{glm assumption}. 
That is, $Y_{ij} = g^{-1}\bigl( \beta_{0}+\boldsymbol{\beta}_{1}^{T}\boldsymbol{a}_j \bigr) + \epsilon_{ij}$, with 
$E\left( \epsilon_{ij} | \boldsymbol{a}_j \right) = 0$. 
The $\epsilon_{ij}$ are independent, and their distributions are independent of the intervention package composition $\boldsymbol{a}_{j}$.
\end{assu}

The parameter space of $\boldsymbol{\beta}$ can be partitioned into two disjoint sets:  $\Lambda_{feasible}$ and $\Lambda_{infeasible}$, where $\Lambda_{feasible}$ consists of all $\boldsymbol{\beta}$ for which the constraints in (\ref{optimization no hat}) define a non-empty feasible region of $\boldsymbol{x}$, and $\Lambda_{infeasible}$ consists of all $\boldsymbol{\beta}$ for which the constraints yield an empty feasible region of $\boldsymbol{x}$.
\begin{assu}\label{feasible assu}
If $\boldsymbol{\beta}^* \in \Lambda_{feasible}$, equation (\ref{optimization no hat}) has a unique solution $\boldsymbol{x}$ in an open neighborhood of $\boldsymbol{\beta}^*$ that depends continuously on $\boldsymbol{\beta}$. 
\end{assu}

\citet{nevo2021analysis} argued that Assumption \ref{feasible assu} holds when, e.g., the cost function is linear with unit costs $c_1,...,c_P$, where the ratios $\beta_{1p}/c_p$ are all distinct for the $p=1,\ldots,P$ components with $\beta_{1p}>0$. For cubic cost functions or other forms of cost functions, Assumption \ref{feasible assu} needs to be proven or empirically verified for the specific problem. Appendix Section I provides a method for empirically verifying Assumption \ref{feasible assu}.

\begin{assu}\label{not reachable under the null assu}
    The outcome goal from equation (\ref{optimization no hat}), $\tilde{p}$ or $\tilde{\mu}$, cannot be reached without intervention. 
\end{assu}
\vspace{-0.2cm}
Assumption \ref{not reachable under the null assu} is needed in Section \ref{asymptotic properties section} to show that Assumption \ref{intervention_cvg_assumption} holds. It will usually hold in practice, where the study team intends to improve outcomes compared to standard of care.

\section{Including a Power Goal in LAGO Optimizations}\label{power goal section}
Including a power goal affects the calculation of the stage 2 recommended intervention, $\hat{\boldsymbol{x}}^{(2,n^{(1)})}$, where the exact formula depends on the pre-specified test at the end of the trial. 
As an example, Section~\ref{power goal section} considers a two-sample z-test for the difference between two proportions as the pre-specified test.

The outcome goal and the power goal from equation (\ref{optimization two constraints}) are discussed separately. In LAGO trials with binary outcomes, the outcome goal $\hat{p}_{\boldsymbol{x}}\bigl( \hat{\boldsymbol{\beta}}^{(1)} \bigr) \geq \tilde{p}$ implies that $\hat{\boldsymbol{x}}^{(2,n^{(1)})}$ needs to satisfy
$
    \operatorname{expit}\bigl(\hat{\beta}_0^{(1)}+\bigl(\hat{\boldsymbol{\beta}}_1^{(1)}\bigr)^T \hat{\boldsymbol{x}}^{(2,n^{(1)})}\bigr)
    \geq \tilde{p}.
$

Next, consider the power goal from equation (\ref{optimization two constraints}).
We propose two methods to incorporate the power goal into the LAGO design: the unconditional and conditional power approaches, both based on the stage 1 data. 
The unconditional power approach calculates $\hat{\boldsymbol{x}}^{(2, n^{(1)})}$ such that the test of equation (\ref{test statistic}) would achieve the pre-specified power goal $\Pi$ if the LAGO trial were to implement the same interventions in stage 1 as in reality, followed by the stage 2 recommended intervention, all planned pre-study.
The unconditional power approach thus does not condition on the observed stage 1 data and considers variation in both the stage 1 and stage 2 data. Section \ref{unconditional power approach} presents the unconditional power goal.

The conditional power approach conditions on the stage 1 data and estimates the conditional distribution of the test statistic, and only considers variation in the stage 2 data; this approach is related to conditional power calculation for futility analyses \citep{proschan1995designed}. Section \ref{conditional power approach} presents the conditional power goal.

Regardless of the approach, considering a power goal does not alter the formula of the test statistic at the end of stage 2. This is because the power goal only affects the calculation of the stage 2 recommended intervention, not the formula of the test at the end of the trial.

\subsection{Unconditional Power Based on Previous Stages}\label{unconditional power approach}

The unconditional power approach considers the actual stage 1 interventions and the stage 2 recommended interventions as fixed and calculates the power that this fixed design would have, based on the parameters estimated from the stage 1 data. For this fixed design, standard asymptotic theory on the two-sample z-test for the difference between two proportions implies that $Z^2$ from equation (\ref{test statistic}) approximately follows a non-central $\chi^2$ distribution with 1 degree of freedom and some non-centrality parameter $\lambda$.
\begin{thm}[Unconditional Power Approach]\label{unconditional power theorem}
    \textcolor{white}{xxx}\\
    Let $\chi^2_{\alpha, 1}$ be the upper $\alpha$ quantile of the central $\chi^2$ distribution with 1 degree of freedom. 
    For $\alpha=0.05$, $\chi^2_{\alpha, 1}=3.84$. 
    Let ${\lambda}_{min}$ be the minimum value of the non-centrality parameter for the non-central $\chi^2$ distribution with 1 degree of freedom, so that for a variable $T$ from a non-central $\chi^2$ distribution with non-centrality parameter ${\lambda}_{min}$, $P(T >\chi^2_{\alpha, 1})=\Pi$.
    Let
    \begin{equation*}
        \hat{S}_1^{(2)}\bigl(\hat{\boldsymbol{x}}^{(2,n^{(1)})}, \hat{\boldsymbol{\beta}}^{(1)}\bigr) = n_1^{(2)} \operatorname{expit}\bigl(\hat{\beta}_0^{(1)}+\bigl(\hat{\boldsymbol{\beta}}_1^{(1)}\bigr)^T \hat{\boldsymbol{x}}^{(2,n^{(1)})}\bigr),  
        \;\hat{S}_0^{(2)}\bigl(\hat{\boldsymbol{\beta}}^{(1)} \bigr) = n_0^{(2)} \operatorname{expit}\bigl( \hat{\beta}_0^{(1)}\bigr),
    \end{equation*}
    {\small
    \begin{equation*}
    \begin{aligned}
        &{\lambda}\bigl(\hat{\boldsymbol{x}}^{(2,n^{(1)})}; \hat{\boldsymbol{\beta}}^{(1)}\bigr) = \\ 
        &\left(
        \frac{\left.\Bigl({S_1^{(1)}+\hat{S}_1^{(2)}\bigl(\hat{\boldsymbol{x}}^{(2,n^{(1)})}, \hat{\boldsymbol{\beta}}^{(1)}\bigr) }\Bigr)\right/{N_1} - \left.\Bigl({S_0^{(1)}+\hat{S}_0^{(2)}\bigl(\hat{\boldsymbol{\beta}}^{(1)} \bigr)}\Bigr)\right/{N_0} }{\sqrt{{ \frac{S_1^{(1)}+\hat{S}_1^{(2)}\bigl(\hat{\boldsymbol{x}}^{(2,n^{(1)})}, \hat{\boldsymbol{\beta}}^{(1)}\bigr)}{N_1} \Bigl(1-\frac{S_1^{(1)}+\hat{S}_1^{(2)}\bigl(\hat{\boldsymbol{x}}^{(2,n^{(1)})}, \hat{\boldsymbol{\beta}}^{(1)}\bigr)}{N_1} \Bigr)} \frac{1}{N_1} + { \frac{S_0^{(1)}+\hat{S}_0^{(2)}\bigl(\hat{\boldsymbol{\beta}}^{(1)} \bigr)}{N_0} \Bigl(1-\frac{S_0^{(1)}+\hat{S}_0^{(2)}\bigl(\hat{\boldsymbol{\beta}}^{(1)} \bigr)}{N_0}\Bigr)} \frac{1}{N_0}}}
        \right)^2. 
    \label{hat z}
    \end{aligned}
    \end{equation*}
    }
    Under Assumptions \ref{logistic assumption}, \ref{conditional indep assu}, \ref{intervention_cvg_assumption}, \ref{glm_unif_bound_assu}, \ref{feasible assu}, and \ref{not reachable under the null assu}, the stage 2 recommended intervention $\hat{\boldsymbol{x}}^{(2,n^{(1)})}$
    solves the following optimization problem: 
    \begin{equation}\label{unconditonal power approach optimization problem}
        \text{Min}_{\boldsymbol{x}} C\left(\boldsymbol{x}\right)
        \; \text {subject to} \; 
        \operatorname{expit}\bigl(\hat{\beta}_{0}^{(1)}+(\hat{\boldsymbol{\beta}}^{(1)}_1)^T \hat{\boldsymbol{x}}^{(2,n^{(1)})} \bigr)
        \geq \tilde{p}, \;\text{and}\;\; {\lambda}\bigl(\hat{\boldsymbol{x}}^{(2,n^{(1)})}; \hat{\boldsymbol{\beta}}^{(1)}\bigr) \geq \lambda_{min} .
    \end{equation}
\end{thm}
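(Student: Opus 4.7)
The plan is to reduce the power-goal constraint ${Power}\bigl(\boldsymbol{x}; \hat{\boldsymbol{\beta}}^{(1)}\bigr) \geq \Pi$ from equation~(\ref{optimization two constraints}) to the non-centrality-parameter constraint ${\lambda}\bigl(\boldsymbol{x}; \hat{\boldsymbol{\beta}}^{(1)}\bigr) \geq \lambda_{min}$ and then substitute it back into the optimization problem. Since the outcome-goal constraint carries over unchanged from equation~(\ref{optimization two constraints}) to equation~(\ref{unconditonal power approach optimization problem}), establishing this equivalence suffices.

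First, I would view the unconditional power as the probability of rejecting $H_0: p_1 = p_0$ using the test statistic of equation~(\ref{test statistic}), for a hypothetical trial whose stage 1 actual interventions coincide with the observed $\overline{\boldsymbol{a}}^{(1)}$, whose stage 2 actual interventions equal $\boldsymbol{x}$, and whose Bernoulli outcomes are generated under Assumption~\ref{logistic assumption} at the parameter vector $\hat{\boldsymbol{\beta}}^{(1)}$. Standard asymptotic theory for the unpooled two-sample $z$-test then gives that $Z$ is approximately $N(\mu, 1)$, where
\begin{equation*}
\mu \;=\; \frac{p_1 - p_0}{\sqrt{\dfrac{p_1(1-p_1)}{N_1} + \dfrac{p_0(1-p_0)}{N_0}}},
\end{equation*}
with $p_1$ and $p_0$ the overall-trial success probabilities in the intervention and control groups across both stages. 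Consequently $Z^2$ has an approximate non-central $\chi^2_1(\mu^2)$ distribution.

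Second, I would compute $\mu^2$ at the plug-in parameter $\hat{\boldsymbol{\beta}}^{(1)}$. Because stage 1 outcomes are already realized, the observed counts $S_1^{(1)}$ and $S_0^{(1)}$ enter directly, while the stage 2 contributions are replaced by their model-based expectations $\hat{S}_1^{(2)}\bigl(\boldsymbol{x},\hat{\boldsymbol{\beta}}^{(1)}\bigr)$ and $\hat{S}_0^{(2)}\bigl(\hat{\boldsymbol{\beta}}^{(1)}\bigr)$, obtained from the expit form in Assumption~\ref{logistic assumption}. Matching terms against equation~(\ref{test statistic}) reproduces exactly the formula for ${\lambda}\bigl(\boldsymbol{x};\hat{\boldsymbol{\beta}}^{(1)}\bigr)$ in the theorem. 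Because the tail probability $P\bigl(T > \chi^2_{\alpha,1}\bigr)$ of $T \sim \chi^2_1(\lambda)$ is continuous and strictly increasing in $\lambda \geq 0$, the power constraint $P\bigl(T > \chi^2_{\alpha,1}\bigr) \geq \Pi$ is equivalent to $\lambda \geq \lambda_{min}$, and substituting this equivalent constraint into equation~(\ref{optimization two constraints}) yields equation~(\ref{unconditonal power approach optimization problem}).

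The main obstacle is justifying the plug-in step rigorously. The asymptotic normality of $Z$ under a fixed design should follow from a Lindeberg-Feller CLT applied across centers and individuals, using Assumption~\ref{glm_unif_bound_assu} (compactness of outcomes and parameters) together with Assumption~\ref{conditional indep assu} (learning only through the recommended intervention) to manage the dependence of stage 2 on stage 1. Replacing $\boldsymbol{\beta}^*$ by $\hat{\boldsymbol{\beta}}^{(1)}$ inside $\mu^2$ rests on consistency of the stage 1 maximum-likelihood estimator under Assumption~\ref{logistic assumption} and continuity of $\boldsymbol{\beta} \mapsto \mu^2$. Finally, Assumptions~\ref{intervention_cvg_assumption}, \ref{feasible assu}, and~\ref{not reachable under the null assu} guarantee that the enlarged optimization problem has a well-defined limiting solution, so that the finite-sample non-centrality parameter converges to its population analogue as $n^{(1)} \to \infty$ and the equivalence of the two constraint forms transfers from the limit back to the plug-in version used in the theorem.
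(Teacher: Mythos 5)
Your proposal is correct and follows essentially the same route as the paper's proof: estimate the stage 2 sums in $Z$ by their model-based means under $\hat{\boldsymbol{x}}^{(2,n^{(1)})}$ and $\hat{\boldsymbol{\beta}}^{(1)}$ while using the observed stage 1 sums, note that $Z^2$ is approximately non-central $\chi^2_1$ under the alternative with non-centrality parameter estimated by ${\lambda}\bigl(\hat{\boldsymbol{x}}^{(2,n^{(1)})};\hat{\boldsymbol{\beta}}^{(1)}\bigr)$, and reduce the power constraint to ${\lambda} \geq \lambda_{min}$ via monotonicity of the non-central $\chi^2$ tail probability. The extra justification you sketch (CLT, consistency of $\hat{\boldsymbol{\beta}}^{(1)}$, continuity) goes slightly beyond what the paper writes but does not change the argument.
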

\vspace{-0.5cm}
Appendix Section A provides the proof of Theorem \ref{unconditional power theorem}. The steps for calculating $\hat{\boldsymbol{x}}^{(2,n^{(1)})}$ from equation (\ref{unconditonal power approach optimization problem}) are:
\begin{enumerate}
    \item Given $\alpha$, find the critical value $\chi^2_{\alpha, 1}$ from the (central) $\chi^2$ distribution with 1 degree of freedom. For $\alpha=0.05$, $\chi^2_{\alpha, 1}=3.84$. 
    \item Estimate the minimum value of the non-centrality parameter, ${\lambda}_{min}$ (defined in Theorem~\ref{unconditional power theorem}), needed to achieve the pre-specified power goal $\Pi$, $P(T>\chi^2_{\alpha, 1})\geq \Pi$, under the alternative hypothesis.
    For $\alpha=0.05$ and $\Pi=0.8$, since $\chi^2_{\alpha, 1}=3.84$, ${\lambda}_{min}=7.85$.
    \vspace{-0.2cm}
    \item Solve for the value of $\tilde{p}_{pow\_u}$ so that for $\hat{\boldsymbol{x}}^{(2,n^{(1)})}$ that satisfies $\operatorname{expit}\bigl(\hat{\beta}_{0}^{(1)}+(\hat{\boldsymbol{\beta}}^{(1)}_1)^T \hat{\boldsymbol{x}}^{(2,n^{(1)})} \bigr) \geq \tilde{p}_{pow\_u}$, the constraint ${\lambda}\bigl(\hat{\boldsymbol{x}}^{(2,n^{(1)})}; \hat{\boldsymbol{\beta}}^{(1)}\bigr) \geq \lambda_{min}$ is satisfied.
    \vspace{-0.3cm}
    \item Calculate the maximum achievable success probability based on $\hat{\boldsymbol{\beta}}^{(1)}$, within the pre-specified lower and upper bounds of the $p$'th intervention components $\mathcal{L}_p$ and $\mathcal{U}_p$, denoted as $\hat{p}_{max}(\hat{\boldsymbol{\beta}}^{(1)}) = 
    \hat{\beta}_0^{(1)} + 
    \sum_{p=1}^P
    I\{\hat{\beta}_{1p}^{(1)} > 0\}\hat{\beta}_{1p}^{(1)}\mathcal{U}_p + I\{\hat{\beta}_{1p}^{(1)} < 0\}\hat{\beta}_{1p}^{(1)}\mathcal{L}_p.$
    \item 
    Calculate $\hat{\boldsymbol{x}}^{(2,n^{(1)})}$ using Algorithm \ref{alg:rec_int}, which ensures that $\hat{\boldsymbol{x}}^{(2,n^{(1)})}$ depends continuously on $\hat{\boldsymbol{\beta}}^{(1)}$.
\end{enumerate}
\begin{algorithm}[H]
        \caption{Calculating $\hat{\boldsymbol{x}}^{(2, n^{(1)})}$ based on an outcome goal and a power goal (unconditional power approach)}\label{alg:rec_int}
        \If{$\hat{p}_{max}\bigl(\hat{\boldsymbol{\beta}}^{(1)}\bigl) \geq \max\left( \tilde{p},\; \tilde{p}_{pow\_u} \right)$}{
            $\hat{\boldsymbol{x}}^{(2, n^{(1)})}$ is obtained by solving equation (\ref{optimization}) using $\max\left( \tilde{p},\; \tilde{p}_{pow\_u}\right)$ in place of $\tilde{p}$.
        }
        \ElseIf{$\hat{p}_{max}\bigl(\hat{\boldsymbol{\beta}}^{(1)}\bigl) \geq \tilde{p} $}{
            $\hat{\boldsymbol{x}}^{(2, n^{(1)})}$ is obtained by solving equation (\ref{optimization}) using $\hat{p}_{max}\bigl(\hat{\boldsymbol{\beta}}^{(1)}\bigl) $ in place of $\tilde{p}$.
        }
        \Else{
               $\hat{\boldsymbol{x}}^{(2, n^{(1)})}$ is obtained by using the shrinking method from 
               \citet{nevo2021analysis}, 
            which shrinks $\hat{\boldsymbol{x}}^{(2, n^{(1)})}$ towards the stage 1 recommended intervention using a continuous function of $\hat{\boldsymbol{\beta}}^{(1)}$. Appendix Section B discusses the details of the shrinking method.
        }
    \end{algorithm}

\subsection{Conditional Power Based on Previous Stages}\label{conditional power approach}

The conditional power approach estimates the stage 2 recommended intervention, $\hat{\boldsymbol{x}}^{(2,n^{(1)})}$, so that the test of equation (\ref{test statistic}) would achieve the power goal $\Pi$ conditional on the stage 1 data \citep{proschan1995designed}. 
\begin{thm}[Conditional Power Approach]\label{conditional power goal theorem}
    Let
    \begin{equation*}
    \hat{\Delta}\bigl(\hat{\boldsymbol{x}}^{(2,n^{(1)})}, \hat{\boldsymbol{\beta}}^{(1)}\bigr) =
    \left. \hat{S}_1^{(2)}\bigl(\hat{\boldsymbol{x}}^{(2,n^{(1)})}, \hat{\boldsymbol{\beta}}^{(1)}\bigr) \right/ N_1 - \left. \hat{S}_0^{(2)}\bigl(\hat{\boldsymbol{\beta}}^{(1)} \bigr) \right/ N_0 ,
    \end{equation*}
    where $ \hat{S}_1^{(2)}\bigl(\hat{\boldsymbol{x}}^{(2,n^{(1)})}, \hat{\boldsymbol{\beta}}^{(1)}\bigr)$ and $\hat{S}_0^{(2)}\bigl(\hat{\boldsymbol{\beta}}^{(1)} \bigr)$ are defined in Theorem \ref{unconditional power theorem}. Let
    \begin{equation*}
    \begin{aligned}
    &\hat{\sigma}^2\bigl(\hat{\boldsymbol{x}}^{(2,n^{(1)})}, \hat{\boldsymbol{\beta}}^{(1)}\bigr) = 
    \frac{n_0^{(2)}}{N_0^2} \operatorname{expit}\left( \hat{\beta}_0^{(1)}\right)\left(1-\operatorname{expit}\left( \hat{\beta}_0^{(1)}\right)\right) \\
    & \hspace{1cm} +\frac{n_1^{(2)}}{N_1^2} \operatorname{expit}\bigl(\hat{\beta}_0^{(1)}+\bigl(\hat{\boldsymbol{\beta}}_1^{(1)}\bigr)^T \hat{\boldsymbol{x}}^{(2,n^{(1)})}\bigr)\bigl(1-\operatorname{expit}\bigl(\hat{\beta}_0^{(1)}+\bigl(\hat{\boldsymbol{\beta}}_1^{(1)}\bigr)^T \hat{\boldsymbol{x}}^{(2,n^{(1)})}\bigr)\bigr).
    \end{aligned}
    \end{equation*}
    Under Assumptions \ref{logistic assumption}, \ref{conditional indep assu}, \ref{intervention_cvg_assumption}, \ref{glm_unif_bound_assu}, \ref{feasible assu}, and \ref{not reachable under the null assu}, the stage 2 recommended intervention $\hat{\boldsymbol{x}}^{(2,n^{(1)})}$ 
    solves the following optimization problem: 
    \begin{equation*}
        \text{Min}_{\boldsymbol{x}} C\left(\boldsymbol{x}\right)
        \; \text {subject to} \; 
        \operatorname{expit}\left(\hat{\beta}_{0}^{(1)}+(\hat{\boldsymbol{\beta}}^{(1)}_1)^T \hat{\boldsymbol{x}}^{(2,n^{(1)})} \right)
        \geq \tilde{p}, \;\text{and}
    \end{equation*}
    {\footnotesize
    \begin{equation*}
    \scalebox{0.9}{$
    \begin{aligned}
        & z_{\alpha/2}
        \sqrt{{ \frac{S_1^{(1)}+\hat{S}_1^{(2)}(\hat{\boldsymbol{x}}^{(2,n^{(1)})}, \hat{\boldsymbol{\beta}}^{(1)})}{N_1} \left(1-\frac{S_1^{(1)}+\hat{S}_1^{(2)}(\hat{\boldsymbol{x}}^{(2,n^{(1)})}, \hat{\boldsymbol{\beta}}^{(1)})}{N_1} \right)} \frac{1}{N_1} + { \frac{S_0^{(1)}+\hat{S}_0^{(2)}(\hat{\boldsymbol{\beta}}^{(1)} )}{N_0} \left(1-\frac{S_0^{(1)}+\hat{S}_0^{(2)}(\hat{\boldsymbol{\beta}}^{(1)} )}{N_0}\right)} \frac{1}{N_0}}
        \\
        & \hspace{4cm} - \frac{S_1^{(1)}}{N_1} + \frac{S_0^{(1)}}{N_0} - \hat{\Delta}\bigl(\hat{\boldsymbol{x}}^{(2,n^{(1)})}, \hat{\boldsymbol{\beta}}^{(1)}\bigr)
         - z_{\Pi} \; \hat{\sigma}^2\bigl(\hat{\boldsymbol{x}}^{(2,n^{(1)})}, \hat{\boldsymbol{\beta}}^{(1)}\bigr)
         \leq 0. \\ 
    \end{aligned}
    $}
    \end{equation*}
    }
\end{thm}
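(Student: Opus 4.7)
The plan is to translate the event ``conditional power at least $\Pi$'' into a deterministic inequality in $\hat{\boldsymbol{x}}^{(2,n^{(1)})}$ by computing, under the stage-1-fitted logistic model, the approximate conditional distribution of the final test statistic given the stage 1 data. First I would rewrite the rejection region $|Z|>z_{\alpha/2}$ for the $Z$ in equation (\ref{test statistic}) in the one-sided form $\hat{p}_1-\hat{p}_0 > z_{\alpha/2}\, SE$, since a positive effect is anticipated in LAGO trials. I would then decompose
$$\hat{p}_1-\hat{p}_0 \;=\; \bigl(S_1^{(1)}/N_1 - S_0^{(1)}/N_0\bigr) + \Delta^{(2)}, \qquad \Delta^{(2)} := S_1^{(2)}/N_1 - S_0^{(2)}/N_0,$$
and observe that, conditional on the stage 1 data, the first summand and $\hat{\boldsymbol{x}}^{(2,n^{(1)})}$ are fixed, so only $\Delta^{(2)}$ is random.

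Second, I would compute the approximate conditional distribution of $\Delta^{(2)}$ under Assumption \ref{logistic assumption} with the parameters replaced by $\hat{\boldsymbol{\beta}}^{(1)}$. Given the stage 2 interventions, $S_1^{(2)}$ and $S_0^{(2)}$ are independent sums of Bernoullis with success probabilities $\operatorname{expit}\bigl(\hat{\beta}_0^{(1)}+(\hat{\boldsymbol{\beta}}_1^{(1)})^T\hat{\boldsymbol{x}}^{(2,n^{(1)})}\bigr)$ and $\operatorname{expit}(\hat{\beta}_0^{(1)})$ respectively, so the CLT yields $\Delta^{(2)} \approx \mathcal{N}(\hat{\Delta}, \hat{\sigma}^2)$ with the $\hat{\Delta}$ and $\hat{\sigma}^2$ exactly as defined in the theorem.

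Third, I would impose the conditional power constraint. Writing
$$P\bigl(\Delta^{(2)} > z_{\alpha/2}\, SE_{est} - S_1^{(1)}/N_1 + S_0^{(1)}/N_0 \,\big|\, \text{stage 1 data}\bigr) \;\geq\; \Pi$$
and standardizing via the normal approximation yields
$$\frac{z_{\alpha/2}\, SE_{est} - S_1^{(1)}/N_1 + S_0^{(1)}/N_0 - \hat{\Delta}\bigl(\hat{\boldsymbol{x}}^{(2,n^{(1)})}, \hat{\boldsymbol{\beta}}^{(1)}\bigr)}{\sqrt{\hat{\sigma}^2\bigl(\hat{\boldsymbol{x}}^{(2,n^{(1)})}, \hat{\boldsymbol{\beta}}^{(1)}\bigr)}} \;\leq\; -z_{\Pi},$$
where $SE_{est}$ is the radical displayed in the theorem, obtained by plugging the model-predicted counts $\hat{S}_1^{(2)}(\hat{\boldsymbol{x}}^{(2,n^{(1)})},\hat{\boldsymbol{\beta}}^{(1)})$ and $\hat{S}_0^{(2)}(\hat{\boldsymbol{\beta}}^{(1)})$ into the formula for $SE$. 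Multiplying through by the positive denominator and combining with the outcome-goal constraint inherited from equation (\ref{optimization two constraints}) reproduces the two-constraint optimization problem stated in the theorem.

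The principal obstacle is handling the random $SE$ inside the original rejection region, because conditional on stage 1 it still depends on the stage 2 outcomes $S_1^{(2)}$ and $S_0^{(2)}$. I would address this by a Slutsky-type argument: as $n^{(2)}\to\infty$, the conditional LLN applied to the Bernoulli sums drives $\hat{p}_1$ and $\hat{p}_0$ to their model-implied means, so replacing $SE$ by the plug-in $SE_{est}$ does not change the limiting conditional power. A secondary concern, paralleling Theorem \ref{unconditional power theorem}, is verifying that the optimizer $\hat{\boldsymbol{x}}^{(2,n^{(1)})}$ is a continuous function of $\hat{\boldsymbol{\beta}}^{(1)}$ so that Assumption \ref{intervention_cvg_assumption} holds; I would recast the power inequality as an inflated success-probability threshold $\tilde{p}_{pow\_c}$ and invoke a continuous fallback analogous to Algorithm \ref{alg:rec_int} (shrinking via \citet{nevo2021analysis} when that threshold exceeds $\hat{p}_{\max}(\hat{\boldsymbol{\beta}}^{(1)})$), and then invoke a Berge maximum-theorem argument on the compact domain from Assumption \ref{glm_unif_bound_assu} to deliver the required continuous dependence.
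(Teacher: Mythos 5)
Your proposal is correct and follows essentially the same route as the paper's proof: condition on the stage 1 data so that only $S_1^{(2)}/N_1 - S_0^{(2)}/N_0$ is random, approximate its conditional law by $\mathcal{N}\bigl(\hat{\Delta},\hat{\sigma}^2\bigr)$ via the CLT under the stage-1-fitted logistic model, plug the predicted counts $\hat{S}_1^{(2)}$, $\hat{S}_0^{(2)}$ into the unknown standard error, standardize, and convert the conditional power requirement into the displayed deterministic constraint, with the inflated threshold $\tilde{p}_{pow\_c}$ and the shrinking fallback handling continuity exactly as in the paper's Algorithm 2. The only cosmetic difference is that your standardization naturally yields $z_{\Pi}\,\hat{\sigma}$ where the theorem's display carries $z_{\Pi}\,\hat{\sigma}^2$ (the paper's own proof likewise divides by $\hat{\sigma}^2$ while treating the result as standard normal), so your derivation matches the paper's intended argument.
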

Appendix Section A provides the proof of Theorem \ref{conditional power goal theorem}, which differs mathematically from the proof of Theorem \ref{unconditional power theorem}, and provides detailed steps for calculating $\hat{\boldsymbol{x}}^{(2,n^{(1)})}$ based on an outcome goal and a power goal, using the conditional power approach.

\subsection{Comparing unconditional and conditional approaches}\label{compare two power approaches}

The unconditional power approach does not condition on the observed stage 1 data and takes into account the variance in both $\bigl(S_0^{(1)}, S_1^{(1)}\bigr)$ and $\bigl(S_0^{(2)},S_1^{(2)}\bigr)$.
In contrast, the conditional power approach conditions on the stage 1 data and only accounts for the variance in $\bigl(S_0^{(2)},S_1^{(2)}\bigr)$. 
Due to the larger variance considered in the distribution of the test statistic, the unconditional power approach requires a higher estimated mean of the test statistic to achieve the pre-specified power goal.
Thus, if a positive effect of the intervention on the outcome is expected, the unconditional power approach leads to a higher constraint value on $\operatorname{expit}\bigl(\hat{\beta}_{0}^{(1)}+(\hat{\boldsymbol{\beta}}^{(1)}_1)^T \hat{\boldsymbol{x}}^{(2,n^{(1)})} \bigl)$. In scenarios where the components of $\hat{\boldsymbol{\beta}}^{(1)}$ are positive, the unconditional power approach leads to larger recommended values for $\hat{\boldsymbol{x}}^{(2,n^{(1)})}$.

\section{Asymptotic Properties of \texorpdfstring{$\hat{\boldsymbol{\beta}}^{(2)}$}{beta}}\label{asymptotic properties section}
Appendix Section C shows that the estimator $\hat{\boldsymbol{\beta}}^{(2)}$ from the final analysis, which uses data from both stage 1 and stage 2, is consistent and asymptotically normal. The proof involves showing that Assumption \ref{intervention_cvg_assumption} holds when the stage 2 recommended intervention is obtained by solving an optimization problem based on both an outcome goal and a power goal. The proof considers two cases: case (1) the maximum achievable success probability within the pre-specified bounds of the intervention components, based on the true parameter $\boldsymbol{\beta}^*$, exceeds the outcome goal $\tilde{p}$, and case (2) it is less than $\tilde{p}$. For both cases, we show that Assumption \ref{intervention_cvg_assumption} holds, so that the theoretical framework from \citet{nevo2021analysis} and  \citet{bing2023learnasyougo} applies, thus establishing consistency and asymptotic normality of $\hat{\boldsymbol{\beta}}^{(2)}$.

\section{Level of the Test at the End of the LAGO Trial}\label{level section}

As discussed in Section \ref{hypothesis testing section}, this article considers two ways to test the null hypothesis that the intervention package has no effect on the outcome of interest, regardless of variations in implementation: the 1-df and $P$-df tests. We discuss these two types of tests separately and show why the level of the pre-specified test at the end of the LAGO trial retains the nominal value when incorporating an additional power goal in LAGO optimizations.

For the 1-df tests, such as a z-test or t-test, as previously noted in \citet{nevo2021analysis} and \citet{bing2023learnasyougo}, adapting the intervention package in LAGO trials does not affect the level of the 1-df test that will be carried out at the end of the trial, provided that the test is valid without the LAGO optimizations and the null hypothesis is that the intervention package has no effect on the outcome of interest, regardless of variations in implementation. Under this null hypothesis, regardless of how the intervention is adapted, the distribution of the outcome remains unchanged. Therefore, if the 1-df test is valid without the LAGO optimizations, it is also valid with any LAGO optimizations. 

For the $P$-df tests, specifically a Wald test, asymptotic properties of the estimator for $\boldsymbol{\beta}$ are needed to establish that the Wald test asymptotically has the correct type-1 error rate.
Section \ref{asymptotic properties section} showed that the estimator $\hat{\boldsymbol{\beta}}^{(2)}$ from the final analysis based on data from both stage 1 and stage 2 is consistent and asymptotically normal. Asymptotic normality of the estimator guarantees the asymptotic validity of the $P$-df Wald test.

\section{Simulations}\label{simulation}
Simulation studies were conducted to evaluate the methods presented in this article.
All scenarios considered two-stage LAGO trials with binary outcomes, with 2000 datasets simulated for each scenario. 
Scenario 1 compared LAGO designs with both an outcome goal and a power goal to LAGO designs with only an outcome goal. 
Scenario 2 compared LAGO designs with only a power goal to the non-LAGO fractional factorial designs, focusing specifically on power.

Scenario 1 considered 2 cases, 1a and 1b, and compared two optimization criteria for calculating the stage 2 recommended interventions: based on only an outcome goal, and based on both an outcome goal and a power goal of the two-sample z-test for the difference between two proportions. 
In both cases, the total number of centers in both the intervention group and the control group was $J = 4$ for each stage.
The per-center sample sizes considered were $n_j^{(1)}=n_j^{(2)}=40, 50, 60, \text{and}\; 100$.
Stage 1 of the two-stage LAGO trial used a fractional factorial design, setting the two-component intervention package $x_1$ and $x_2$ at $(0,0)$ in the control group, and $(1,0)$, $(0,4)$, and $(1,4)$ in the intervention group.
The minimum and maximum values of $x_1$ and $x_2$ were $[\mathcal{L}_1, \mathcal{U}_1] = [0,2]$ and $[\mathcal{L}_2, \mathcal{U}_2] = [0,8]$, respectively. 
The true coefficients were $(\beta^*_0, \beta^*_{11}, \beta^*_{12}) = (0.1, 0.3, 0.15)$.
The model for the binary outcomes was the same as described in Assumption \ref{logistic assumption}.
The power goals considered were $\Pi = 0.80, 0.85, 0.90 \;\text{and}\; 0.95$.

In Scenario 1a, a cubic cost function for the intervention package was considered:
$C(\boldsymbol{x})=2x_1^3-1.19x_1^2+10x_1+10 + 0.1x_2^3-0.2x_2^2+2x_2$.
The cubic cost function was chosen for its economy of scale, and 
Appendix Section F.1 provides additional details about the cubic cost function.  
In Scenario 1a, the outcome goal was $\tilde{p} = 0.7$, so that the true optimal intervention based on the outcome goal was $(0.5, 4.0)$ within two decimal places.
In Scenario 1b, a linear cost function for the intervention package was considered: $C(\boldsymbol{x}) = x_1 + 4x_2$.
The outcome goal was $\tilde{p}=0.7455$, so that the true optimal intervention based on the outcome goal was $(3.25, 0)$ within three decimal places.

Tables \ref{simulation 1b1 table 1} and \ref{simulation 1b1 table 2} present selected results for Scenario 1a. Appendix Sections {F.2} and {F.3} provide the full results for Scenario 1. 
Table \ref{simulation 1b1 table 1} reports bias, the ratios between the average estimated standard error and the empirical standard error of $\hat{\beta}_{11}$ and $\hat{\beta}_{12}$, coverage rates of the 95\% confidence intervals of the coefficients, and the power of the two-sample z-test for the difference between two proportions at the end of the LAGO trial. The first entry for each set of per-center sample sizes shows results based on the LAGO design with only an outcome goal.

Table \ref{simulation 1b1 table 1} shows that the LAGO design with an outcome goal and a power goal increased the power. Higher power goals corresponded to higher resulting power at the end of the LAGO trial. Under the LAGO design with both an outcome goal and a power goal, the relative bias of the coefficients, the ratio of the average estimated standard error to the empirical standard error of the coefficients, and the empirical coverage rate of the 95\% confidence intervals were all comparable to those calculated using the LAGO design with only an outcome goal. For any per-center sample size, both $\hat{\beta}_{11}$ and $\hat{\beta}_{12}$ had minimal relative bias, and the empirical coverage rates of the 95\% confidence intervals were close to 95\%.

As expected (Section~\ref{compare two power approaches}), with the same power goal, the unconditional power approach led to higher power compared to the conditional power approach.
With $n_j^{(1)}=n_j^{(2)}=40$, and higher power goals, e.g., when the power goal was 0.95, the power at the end of the LAGO trial did not reach the power goal for either the unconditional or conditional power approach. This was due to higher uncertainty around the coefficient estimates based on the stage 1 data with small sample sizes, and the inherent limitations of what could be achieved given the pre-specified bounds on the intervention components. When neither approach could reach the desired power at the end of the LAGO trial, as expected, the unconditional power approach led to power closer to the desired power goal.
With $n_j^{(1)}=n_j^{(2)}=50$ or $n_j^{(1)}=n_j^{(2)}=60$, and lower power goals, e.g., when the power goal was 0.80, both the unconditional and conditional power approaches reached the desired power. The conditional power approach exceeded the desired power goal less than the unconditional power approach. 
With $n_j^{(1)}=n_j^{(2)}=100$, adding a power goal to the LAGO design had minimal impact since the power at the end of the LAGO trial was already 98.6\% under the LAGO design with only an outcome goal.

Table \ref{simulation 1b1 table 2} reports the relative bias and quantiles of the true mean under the estimated optimal interventions, both calculated based on the stage 1 data and based on all data. The first entry for each set of per-center sample sizes shows results under the LAGO design with only an outcome goal.
As expected, the estimated optimal interventions had smaller bias and narrower quantile ranges of the mean outcomes when calculated based on all data, compared to those calculated based only on the stage 1 data.
With both an outcome goal and a power goal, the relative bias of the estimated optimal intervention components calculated based on all data was comparable to that calculated under the LAGO design with only an outcome goal.
The ranges of the 2.5\% and 97.5\% quantiles for the true mean under the final estimated optimal interventions were narrower compared to those calculated under the LAGO design with only an outcome goal.

Scenario 2 compared the performance of the LAGO design with a power goal to the non-LAGO, fractional factorial design, with a focus on power. Scenario 2 considered both cubic and linear cost functions. Results show that the LAGO design achieved substantially higher power. Higher power goals corresponded to higher actual power at the end of the trial, with the unconditional power approach typically yielding higher power than the conditional power approach. Detailed setup and findings for Scenario 2 are presented in Appendix Sections {F.4} and {F.5}.

\begin{table}[htbp]
\label{simulation 1a results table}
\renewcommand{\arraystretch}{0.5}
\caption{Selected simulation results for scenario 1a}
\hrule
    \centering{
    \subfloat[{Simulation study results for individual package component effects and power with a cubic cost function}{\label{simulation 1b1 table 1}}] {
        \scriptsize 
        \begin{tabular}{llllllllllllll}
         &  &  &  &  & \multicolumn{3}{c}{$\hat{\beta}_{11}$} &  & \multicolumn{3}{c}{$\hat{\beta}_{12}$} &  &  \\ \cline{6-8} \cline{10-12}
        $n_j^{(1)}$ & $n_j^{(2)}$ & \begin{tabular}[c]{@{}l@{}}\%\\ Power\\ Goal\end{tabular} & \begin{tabular}[c]{@{}l@{}}Power\\ Goal\\ Approach\\ (U/C)\end{tabular} &  & \begin{tabular}[c]{@{}l@{}}\%Rel\\ Bias\end{tabular} & \begin{tabular}[c]{@{}l@{}}SE/\\ EMP.SD\\ ($\times 100$)\end{tabular} & CP95 &  & \begin{tabular}[c]{@{}l@{}}\%Rel\\ Bias\end{tabular} & \begin{tabular}[c]{@{}l@{}}SE/\\ EMP.SD\\ ($\times 100$)\end{tabular} & CP95 &  & \begin{tabular}[c]{@{}l@{}}\%\\ Power\end{tabular} \\ \hline
        40 & 40 & — & --- &  & 3.23 & 101.3 & 95.4 &  & 1.20 & 96.4 & 95.0 &  & 68.8* \\
         &  & 80 & C &  & 4.08 & 101.8 & 95.4 &  & 2.63 & 94.8 & 94.9 &  & 81.2 \\
         &  &  & U &  & 4.23 & 102.3 & 95.5 &  & 2.89 & 95.2 & 95.0 &  & 83.2 \\
         &  & 90 & C &  & 4.19 & 102.6 & 95.6 &  & 3.06 & 95.5 & 95.0 &  & 83.9 \\
         &  &  & U &  & 3.68 & 103.2 & 95.6 &  & 3.59 & 96.6 & 95.0 &  & 85.6 \\ \hline
        50 & 50 & — & --- &  & 1.09 & 99.9 & 94.5 &  & 0.94 & 97.8 & 94.9 &  & 79.3* \\
         &  & 80 & C &  & 1.14 & 100.0 & 94.5 &  & 1.54 & 97.0 & 95.1 &  & 87.4 \\
         &  &  & U &  & 1.59 & 100.0 & 94.5 &  & 1.75 & 97.1 & 94.9 &  & 88.7 \\
         &  & 90 & C &  & 1.74 & 100.1 & 94.5 &  & 1.80 & 97.3 & 94.9 &  & 89.5 \\
         &  &  & U &  & 2.36 & 100.3 & 94.6 &  & 2.28 & 98.1 & 95.1 &  & 91.3 \\ \hline
        60 & 60 & — & --- &  & 4.42 & 103.3 & 95.6 &  & 2.47 & 103.6 & 96.1 &  & 87.5* \\
         &  & 80 & C &  & 5.02 & 102.7 & 95.5 &  & 2.44 & 103.6 & 96.0 &  & 92.5 \\
         &  &  & U &  & 5.15 & 103.0 & 95.6 &  & 2.43 & 103.8 & 96.0 &  & 93.2 \\
         &  & 90 & C &  & 5.25 & 102.9 & 95.5 &  & 2.47 & 104.1 & 96.0 &  & 93.7 \\
         &  &  & U &  & 5.19 & 102.7 & 95.2 &  & 2.56 & 104.7 & 96.2 &  & 95.1 \\ \hline
        100 & 100 & --- & --- &  & 5.18 & 99.6 & 94.3 &  & 3.42 & 98.7 & 95.7 &  & 98.2*
        \end{tabular}
    }
    }
    \hrule
    \centering{
    \subfloat[{Simulation study results for the estimated optimal intervention with a cubic cost function}\label{simulation 1b1 table 2}]{
        \scriptsize 
        \begin{tabular}{llllllllllll}
         &  &  &  &  & \multicolumn{3}{c}{Stage 1} &  & \multicolumn{3}{c}{Stage 2 / LAGO Optimized} \\ \cline{6-8} \cline{10-12} 
        $n_j^{(1)}$ & $n_j^{(2)}$ & \begin{tabular}[c]{@{}l@{}}\%\\ Power\\ Goal\end{tabular} & \begin{tabular}[c]{@{}l@{}}Power\\ Goal\\ Approach\\ (U/C)\end{tabular} &  & \begin{tabular}[c]{@{}l@{}}\%Rel\\ Bias\\ $\hat{x}_1^{opt}$\end{tabular} & \begin{tabular}[c]{@{}l@{}}\%Rel\\ Bias\\ $\hat{x}_2^{opt}$\end{tabular} & \begin{tabular}[c]{@{}l@{}}PrOpt\\ (Q2.5,Q97.5)\end{tabular} &  & \begin{tabular}[c]{@{}l@{}}\%Rel\\ Bias\\ $\hat{x}_1^{opt}$\end{tabular} & \begin{tabular}[c]{@{}l@{}}\%Rel\\ Bias\\ $\hat{x}_2^{opt}$\end{tabular} & \begin{tabular}[c]{@{}l@{}}PrOpt\\ (Q2.5,Q97.5)\end{tabular} \\ \hline
        40 & 40 & --- & --- &  & 3.5 & 23.8 & (0.570, 0.779) &  & 12.9 & 13.3 & (0.606, 0.788) \\
         &  & 80 & C &  &  &  &  &  & 11.5 & 14.1 & (0.605, 0.786) \\
         &  &  & U &  &  &  &  &  & 11.1 & 14.2 & (0.605, 0.786) \\
         &  & 90 & C &  &  &  &  &  & 11.0 & 14.3 & (0.605, 0.786) \\
         &  &  & U &  &  &  &  &  & 9.6 & 14.2 & (0.604, 0.786) \\ \hline
        50 & 50 & --- & --- &  & 9.5 & 19.7 & (0.581, 0.767) &  & 8.8 & 11.0 & (0.612, 0.786) \\
         &  & 80 & C &  &  &  &  &  & 7.3 & 11.1 & (0.616, 0.775) \\
         &  &  & U &  &  &  &  &  & 7.1 & 11.3 & (0.615, 0.777) \\
         &  & 90 & C &  &  &  &  &  & 6.9 & 11.4 & (0.615, 0.775) \\
         &  &  & U &  &  &  &  &  & 5.9 & 11.5 & (0.612, 0.775) \\ \hline
        60 & 60 & --- & --- &  & 4.1 & 15.9 & (0.593, 0.772) &  & 7.6 & 10.4 & (0.629, 0.759) \\
         &  & 80 & C &  &  &  &  &  & 6.8 & 11.0 & (0.628, 0.757) \\
         &  &  & U &  &  &  &  &  & 6.4 & 10.8 & (0.628, 0.757) \\
         &  & 90 & C &  &  &  &  &  & 5.9 & 10.7 & (0.629, 0.757) \\
         &  &  & U &  &  &  &  &  & 5.3 & 10.6 & (0.629, 0.757) \\ \hline
        100 & 100 & --- & --- &  & 5.7 & 10.3 & (0.614, 0.761) &  & 6.0 & 7.7 & (0.645, 0.744)
        \end{tabular}
    }
    }
    \hrule
    \vspace{0.1cm}
    \scriptsize{
    {
    \setlength{\baselineskip}{0.5\baselineskip}
    \raggedright{
        2000 simulated datasets.  
        Number of centers for each stage: $J$ = 4. 
        True coefficients: $(\beta^*_0, \beta^*_{11}, \beta^*_{12}) = (0.1, 0.3, 0.15)$. \\
        A fractional factorial design was used in stage 1 with intervention package components set to: (0,0), (1,0), (0,4), and (1,4), and true success probabilities = 0.525, 0.599, 0.668, and 0.731. Outcome goal: $\Tilde{p}$ = 0.7. \\
        Cost function (cubic): $C(\boldsymbol{x})=2x_1^3-1.19x_1^2+10x_1+10 + 0.1x_2^3-0.2x_2^2+2x_2$. 
        Ranges: for $x_1$, $[0,2]$, for $x_2$, $[0,8]$. \\ 
        No unplanned variation in the interventions.\\
        \smallskip 
        $^*$: percent power calculated using the LAGO design with only an outcome goal.\\
        $n_j^{(1)}$: number of participants in each center $j$ at stage 1. $n_j^{(2)}$: number of participants in each center $j$ at stage 2.\\
        \%Power Goal: percent power goal of the two-sample z-test for the difference between two proportions. For scenarios with only an outcome goal, the \%Power Goal is set to ---.  \\
        Power Goal Approach (U/C): U -- the unconditional power approach, C -- the conditional power approach. \\ 
        \%RelBias: percent relative bias $|100(\hat{\beta}-\beta^\star)/\beta^\star|$.
        SE: mean estimated standard error, 
        EMP.SD: empirical standard deviation.\\
        CP95: empirical coverage rate of the 95\% confidence intervals.\\
        \%Power: percent power of the two-sample z-test for the difference between two proportions at the end of the LAGO trial.\\
        \%Rel Bias $\hat{x}_1^{opt}$: relative bias of the first component of the estimated optimal intervention $100(\hat{x}_1^{opt} - {x}_1^{opt})/{x}_1^{opt}$. \\
        \%Rel Bias $\hat{x}_2^{opt}$: relative bias of the second component of the estimated optimal intervention $100(\hat{x}_2^{opt} - {x}_2^{opt})/{x}_2^{opt}$.\\
        PrOpt: true success probability under the recommended intervention, calculated using true coefficients. \\
        Q2.5, Q97.5: 2.5\% and 97.5\% quantiles.\\
    }}
}
\end{table}

\section{Illustrative Example: The BetterBirth Study}\label{application}
The BetterBirth Study evaluated the effectiveness of the World Health Organization (WHO)'s Safe Childbirth Checklist (SCC) for reducing maternal and neonatal morbidity and mortality in Uttar Pradesh, India \citep{hirschhorn2015learning}. The SCC aims to address major causes of childbirth complications by promoting healthcare workers' adherence to essential birth practices (EBP) \citep{molina2022safe}. While the BetterBirth Study showed a significant increase in EBP adherence, it failed to show a significant reduction in its primary composite outcome, which consisted of stillbirth, early neonatal death, maternal death, and self-reported maternal severe complications within 7 days after delivery \citep{semrau2017outcomes}.

The BetterBirth Study involved 157,689 mothers and newborns across 3 stages. Stage 1 was a pilot with 2 centers that developed the intervention package. Stage 2 included 4 centers to refine the intervention package composition. Both stages 1 and 2 collected outcome data before and after the intervention. Stage 3 was a cluster randomized controlled trial with 30 centers, with 15 centers in the intervention arm and 15 centers in the control arm. The intervention package compositions implemented varied between the stages, and the 3 stages were analyzed separately in the actual BetterBirth Study \citep{hirschhorn2015learning}.

Our analysis focused on the binary outcome of neonatal apnea, that is, temporary cessation of breathing in newborns. Observed within one hour after birth, neonatal apnea serves as an indicator of several underlying health issues \citep{martin2014fanaroff}. Neonatal apnea can result from various factors, including asphyxia, hypothermia, stress, infections, and physical injuries \citep{Eichenwald1997Apnea}. The SCC could potentially lower apnea incidence by promoting a cleaner delivery environment \citep{world2012guidelines}, ensuring high-quality intrapartum care through close labor monitoring (Chapter 30 of \citet{devakumar2019oxford}), and mandating immediate drying and wrapping of newborns \citep{world1997thermal}.

The observed proportions of neonatal apnea, based on data from all 3 stages, were 14.8\% and 12.3\% for the control and intervention arms, respectively. 
The LAGO design with an outcome goal and a power goal was retrospectively applied to the BetterBirth Study to demonstrate how it could have optimized the intervention package component composition, and might have led to a statistically significant reduction in neonatal apnea in the BetterBirth Study.

Based only on the stage 3 data, a two-sample z-test for the difference between two proportions (equation ({\ref{test statistic}})) yielded a non-significant p-value of 0.154. We have shown that data from all stages could have been used to conduct the test in the final analysis, provided that the null hypothesis is that the intervention package has no effect on the outcome of interest, regardless of variations in implementation (Section~{\ref{hypothesis testing section}}). Based on data from all stages, a two-sample z-test for the difference between two proportions (equation ({\ref{test statistic}})) yielded a marginally significant p-value of 0.064.
Our analysis focused on two intervention components: the number of coaching visits after the initial launch (coaching visits), with a minimum of 1 and a maximum of 40 visits considered, and the number of days of the initial intervention launch event (launch duration), with a minimum of 1 and a maximum of 5 days considered (as in \citet{nevo2021analysis}). 
The binary outcome, neonatal apnea, was modeled using a logistic regression model (Assumption \ref{logistic assumption}). 
Table \ref{estimated effects apnea} shows the estimated effects of the intervention components after each stage, using all data available up to that stage. Based on the results from the second and third columns of Table \ref{estimated effects apnea}, coaching visits were associated with a reduction in the odds of neonatal apnea, though not significantly so.

To illustrate how the LAGO design could have modified the intervention package components and achieved a statistically significant reduction in the rate of neonatal apnea, our analysis calculated the stage 3 recommended interventions. Simulations were then used to estimate the effect of implementing these recommended interventions in stage 3 of the BetterBirth Study, and the statistical power that would have been achieved at the end of the BetterBirth Study had it implemented LAGO optimization after its second stage.
In stages 1 and 2 of the BetterBirth Study, the number of coaching visits varied among patients, and the launch duration was 3 days in stage 1 and 2 days in stage 2.
Considering the small sample size in stage 1 ($n^{(1)}=72$), we combined stages 1 and 2 as LAGO-stage 1 and calculated recommended interventions for stage 3 (LAGO-stage 2). 
The stage 3 recommended interventions were calculated 
considering: (1) an outcome goal, and (2) an outcome goal and a power goal. 
Let $x_1$ and $x_2$ be the two intervention package components mentioned above: coaching visits and launch duration (in days), respectively. A cubic cost function $C_1(\boldsymbol{x}) = 380x_1 - 24x_1^2 + 0.6x_1^3 + 1700x_2 - 950x_2^2 + 220x_2^3$ was considered as in \citet{bing2023learnasyougo}.
This LAGO optimization aimed to reduce the neonatal apnea rate. Since the overall rate of neonatal apnea in the combined stages 1 and 2 data was 0.14, the outcome goals considered were $\tilde{p}=0.1$ and $0.075$. The power goals considered were $\Pi=0.80$ and $0.90$, and both the unconditional and conditional power approaches were considered.
For each of the stage 3 recommended interventions, the procedure was repeated 2000 times to obtain its corresponding estimated power. The coefficient estimates from the third column of Table \ref{estimated effects apnea} were considered as the true values and used to simulate the data. 
The simulated stage 3 data were then combined with the actual data from LAGO-stage 1. A two-sample z-test for the difference between two proportions was performed on the final combined data from all 3 stages. The proportion of iterations resulting in a statistically significant test result provided the estimated power.

Table \ref{apnea recommended interventions table} presents the stage 3 recommended interventions calculated using different LAGO designs, along with their corresponding estimated power at the end of the BetterBirth Study.  
Since the coefficient of launch duration from the second column of Table \ref{estimated effects apnea} is greater than 1, and the goal was to reduce the neonatal apnea rate, as expected, all stage 3 recommended interventions minimized the number of days in launch duration by setting it to its lower bound.
For the stage 3 recommended interventions calculated using the LAGO design with an outcome goal, as expected, since the goal was to reduce the neonatal apnea rate, lower outcome goals corresponded to greater numbers of coaching visits. 
With $\tilde{p}=0.1$, the stage 3 recommended interventions calculated using the LAGO design with both an outcome goal and a power goal had greater numbers of coaching visits compared to those calculated using the LAGO design with only an outcome goal. 
As expected, higher power goals corresponded to greater numbers of coaching visits, which corresponded to higher estimated power at the end of the BetterBirth Study. 
Regardless of the power goal approach, all the estimated power values reached the pre-specified power goals (Table \ref{apnea recommended interventions table}). 
Regardless of the value of the outcome goal, as expected, when the power goal was the determining factor, the unconditional power approach led to a greater number of coaching visits compared to those recommended using the conditional power approach, resulting in higher estimated power at the end of the BetterBirth Study, at increased cost.

Simulation results from Section \ref{simulation} showed that when the sample sizes are sufficiently large, the conditional power approach exceeded the pre-specified power goal less than the unconditional power approach. 
Given the large sample size in the BetterBirth Study, if the LAGO design with both an outcome goal and a power goal were to be used in practice, the conditional power approach should be considered. 
For calculating the stage 3 recommended interventions in practice, a reasonable approach could involve setting the outcome goal to 0.1 (aiming for an approximately 4\% decrease in apnea rate) and the power goal to 0.8. Based on Table \ref{apnea recommended interventions table}, this would result in a recommended intervention of 27 coaching visits, rounded up from 26.65, and 1 launch day, with an estimated power of 88.6\% and a cost of \$5544 per center.

\begin{table}[t]
    \caption{
    Package component effect estimates for neonatal apnea occurrences, and estimated recommended interventions using different LAGO designs.}
    % \addtocounter{table}{-1}
    \renewcommand{\arraystretch}{0.5}
    \centering
    \subfloat[Package component effect estimates and confidence intervals, calculated after each stage.]{
        \scriptsize{
        \begin{tabular}{llll}
         & \begin{tabular}[c]{@{}l@{}}Stage 1\\ $n^{(1)}$ = 72\\ OR (CI-OR)\end{tabular} & \begin{tabular}[c]{@{}l@{}}Stages 1-2\\ $n^{(1)}+n^{(2)}$ = 1779 \\ OR (CI-OR)\end{tabular} & \begin{tabular}[c]{@{}l@{}}Stages 1-3 (All Data)\\ $n^{(1)}+n^{(2)}+n^{(3)}$ = 2628 \\ OR (CI-OR)\end{tabular} \\
        Intercept & 0.190 (0.056, 0.500) & 0.160 (0.127, 0.200) & 0.174 (0.150, 0.200) \\
        Coaching Visits (per 5 visits) & 4.561 (0.719, 44.858) & 0.888 (0.754, 1.069) & 0.881 (0.749, 1.058) \\
        Launch Duration (days) & 0.681 (0.278, 1.396) & 1.144 (0.833, 1.508) & 1.116 (0.816, 1.462)
        \end{tabular}\label{estimated effects apnea}
        }
    }
    \vspace{0.4cm}
    \hrule
    \centering
    \subfloat[Estimated stage 3 recommended interventions and their estimated power based on different LAGO designs.]{
        \scriptsize{ 
            \begin{tabular}{lllllll}
            \multirow{2}{*}{\begin{tabular}[c]{@{}l@{}}LAGO \\ Design\end{tabular}} & \multirow{2}{*}{\begin{tabular}[c]{@{}l@{}}\%\\ Power\\ Goal\end{tabular}} & \multirow{2}{*}{\begin{tabular}[c]{@{}l@{}}Power\\ Goal\\ Approach\\ (U/C)\end{tabular}} & \multicolumn{2}{l}{Outcome Goal = 0.1} & \multicolumn{2}{l}{Outcome Goal = 0.075} \\ \cline{4-7} 
             &  &  & $\boldsymbol{x}^{(3,n^{(1)},n^{(2)})}$ & \begin{tabular}[c]{@{}l@{}}Sim\\ \%\\ Power\end{tabular} & $\boldsymbol{x}^{(3,n^{(1)},n^{(2)})}$ & \begin{tabular}[c]{@{}l@{}}Sim\\ \%\\ Power\end{tabular} \\ \hline
            \begin{tabular}[c]{@{}l@{}}Outcome \\ Goal\end{tabular} & --- & --- & (21.18, 1.00) & 72.6 & (34.49, 1.00) & 97.8 \\ \hline
            \multirow{2}{*}{\begin{tabular}[c]{@{}l@{}}Outcome \\ Goal +\\ Power \\ Goal\end{tabular}} & 80 & C & (26.65, 1.00) & 88.6 & (34.49, 1.00) & 97.8 \\
             &  & U & (29.79, 1.00) & 94.0 & (34.49, 1.00) & 97.8 \\
             & 90 & C & (30.81, 1.00) & 95.0 & (34.49, 1.00) & 97.8 \\
             &  & U & (36.42, 1.00) & 98.5 & (36.42, 1.00) & 98.5
            \end{tabular}
            \label{apnea recommended interventions table}
        }
    }
    \vspace{0.1cm}
    \scriptsize \\
    {
    \setlength{\baselineskip}{0.5\baselineskip}
        \raggedright{
            OR, estimated odds ratio exp($\hat{\beta}$). CI-OR, 95\% confidence interval for the odds ratio. \\
            LAGO Design: Outcome Goal -- the LAGO design with only an outcome goal, Outcome Goal + Power Goal -- the LAGO design with both an outcome goal and a power goal. \\
            \%Power Goal: percent power goal of the two-sample z-test for the difference between two proportions.  \\
            Power Goal Approach (U/C): U -- the unconditional power approach, C -- the conditional power approach. \\ 
            ${\boldsymbol{x}}^{(3,(n^{(1)},n^{(2)}))}$: stage 3 recommended interventions. The first component is the number of coaching visits and the second component is the launch duration in days. The second component in all of the stage 3 recommended interventions is always 1 because the coefficient estimate for launch duration calculated at the end of stage 2 was not negatively associated with neonatal apnea. \\
            \begin{minipage}{\textwidth}
            \setlength{\baselineskip}{0.5\baselineskip}
            Sim \% Power: Simulated percent power at the end of the BetterBirth Study calculated using the corresponding stage 3 recommended interventions.
            \end{minipage}
        }}
        \label{BB table}
\end{table}

\vspace{-0.75cm}
\section{Discussion}\label{Discussion}
\vspace{-0.25cm}
The LAGO design incorporating both an outcome goal and a power goal provides investigators with a useful tool to prevent trial failure. Including a power goal improves LAGO's ability to detect treatment effects and further enhances the flexibility of the LAGO design.

Adding a power goal to the LAGO design increases the power at the end of the LAGO trial if the original LAGO trial was underpowered (Section~\ref{simulation}). The bias and standard errors of the coefficient estimates from the outcome model are similar to those calculated using the LAGO design with only an outcome goal. In some cases, the LAGO design with both an outcome goal and a power goal improves estimation of the optimal intervention calculated based on an outcome goal at the end of the LAGO trial.

Including a power goal in addition to an outcome goal in LAGO optimizations leads to larger increases in the recommended intervention package component levels, which can help prevent failed trials in real-world studies (Section~\ref{application}). If applied to the BetterBirth Study, a large and expensive trial that failed to find a significant effect on the primary outcome \citep{hirschhorn2015learning}, the LAGO design with both an outcome goal and a power goal would have led to a greater number of coaching visits and might have prevented trial failure. 

This article discusses two approaches for including a power goal in LAGO optimizations: the unconditional power approach (Section~\ref{unconditional power approach}) and the conditional power approach (Section~\ref{conditional power approach}). The unconditional power approach does not condition on the stage 1 data, while the conditional power approach does. 
When choosing between the unconditional and conditional power approaches, our findings suggest that for small sample sizes and high power goals, the unconditional power approach is preferable. For large sample sizes and low power goals, the conditional power approach is preferable. As argued in Section~\ref{simulation}, the unconditional power approach leads to higher power at increased cost. Therefore, when investigators are unsure whether their sample sizes are adequate, the unconditional power approach should be used to be conservative.  
In addition, it could be expected that for LAGO optimizations in later stages of the LAGO trial, the differences between the unconditional and conditional power approaches will be larger.

The LAGO design with both an outcome goal and a power goal also facilitates the inclusion of futility stops in LAGO trials. Notably, stopping the trial for futility preserves the type I error \citep{snapinn2006assessment}.
At the end of each stage $k$, to incorporate a power goal in LAGO optimizations, we estimate the power of the pre-specified test at the end of the trial if stage $k+1$ onwards were to implement the calculated recommended intervention. If none of the intervention package compositions being considered for stage $k+1$ onwards leads to satisfactory estimated power, investigators may choose to stop the trial for futility.

Appendix Section H provides a detailed examination of the relationship between the outcome goal and the power goal.
The power goal is particularly useful with small sample sizes and when the anticipated pre-specified outcome goal is close to the baseline value without interventions, that is when the intervention effect is small under the outcome goal.
In settings with large sample sizes, as discussed in Section \ref{asymptotic properties section}, the outcome goal alone will be sufficient to achieve satisfactory asymptotic power at the end, provided that the outcome goal cannot be reached in the control group. 
However, since trials operate with finite sample sizes, the power goal serves as an additional safeguard to help prevent trial failure.

Section \ref{asymptotic properties section} showed that Assumption \ref{intervention_cvg_assumption} holds when the stage 2 recommended intervention is calculated based on both an outcome goal and a power goal. The theory from \citet{nevo2021analysis} and  \citet{bing2023learnasyougo} applies, and it follows that the estimator $\hat{\boldsymbol{\beta}}^{(2)}$ from the final analysis, which uses data from all stages, is consistent and asymptotically normal. For LAGO trials with more than two stages, Appendix Section G provides the details, and the proof from Section \ref{asymptotic properties section} also applies.

As discussed in Section \ref{level section}, incorporating an additional power goal directly into the LAGO design preserves the asymptotic level of the statistical test at the end of the LAGO trial, as long as the null hypothesis states that the intervention package has no effect on the outcome of interest, regardless of variations in implementation. 
For 1-df tests (e.g., z-test or t-test), under the null hypothesis, the distribution of the outcome remains unchanged regardless of how the intervention was adapted.  
For $P$-df tests (specifically the Wald test), asymptotic normality of the estimator of the treatment effect in the final analysis based on combined data from all stages guarantees the asymptotic validity of the tests.

Implementing the LAGO design requires pre-trial collaboration between investigators and statisticians. The collaboration phase determines which aspects of the data will be used for calculating LAGO recommended interventions, which intervention package components to optimize, the lower and upper bounds of the selected intervention components, an appropriate outcome goal and/or power goal, and the cost function for the intervention package.
In addition, investigators need to determine whether to change the intervention component composition if the pre-specified outcome goal is already reached with the previous implementation of the intervention package during LAGO optimizations.

Future research will further extend the LAGO design, accommodating repeated outcomes and cluster/center effects. One extension is to address confounding by indication in the LAGO design, where center characteristics may affect how centers implement the recommended intervention. Another area for future research is the development of open-source software to calculate recommended and optimal interventions based on various optimization criteria, with the goal of making LAGO designs more accessible.

The LAGO design incorporating both an outcome goal and a power goal directly takes statistical power into account, helps prevent failed trials, and could have led to significant findings in the BetterBirth Study (Section~\ref{application}), while maintaining consistency and asymptotic normality of the estimator of the treatment effect in the final analysis. 
Current studies implementing the LAGO design include PULESA-Uganda \citep{PULESA}, TASKPEN-Zambia \citep{herce2024evaluating}, MAP-IT-Nigeria \citep{aifah2023study}, and HPTN 096 \citep{HPTN096}.
The LAGO design is a valuable methodological tool for conducting trials in various areas of implementation research.

\vspace{-0.5cm}
\bibliography{bibliography.bib}

\bigskip 
\begin{center}
{\Large\bf Appendix}
\end{center}
\appendix
The Appendix is organized as follows.
Section \ref{proof of theorems 1 and 2} provides the proofs of Theorems 1 and 2 from the main text.
Section \ref{shrinking method section} describes the details of the shrinking method used in Algorithm 1 from Section 3.1 of the main text.
Section \ref{asymp properties section} provides the proof of the asymptotic properties of the estimator $\hat{\boldsymbol{\beta}}^{(2)}$.
Section \ref{additional tests} presents details of LAGO trials with power goals for other common tests that are not discussed in detail in the main text.
Section \ref{both lago stages} shows the calculation of the stage 1 recommended interventions using pre-trial information.
Section \ref{additional simulation tables} presents additional simulation results.
Section \ref{K>2} extends the proposed LAGO methods to LAGO trials with more than two stages.
Section \ref{power goal and outcome goal appendix} examines the relationship between the outcome goal and the power goal.

\setcounter{thm}{0}
\section{Proofs of Theorems 1 and 2}\label{proof of theorems 1 and 2}
\subsection{Proof of Theorem 1}
For convenience, we restate Theorem 1 from the main text.
\begin{thm}[Unconditional Power Approach]\label{unconditional power theorem appendix}
    \textcolor{white}{xxx}\\
    Let $\chi^2_{\alpha, 1}$ be the upper $\alpha$ quantile of the central $\chi^2$ distribution with 1 degree of freedom. 
    For $\alpha=0.05$, $\chi^2_{\alpha, 1}=3.84$. 
    Let ${\lambda}_{min}$ be the minimum value of the non-centrality parameter for the non-central $\chi^2$ distribution with 1 degree of freedom, so that for a variable $T$ from a non-central $\chi^2$ distribution with non-centrality parameter ${\lambda}_{min}$, the probability of $T$ exceeding $\chi^2_{\alpha, 1}$ equals $\Pi$. 
    Let
    \begin{equation}
        \hat{S}_1^{(2)}\bigl(\hat{\boldsymbol{x}}^{(2,n^{(1)})}, \hat{\boldsymbol{\beta}}^{(1)}\bigr) = n_1^{(2)} \operatorname{expit}\bigl(\hat{\beta}_0^{(1)}+\bigl(\hat{\boldsymbol{\beta}}_1^{(1)}\bigr)^T \hat{\boldsymbol{x}}^{(2,n^{(1)})}\bigr),  
        \label{mu_1_2}
    \end{equation}
    \begin{equation}
        \hat{S}_0^{(2)}\bigl(\hat{\boldsymbol{\beta}}^{(1)} \bigr) = n_0^{(2)} \operatorname{expit}\bigl( \hat{\beta}_0^{(1)}\bigr),
        \label{mu_0_2}
    \end{equation}
    {\small
    \begin{equation}
    \begin{aligned}
        &{\lambda}\left(\hat{\boldsymbol{x}}^{(2,n^{(1)})}; \hat{\boldsymbol{\beta}}^{(1)}\right) = \\ 
        &\left(
        \frac{\left.\bigl({S_1^{(1)}+\hat{S}_1^{(2)}\bigl(\hat{\boldsymbol{x}}^{(2,n^{(1)})}, \hat{\boldsymbol{\beta}}^{(1)}\bigr) }\bigr)\right/{N_1} - \left.\bigl({S_0^{(1)}+\hat{S}_0^{(2)}\bigl(\hat{\boldsymbol{\beta}}^{(1)} \bigr)}\bigr)\right/{N_0} }{\sqrt{{ \frac{S_1^{(1)}+\hat{S}_1^{(2)}\bigl(\hat{\boldsymbol{x}}^{(2,n^{(1)})}, \hat{\boldsymbol{\beta}}^{(1)}\bigr)}{N_1} \Bigl(1-\frac{S_1^{(1)}+\hat{S}_1^{(2)}\bigl(\hat{\boldsymbol{x}}^{(2,n^{(1)})}, \hat{\boldsymbol{\beta}}^{(1)}\bigr)}{N_1} \Bigr)} \frac{1}{N_1} + { \frac{S_0^{(1)}+\hat{S}_0^{(2)}\bigl(\hat{\boldsymbol{\beta}}^{(1)} \bigr)}{N_0} \Bigl(1-\frac{S_0^{(1)}+\hat{S}_0^{(2)}\bigl(\hat{\boldsymbol{\beta}}^{(1)} \bigr)}{N_0}\Bigr)} \frac{1}{N_0}}}
        \right)^2. 
    \label{theorem 1 lambda}
    \end{aligned}
    \end{equation}
    }
    Under Assumptions 1, 3, 4, 5, 7, 8 from the main text, the stage 2 recommended intervention $\hat{\boldsymbol{x}}^{(2,n^{(1)})}$, subject to both an outcome goal and an unconditional power goal, solves the following optimization problem: 
    \begin{equation*}
        \text{Min}_{\boldsymbol{x}} C\left(\boldsymbol{x}\right)
        \; \text {subject to} \; 
        \operatorname{expit}\bigl(\hat{\beta}_{0}^{(1)}+(\hat{\boldsymbol{\beta}}^{(1)}_1)^T \hat{\boldsymbol{x}}^{(2,n^{(1)})} \bigr)
        \geq \tilde{p}, \;\text{and}\;\; {\lambda}\bigl(\hat{\boldsymbol{x}}^{(2,n^{(1)})}; \hat{\boldsymbol{\beta}}^{(1)}\bigr) \geq \lambda_{min} .
    \end{equation*}
\end{thm}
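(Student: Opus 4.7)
The plan is to translate the unconditional power goal $\mathrm{Power}(\boldsymbol{x};\hat{\boldsymbol{\beta}}^{(1)}) \geq \Pi$ into the non-centrality constraint $\lambda(\boldsymbol{x};\hat{\boldsymbol{\beta}}^{(1)}) \geq \lambda_{min}$ stated in the theorem, and then adjoin the outcome-goal constraint. Under the unconditional approach, both the actual stage 1 interventions and the stage 2 recommended intervention are treated as a fixed pre-study design, so ``power'' refers to the standard frequentist power of the two-sample $z$-test for proportions with unpooled variance defined in equation (\ref{test statistic}). The first step is to invoke the standard asymptotic theory for this test under the alternative: the numerator $\hat{p}_1 - \hat{p}_0$ is asymptotically normal, so that $Z$ is approximately $\mathcal{N}(\mu_Z,1)$ and hence $Z^2$ is approximately non-central chi-squared with $1$ degree of freedom and non-centrality parameter $\lambda = \mu_Z^2$.

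Next I would compute $\mu_Z^2$ in closed form using the logistic model (Assumption 1). The true mean of the numerator decomposes across the two stages and two arms as $E[S_1^{(1)}+S_1^{(2)}]/N_1 - E[S_0^{(1)}+S_0^{(2)}]/N_0$, with per-center contributions of the form $n_{aj}^{(k)}\operatorname{expit}(\beta_0^{*}+(\boldsymbol{\beta}_1^{*})^{T}\boldsymbol{a}_j^{(k)})$; the denominator of $Z$, being a plug-in binomial standard error, converges in probability to the true standard deviation of the numerator. Forming $\mu_Z^2$ and substituting the unknown $\boldsymbol{\beta}^{*}$ by $\hat{\boldsymbol{\beta}}^{(1)}$ yields equation (\ref{theorem 1 lambda}); for stage 2, I plug in $\hat{S}_1^{(2)}$ and $\hat{S}_0^{(2)}$ from equations (\ref{mu_1_2}) and (\ref{mu_0_2}) because those outcomes have not yet been observed, while for stage 1 the observed sums $S_1^{(1)}$ and $S_0^{(1)}$ are used directly (asymptotically equivalent to their model-based counterparts via the MLE score equations and the consistency of $\hat{\boldsymbol{\beta}}^{(1)}$).

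Finally I would appeal to the strict monotonicity of the non-central chi-squared survival function in its non-centrality parameter: for $T \sim \chi^2_1(\lambda)$, $P(T > \chi^2_{\alpha,1})$ is strictly increasing in $\lambda$, so the power goal $\mathrm{Power} \geq \Pi$ is equivalent to $\lambda \geq \lambda_{min}$ with $\lambda_{min}$ as defined in the theorem. Combining this with the outcome constraint $\operatorname{expit}(\hat{\beta}_0^{(1)}+(\hat{\boldsymbol{\beta}}_1^{(1)})^{T}\hat{\boldsymbol{x}}^{(2,n^{(1)})}) \geq \tilde{p}$ produces the two-constraint cost-minimization problem in the theorem. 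The hard part will be the variance computation for the unpooled $z$-statistic: one must carefully combine stage 1 and stage 2 contributions to both the numerator mean and the binomial variance estimator, and argue that the chi-squared approximation is accurate uniformly over the feasible $\boldsymbol{x}$-region. Assumptions 3--5, 7, and 8 enter precisely to ensure a consistent $\hat{\boldsymbol{\beta}}^{(1)}$ on a compact parameter space and a well-defined, continuously-dependent feasible region along which the approximation can be transferred from $\boldsymbol{\beta}^{*}$ to $\hat{\boldsymbol{\beta}}^{(1)}$ by Slutsky and the continuous mapping theorem.
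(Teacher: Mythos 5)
Your proposal follows essentially the same route as the paper's proof: treat the stage 1 actual interventions and the stage 2 recommended intervention as a fixed design, approximate $Z^2$ under the alternative by a non-central $\chi^2_1$ distribution, estimate the non-centrality parameter by plugging in the observed stage 1 sums together with $\hat{S}_1^{(2)}$ and $\hat{S}_0^{(2)}$ evaluated at $\hat{\boldsymbol{x}}^{(2,n^{(1)})}$ and $\hat{\boldsymbol{\beta}}^{(1)}$, and convert the power goal to $\lambda \geq \lambda_{min}$ via monotonicity of the power in the non-centrality parameter before adjoining the outcome-goal constraint. Your additional remarks on the monotonicity of the non-central $\chi^2$ survival function and on consistency of $\hat{\boldsymbol{\beta}}^{(1)}$ only make explicit what the paper leaves implicit, so the argument matches.
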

\begin{proof}
\textcolor{white}{xxx}\\
At the end of stage 1, not all data are available to calculate the test statistic $Z$ (equation (5) from the main text). For the stage 1-related terms in $Z$, $S_1^{(1)}$ and $S_0^{(1)}$, their means are estimated using the observed sums of the outcomes in the intervention and control groups from stage 1. For the stage 2-related terms in $Z$, $S_1^{(2)}$ and $S_0^{(2)}$, their means are estimated under $\hat{\boldsymbol{x}}^{(2, n^{(1)})}$ and $\hat{\boldsymbol{\beta}}^{(1)}$. Specifically, $\hat{S}_1^{(2)}\left(\hat{\boldsymbol{x}}^{(2,n^{(1)})}, \hat{\boldsymbol{\beta}}^{(1)}\right)$ from equation (\ref{mu_1_2}) and $\hat{S}_0^{(2)}\left(\hat{\boldsymbol{\beta}}^{(1)} \right)$ from equation (\ref{mu_0_2}) are used to estimate the means of ${S}_1^{(2)}$ and ${S}_0^{(2)}$, respectively.

Under the alternative hypothesis, $Z^2$ approximately follows a non-central $\chi^2$ distribution with 1 degree of freedom. 
The non-centrality parameter $\lambda$ can be estimated under $\hat{\boldsymbol{x}}^{(2, n^{(1)})}$ and $\hat{\boldsymbol{\beta}}^{(1)}$, by ${\lambda}\left(\hat{\boldsymbol{x}}^{(2,n^{(1)})}; \hat{\boldsymbol{\beta}}^{(1)}\right)$ from equation (\ref{theorem 1 lambda}).
To satisfy the unconditional power goal, the condition $P(Z^2>\chi^2_{\alpha, 1})\geq \Pi$ under the alternative hypothesis is ensured by achieving ${\lambda}\left(\hat{\boldsymbol{x}}^{(2,n^{(1)})}; \hat{\boldsymbol{\beta}}^{(1)}\right) \geq {\lambda}_{min}$. Combining equation (\ref{theorem 1 lambda}) with $
    \operatorname{expit}\bigl(\hat{\beta}_0^{(1)}+\bigl(\hat{\boldsymbol{\beta}}_1^{(1)}\bigr)^T \hat{\boldsymbol{x}}^{(2,n^{(1)})}\bigr)
    \geq \tilde{p}.
$ implies Theorem~\ref{unconditional power theorem appendix}.
\end{proof}

\subsection{Proof of Theorem 2}\label{proof of theorem 2 section}
For convenience, we restate Theorem 2 from the main text.

Recall that $z_{\Pi}$ is the critical value corresponding to the upper $\Pi$ tail probability of a standard normal distribution.
\begin{thm}[Conditional Power Approach]\label{conditional power goal theorem appendix}
    \begin{equation}
    \text{Let}\;\;
    \Delta\bigl(\hat{\boldsymbol{x}}^{(2,n^{(1)})}, \hat{\boldsymbol{\beta}}^{(1)}\bigr) =
    \left. \hat{S}_1^{(2)}\bigl(\hat{\boldsymbol{x}}^{(2,n^{(1)})}, \hat{\boldsymbol{\beta}}^{(1)}\bigr) \right/ N_1 - \left. \hat{S}_0^{(2)}\bigl(\hat{\boldsymbol{\beta}}^{(1)} \bigr) \right/ N_0 ,
    \label{long mu theorem 2}
    \end{equation}
    where $ \hat{S}_1^{(2)}\bigl(\hat{\boldsymbol{x}}^{(2,n^{(1)})}, \hat{\boldsymbol{\beta}}^{(1)}\bigr)$ and $\hat{S}_0^{(2)}\bigl(\hat{\boldsymbol{\beta}}^{(1)} \bigr)$ are defined in Theorem \ref{unconditional power theorem appendix}.
    \begin{equation}
    \begin{aligned}
    \text{Let}\;\;
    &\hat{\sigma}^2\bigl(\hat{\boldsymbol{x}}^{(2,n^{(1)})}, \hat{\boldsymbol{\beta}}^{(1)}\bigr) = 
    \frac{n_0^{(2)}}{N_0^2} \operatorname{expit}\left( \hat{\beta}_0^{(1)}\right)\left(1-\operatorname{expit}\left( \hat{\beta}_0^{(1)}\right)\right) \\
    & \hspace{1cm} +\frac{n_1^{(2)}}{N_1^2} \operatorname{expit}\bigl(\hat{\beta}_0^{(1)}+\bigl(\hat{\boldsymbol{\beta}}_1^{(1)}\bigr)^T \hat{\boldsymbol{x}}^{(2,n^{(1)})}\bigr)\bigl(1-\operatorname{expit}\bigl(\hat{\beta}_0^{(1)}+\bigl(\hat{\boldsymbol{\beta}}_1^{(1)}\bigr)^T \hat{\boldsymbol{x}}^{(2,n^{(1)})}\bigr)\bigr).
    \end{aligned}
    \label{long sigma theorem 2}
    \end{equation}
    Under Assumptions 1, 3, 4, 5, 7, 8 from the main text, the stage 2 recommended intervention $\hat{\boldsymbol{x}}^{(2,n^{(1)})}$, subject to both an outcome goal and the conditional power goal, solves the following optimization problem: 
    \begin{equation*}
        \text{Min}_{\boldsymbol{x}} C\left(\boldsymbol{x}\right)
        \; \text {subject to} \; 
        \operatorname{expit}\left(\hat{\beta}_{0}^{(1)}+(\hat{\boldsymbol{\beta}}^{(1)}_1)^T \hat{\boldsymbol{x}}^{(2,n^{(1)})} \right)
        \geq \tilde{p}, \;\text{and}
    \end{equation*}
    {\footnotesize
    \begin{equation}\label{power goal z test for proportion theorem 2}
    \scalebox{0.9}{$
    \begin{aligned}
        & z_{\alpha/2}
        \sqrt{{ \frac{S_1^{(1)}+\hat{S}_1^{(2)}\left(\hat{\boldsymbol{x}}^{(2,n^{(1)})}, \hat{\boldsymbol{\beta}}^{(1)}\right)}{N_1} \left(1-\frac{S_1^{(1)}+\hat{S}_1^{(2)}\left(\hat{\boldsymbol{x}}^{(2,n^{(1)})}, \hat{\boldsymbol{\beta}}^{(1)}\right)}{N_1} \right)} \frac{1}{N_1} + { \frac{S_0^{(1)}+\hat{S}_0^{(2)}\left(\hat{\boldsymbol{\beta}}^{(1)} \right)}{N_0} \left(1-\frac{S_0^{(1)}+\hat{S}_0^{(2)}\left(\hat{\boldsymbol{\beta}}^{(1)} \right)}{N_0}\right)} \frac{1}{N_0}}
        \\
        & \hspace{4cm} - \frac{S_1^{(1)}}{N_1} + \frac{S_0^{(1)}}{N_0} - \Delta\left(\hat{\boldsymbol{x}}^{(2,n^{(1)})}, \hat{\boldsymbol{\beta}}^{(1)}\right)
         - z_{\Pi} \; \hat{\sigma}^2\left(\hat{\boldsymbol{x}}^{(2,n^{(1)})}, \hat{\boldsymbol{\beta}}^{(1)}\right)
         \leq 0. \\ 
    \end{aligned}
    $}
    \end{equation}
    }
\end{thm}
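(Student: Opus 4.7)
The plan is to derive the second (conditional power) constraint of~(\ref{power goal z test for proportion theorem 2}) as a direct restatement of $P(Z>z_{\alpha/2}\mid\text{stage 1 data})\geq\Pi$, where $Z$ is the test statistic from equation~(\ref{test statistic}) of the main text. The outcome-goal part of the displayed optimization is immediate from Assumption~\ref{logistic assumption} combined with equation~(\ref{optimization}) of the main text, so the substantive work lies entirely on the power side.

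First, I would condition on the stage~1 data $(S_0^{(1)},S_1^{(1)})$ and, because a positive intervention effect is anticipated, reduce the two-sided rejection event to $\{Z>z_{\alpha/2}\}$. Decomposing $\hat{p}_1=(S_1^{(1)}+S_1^{(2)})/N_1$ and $\hat{p}_0=(S_0^{(1)}+S_0^{(2)})/N_0$ to isolate the stage~2 contribution rewrites this event as
\begin{equation*}
\frac{S_1^{(2)}}{N_1}-\frac{S_0^{(2)}}{N_0}\;>\;z_{\alpha/2}\,\widehat{se}\;-\;\frac{S_1^{(1)}}{N_1}+\frac{S_0^{(1)}}{N_0},
\end{equation*}
where $\widehat{se}$ denotes the denominator appearing inside $Z$ in~(\ref{test statistic}). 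By Assumption~\ref{logistic assumption} and Assumption~\ref{conditional indep assu}, conditional on $\overline{\hat{\boldsymbol{x}}}^{(2,n^{(1)})}$ and the stage~1 data, the stage~2 successes $S_j^{(2)}$ are independent sums of Bernoullis whose parameters are determined through the logistic model and $\overline{\boldsymbol{A}}^{(2,n^{(1)})}$. Applying the CLT and plugging in $\hat{\boldsymbol{\beta}}^{(1)}$ for $\boldsymbol{\beta}^*$ identifies the conditional mean and variance of $S_1^{(2)}/N_1-S_0^{(2)}/N_0$ with $\hat{\Delta}$ in~(\ref{long mu theorem 2}) and $\hat{\sigma}^2$ in~(\ref{long sigma theorem 2}), respectively.

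Next I would standardize and express the conditional rejection probability as $1-\Phi(r)$, where $r$ is the standardized threshold. Requiring $1-\Phi(r)\geq\Pi$ is equivalent to $r\leq z_\Pi$; clearing the denominator $\sqrt{\hat{\sigma}^2}$ and rearranging recovers the second constraint of~(\ref{power goal z test for proportion theorem 2}).

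The main obstacle is the careful treatment of $\widehat{se}$, which is itself random because it depends on $(S_0^{(2)},S_1^{(2)})$. Under Assumption~\ref{glm_unif_bound_assu} together with the consistency results of Section~\ref{asymptotic properties section}, $\widehat{se}$ is asymptotically indistinguishable from the deterministic plug-in expression that appears in the denominator of~(\ref{power goal z test for proportion theorem 2}), which justifies treating it as non-random inside the Normal approximation. Assumptions~\ref{feasible assu} and~\ref{not reachable under the null assu} are then used to guarantee that the resulting constrained optimization has a well-defined solution depending continuously on $\hat{\boldsymbol{\beta}}^{(1)}$, as required by Sections~\ref{asymptotic properties section} and~\ref{level section}.
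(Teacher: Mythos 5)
Your proposal follows essentially the same route as the paper's proof: condition on the stage~1 data, reduce the two-sided rejection to $\{Z>z_{\alpha/2}\}$, isolate the stage~2 contribution $S_1^{(2)}/N_1-S_0^{(2)}/N_0$, approximate its conditional distribution by a normal with plug-in mean $\hat{\Delta}$ and variance $\hat{\sigma}^2$ obtained from the binomials implied by Assumption~\ref{logistic assumption} under $\hat{\boldsymbol{x}}^{(2,n^{(1)})}$ and $\hat{\boldsymbol{\beta}}^{(1)}$, plug $\hat{S}_1^{(2)}$ and $\hat{S}_0^{(2)}$ into the standard-error term, and convert $1-\Phi(r)\geq\Pi$ into the displayed inequality. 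The only divergence is cosmetic: you standardize by $\sqrt{\hat{\sigma}^2}$, which yields a final term $z_{\Pi}\,\hat{\sigma}$, whereas the paper scales by $\hat{\sigma}^2$ throughout (both in its standardized quantity and in the stated constraint), so your rearrangement reproduces the constraint only up to this scale-factor convention.
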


\begin{proof}
\textcolor{white}{xxx}\\
Consider the test statistic $Z$ from equation (5) of the main text. 
The $H_0: p_1=p_0$ is rejected at significance level $\alpha$ when $|Z|>z_{\alpha / 2}$. Assuming the intervention aims to increase the success probability, a positive effect of the intervention on the outcome would be expected. Thus, under the alternative hypothesis for which the power is calculated, the probability of observing $Z<-z_{\alpha / 2}$ is usually negligible. Recall that $\overline{\boldsymbol{a}}^{(1)}$ and $\overline{\boldsymbol{Y}}^{(1)}$ are the actual interventions and outcomes for all stage 1 centers; $\hat{p}_1$ and $\hat{p}_0$ are the estimated success probabilities in the intervention and control groups in the final combined data from both stages; $n$ is the total number of participants across all stages. 
To satisfy the conditional power goal, the condition $P\left({Z}>z_{\alpha / 2} \;\middle\vert\; \overline{\boldsymbol{a}}^{(1)}, \overline{\boldsymbol{Y}}^{(1)} \right) \geq \Pi$ needs to be satisfied under the alternative hypothesis, where 
\begin{equation*}
\begin{aligned}
    Z = \frac{\hat{p}_1 - \hat{p}_0}{\sqrt{\frac{\hat{p}_1(1-\hat{p}_1)}{N_1} + \frac{\hat{p}_0(1-\hat{p}_0)}{N_0}}}
    = \frac{ \sqrt{n} \left(\hat{p}_1 - \hat{p}_0\right)}{\sqrt{n} \sqrt{\frac{\hat{p}_1(1-\hat{p}_1)}{N_1} + \frac{\hat{p}_0(1-\hat{p}_0)}{N_0}}}.
\end{aligned}
\end{equation*}
By the central limit theorem, as $n \rightarrow \infty$, $\sqrt{n} \left(\hat{p}_1 - \hat{p}_0\right)$ converges in distribution to a normal distribution. 
Let $r_1 = \lim_{n\rightarrow\infty}N_1/n$ and $r_0 = \lim_{n\rightarrow\infty}N_0/n = 1-r_1$. Then, as $n\rightarrow\infty$, 
$$\sqrt{n}  {\sqrt{\frac{\hat{p}_1(1-\hat{p}_1)}{N_1} + \frac{\hat{p}_0(1-\hat{p}_0)}{N_0}}} \xrightarrow{P} \sqrt{ \frac{p_1(1-p_1)}{r_1} + \frac{p_0(1-p_0)}{r_0}}.$$ By Slutsky's Theorem, the variance of the limiting distribution of $Z$ is determined by the numerator $\left(\hat{p}_1 - \hat{p}_0\right)$.
At the end of stage 1, $\hat{p}_1$ and $\hat{p}_0$ are unknown. To satisfy the condition $P\left({Z}>z_{\alpha / 2} \;\middle\vert\; \overline{\boldsymbol{a}}^{(1)}, \overline{\boldsymbol{Y}}^{(1)} \right) \geq \Pi$ under the alternative, the conditional power approach estimates the distribution of $\sqrt{n}\left(\hat{p}_1 - \hat{p}_0\right)$ and the value of $\sqrt{n}\sqrt{{{p}_1(1-{p}_1)}/N_1 + {{p}_0(1-{p}_0)}/N_0}$ under $\hat{\boldsymbol{x}}^{(2,n^{(1)})}$ and $\hat{\boldsymbol{\beta}}^{(1)}$.

$P\left({Z}>z_{\alpha / 2} \;\middle\vert\; \overline{\boldsymbol{a}}^{(1)}, \overline{\boldsymbol{Y}}^{(1)} \right) \geq \Pi$ is equivalent to 
\begin{equation}\label{conditional approach first long eq}
\begin{aligned}
    &P\left(\left(\hat{p}_1 - \hat{p}_0\right) >z_{\alpha / 2} {\sqrt{\frac{\hat{p}_1(1-\hat{p}_1)}{N_1} + \frac{\hat{p}_0(1-\hat{p}_0)}{N_0}}} \;\middle\vert\; \overline{\boldsymbol{a}}^{(1)}, \overline{\boldsymbol{Y}}^{(1)}  \right) \geq \Pi.
\end{aligned}
\end{equation}
Let 
\begin{equation*}
    G_1 = \sqrt{ { \frac{S_1^{(1)}+{S}_1^{(2)}}{N_1} \left(1-\frac{S_1^{(1)}+{S}_1^{(2)}}{N_1} \right)} \frac{1}{N_1} + { \frac{S_0^{(1)}+{S}_0^{(2)}}{N_0} \left(1-\frac{S_0^{(1)}+{S}_0^{(2)}}{N_0}\right)} \frac{1}{N_0} }.
\end{equation*}
Equation (\ref{conditional approach first long eq}) is equivalent to
\begin{equation}
\begin{aligned}
    & P \left( \frac{{S}_1^{(2)}}{N_1} - \frac{{S}_0^{(2)}}{N_0}
    > z_{\alpha/2} \; G_1 - \frac{S_1^{(1)}}{N_1} + \frac{S_0^{(1)}}{N_0} \;\middle\vert\; \overline{\boldsymbol{a}}^{(1)}, \overline{\boldsymbol{Y}}^{(1)}  \right) \geq \Pi.
\end{aligned}\label{conditional approach satisfy this theorem 2}
\end{equation}
At the end of stage 1, ${S}_1^{(2)}$, ${S}_0^{(2)}$ and $G_1$ from equation (\ref{conditional approach satisfy this theorem 2}) are unknown. 
The conditional power approach estimates the mean and variance of the distribution of $ \bigl({{S}_1^{(2)}}/{N_1} - {{S}_0^{(2)}}/{N_0} \bigl) $ based on $\hat{\boldsymbol{x}}^{(2,n^{(1)})}$ and $\hat{\boldsymbol{\beta}}^{(1)}$, denoted by $\Delta\bigl(\hat{\boldsymbol{x}}^{(2,n^{(1)})}, \hat{\boldsymbol{\beta}}_1^{(1)}\bigl)$ and $\hat{\sigma}^2\bigl(\hat{\boldsymbol{x}}^{(2,n^{(1)})}, \hat{\boldsymbol{\beta}}_1^{(1)}\bigl)$, respectively, and estimate the value of $G_1$ by replacing ${S}_0^{(2)}$ and ${S}_1^{(2)}$ in $G_1$ with $\hat{S}_1^{(2)}\bigl(\hat{\boldsymbol{x}}^{(2,n^{(1)})}, \hat{\boldsymbol{\beta}}^{(1)}\bigl)$ and $\hat{S}_0^{(2)}\bigl(\hat{\boldsymbol{\beta}}^{(1)} \bigl)$ (see equations (\ref{mu_1_2}) and (\ref{mu_0_2})).

The estimated asymptotic distributions of ${S}_0^{(2)}$ and ${S}_1^{(2)}$ based on $\hat{\boldsymbol{x}}^{(2,n^{(1)})}$ and $\hat{\boldsymbol{\beta}}^{(1)}$ are $\operatorname{bin}\bigl(n_0^{(2)}, \operatorname{expit}\bigl(\hat{\beta}_0^{(1)}\bigl) \bigl)$ and $\operatorname{bin}\Bigl(n_1^{(2)}, \operatorname{expit}\bigl(\hat{\beta}_0^{(1)}+\bigl(\hat{\boldsymbol{\beta}}_1^{(1)}\bigl)^T \hat{\boldsymbol{x}}^{(2,n^{(1)})}\bigl) \Bigl)$, respectively. By normal approximation of these binomial distributions, the estimated distributions of ${{S}_0^{(2)}}/{N_0} $ and ${{S}_1^{(2)}}/{N_1} $ based on $\hat{\boldsymbol{x}}^{(2,n^{(1)})}$ and $\hat{\boldsymbol{\beta}}^{(1)}$ are
{\small
\begin{equation*}
\begin{aligned}
& \mathcal{N}\left(\frac{n_0^{(2)} \operatorname{expit}\left(\hat{\beta}_0^{(1)}\right)}{N_0}, \frac{n_0^{(2)} \operatorname{expit}\left(\hat{\beta}_0^{(1)}\right) \left(1-\operatorname{expit}\left(\hat{\beta}_0^{(1)}\right)\right)}{N_0^2}\right),\\
& 
\text{and} \;\;
\mathcal{N}\left(\frac{n_1^{(2)} \operatorname{expit}\left(\hat{\beta}_0^{(1)}+\left(\hat{\boldsymbol{\beta}}_1^{(1)}\right)^T \hat{\boldsymbol{x}}^{(2,n^{(1)})}\right)}{N_1}, \right.\\
& \hspace{1cm} \left.\frac{n_1^{(2)} \operatorname{expit}\left(\hat{\beta}_0^{(1)}+\left(\hat{\boldsymbol{\beta}}_1^{(1)}\right)^T \hat{\boldsymbol{x}}^{(2,n^{(1)})}\right) \left(1-\operatorname{expit}\left(\hat{\beta}_0^{(1)}+\left(\hat{\boldsymbol{\beta}}_1^{(1)}\right)^T \hat{\boldsymbol{x}}^{(2,n^{(1)})}\right)\right)}{N_1^2}\right),
\;\text{respectively}.
\end{aligned}
\end{equation*}
}
Thus,
$\Delta\bigl(\hat{\boldsymbol{x}}^{(2,n^{(1)})}, \hat{\boldsymbol{\beta}}^{(1)}\bigl)$ and $\hat{\sigma}^2\bigl(\hat{\boldsymbol{x}}^{(2,n^{(1)})}, \hat{\boldsymbol{\beta}}^{(1)}\bigl)$ from equations (\ref{long mu theorem 2}) and (\ref{long sigma theorem 2}) are the mean and variance of the estimated distribution of $ \bigl({{S}_1^{(2)}}/{N_1} - {{S}_0^{(2)}}/{N_0} \bigl) $ based on the observed stage 1 data. 
Let 
\begin{equation}\label{define G2 theorem 2}
G_2=\left(\left(\frac{{S}_1^{(2)}}{N_1}-\frac{{S}_0^{(2)}}{N_0}\right)-\Delta\left(\hat{\boldsymbol{x}}^{(2,n^{(1)})}, \hat{\boldsymbol{\beta}}^{(1)}\right)\right) \frac{1}{\hat{\sigma}^2\left(\hat{\boldsymbol{x}}^{(2,n^{(1)})}, \hat{\boldsymbol{\beta}}^{(1)}\right)}.
\end{equation}
Equation (\ref{conditional approach satisfy this theorem 2}) can be reformulated as
\begin{equation}\label{conditional approach std normal satisfy this theorem 2}
    \begin{aligned}
        &P \left( G_2
        > 
        \left(
        z_{\alpha/2} \; G_1 - \frac{S_1^{(1)}}{N_1} + \frac{S_0^{(1)}}{N_0} - \Delta\left(\hat{\boldsymbol{x}}^{(2,n^{(1)})}, \hat{\boldsymbol{\beta}}^{(1)}\right)  \right)  
        \frac{1}{\hat{\sigma}^2\left(\hat{\boldsymbol{x}}^{(2,n^{(1)})}, \hat{\boldsymbol{\beta}}^{(1)}\right)} \;\middle\vert\; \overline{\boldsymbol{a}}^{(1)}, \overline{\boldsymbol{Y}}^{(1)} 
        \right) \geq \Pi.
    \end{aligned}
\end{equation}
Let $\Phi(\cdot)$ be the cumulative distribution function of the standard normal distribution. 
Since interventions are not adapted in stage 1, standard asymptotic theory implies that the stage 1-based estimator $\hat{\boldsymbol{\beta}}^{(1)}$ converges in probability to the true parameter $\boldsymbol{\beta}^*$.
% }
By the Central Limit Theorem, $G_2$ (equation (\ref{define G2 theorem 2})) is approximately normally distributed with mean 0 and variance 1. 
Using $\Phi(\cdot)$, equation (\ref{conditional approach std normal satisfy this theorem 2}) is equivalent to
\begin{equation}\label{cdf equation theorem 2}
    \scalebox{0.95}{$
    \begin{aligned}
        &1- \Phi\left( \left( z_{\alpha/2} \; G_1 - \frac{S_1^{(1)}}{N_1} + \frac{S_0^{(1)}}{N_0} - \Delta\left(\hat{\boldsymbol{x}}^{(2,n^{(1)})}, \hat{\boldsymbol{\beta}}^{(1)}\right) \right) \frac{1}{\hat{\sigma}^2\left(\hat{\boldsymbol{x}}^{(2,n^{(1)})}, \hat{\boldsymbol{\beta}}^{(1)}\right)} \;\middle\vert\; \overline{\boldsymbol{a}}^{(1)}, \overline{\boldsymbol{Y}}^{(1)} 
        \right) \geq 1-\Phi\left(z_{\Pi}\right),
    \end{aligned}
    $}
\end{equation}
because $\Pi=\Phi\left(-z_{\Pi}\right)=1-\Phi\left(z_{\Pi}\right)$. Equation (\ref{cdf equation theorem 2}) implies that $\hat{\boldsymbol{x}}^{(2,n^{(1)})}$ needs to satisfy
\footnotesize
\begin{equation}\label{conditional power goal final derivation}
    \scalebox{0.9}{$
    \begin{aligned}
        & \left(z_{\alpha/2} \; \sqrt{{ \frac{S_1^{(1)}+\hat{S}_1^{(2)}\left(\hat{\boldsymbol{x}}^{(2,n^{(1)})}, \hat{\boldsymbol{\beta}}^{(1)}\right) }{N_1} \left(1-\frac{S_1^{(1)}+\hat{S}_1^{(2)}\left(\hat{\boldsymbol{x}}^{(2,n^{(1)})}, \hat{\boldsymbol{\beta}}^{(1)}\right)}{N_1} \right)} \frac{1}{N_1} + { \frac{S_0^{(1)}+\hat{S}_0^{(2)}\left(\hat{\boldsymbol{\beta}}^{(1)} \right)}{N_0} \left(1-\frac{S_0^{(1)}+\hat{S}_0^{(2)}\left(\hat{\boldsymbol{\beta}}^{(1)} \right)}{N_0}\right)} \frac{1}{N_0}}\right.\\ 
        & \hspace{5cm} - \frac{S_1^{(1)}}{N_1} + \frac{S_0^{(1)}}{N_0} - \Delta\left(\hat{\boldsymbol{x}}^{(2,n^{(1)})}, \hat{\boldsymbol{\beta}}^{(1)}\right) \scalebox{1.2}{$\Bigg)$} \frac{1}{\hat{\sigma}^2\left(\hat{\boldsymbol{x}}^{(2,n^{(1)})}, \hat{\boldsymbol{\beta}}^{(1)}\right)}
         \leq z_{\Pi}.
    \end{aligned}
    $}
\end{equation}
\normalsize
Combining equation (\ref{conditional power goal final derivation}) with $
    \operatorname{expit}\bigl(\hat{\beta}_0^{(1)}+\bigl(\hat{\boldsymbol{\beta}}_1^{(1)}\bigr)^T \hat{\boldsymbol{x}}^{(2,n^{(1)})}\bigr)
    \geq \tilde{p}.
$ implies Theorem \ref{conditional power goal theorem appendix}.
\end{proof}

In summary, the method for calculating $\hat{\boldsymbol{x}}^{(2,n^{(1)})}$ based on an outcome goal and a power goal, using the conditional power approach is:
\begin{enumerate}
    \item Given $\alpha$, find the critical values $z_{\alpha/2}$ and $z_{\Pi}$ from the standard normal distribution.
    For $\alpha=0.05$ and $\Pi=0.8$, $z_{\alpha/2} = 1.96$ and $z_{\Pi} = -0.84$.

    \item Solve for the value of $\tilde{p}_{pow\_c}$ so that for $ \hat{\boldsymbol{x}}^{(2,n^{(1)})}$ that satisfies $\operatorname{expit}\left(\hat{\beta}_{0}^{(1)}+(\hat{\boldsymbol{\beta}}^{(1)}_1)^T \hat{\boldsymbol{x}}^{(2,n^{(1)})} \right) \geq \tilde{p}_{pow\_c}$, the constraint from equation (\ref{power goal z test for proportion theorem 2}) is satisfied.

    \item Calculate the maximum achievable success probability, $\hat{p}_{max}\bigl(\hat{\boldsymbol{\beta}}^{(1)}\bigl)$ (see equation (10) from the main text for details).

    \item Calculate the stage 2 recommended intervention $\hat{\boldsymbol{x}}^{(2,n^{(1)})}$ based on Algorithm 1 from Section 3.1 of the main text, using $\tilde{p}_{pow\_c}$ in place of $\tilde{p}_{pow\_u}$. We refer to this modified algorithm as Algorithm 2.
\end{enumerate}

\section{Details of the Shrinking Method}\label{shrinking method section}
As described in Section 3.1 of the main text, when $\hat{p}_{max}\bigl(\hat{\boldsymbol{\beta}}^{(1)}\bigl) < \tilde{p} $, Algorithm 1 uses the shrinking method from \citet{nevo2021analysis} to calculate the recommended intervention $\hat{\boldsymbol{x}}^{(2, n^{(1)})}$. The shrinking method was first introduced in Section 5.1 of the Appendixs of \citet{nevo2021analysis}; here we provide more details.

For ease of presentation, we consider the case where the intervention package has 2 components ($P=2$), and $\hat{\beta}_{11}^{(1)}$ and $\hat{\beta}_{12}^{(1)}$ are positive. Other cases follow in a similar manner. Based on Algorithm 1 from Section 3.1 of the main text, the shrinking method is used when $\hat{\beta}_{0}^{(1)} + \hat{\beta}_{11}^{(1)}\mathcal{U}_1+\hat{\beta}_{12}^{(1)}\mathcal{U}_2 < logit(\tilde{p})$, where $\mathcal{U}_p$ is the pre-specified upper bound of the $p$-th intervention component.

We show the shrinking method for the first intervention component. The shrinking method for the second intervention component follows in a similar manner. 
Let $\tilde{X}^{(2,n^{(1)})}_{11}$ be the solution of 
$$
\hat{\beta}_{0}^{(1)}+\hat{\beta}_{11}^{(1)} \tilde{X}_{11}^{(2,n^{(1)})}+I\left\{\hat{\beta}_{12}^{(1)}>0\right\} \hat{\beta}_{12}^{(1)} \mathcal{U}_2+I\left\{\hat{\beta}_{12}^{(1)}<0\right\} \hat{\beta}_{12}^{(1)} \mathcal{L}_2=\operatorname{logit}(\tilde{p}).
$$
That is, $\tilde{X}_{11}^{(2,n^{(1)})}$ is the value of the first intervention component such that the outcome goal $\tilde{p}$ is reached, regardless of the pre-specified upper and lower bounds of the first intervention component, and when the second intervention component is on the boundary that is most effective.

Let $\beta_{11}^{max}$ be the solution of 
$$
\beta_{11}^{max} \mathcal{U}_1+I\left\{\hat{\beta}_{12}^{(1)}>0\right\} \hat{\beta}_{12}^{(1)} \mathcal{U}_2+I\left\{\hat{\beta}_{12}^{(1)}<0\right\} \hat{\beta}_{12}^{(1)} \mathcal{L}_2=\operatorname{logit}(\tilde{p}).
$$
Let $\beta_{11}^{min} = \beta_{11}^{max}/2$.
As shown in Figure \ref{fig:shrinking}, the shrinking method uses a function $m(\hat{\boldsymbol{\beta}})$, that is continuous in $\boldsymbol{\beta}$, to calculate the recommended intervention $\hat{\boldsymbol{x}}^{(2, n^{(1)})}$.

\begin{figure}[h]
    \centering
    \includegraphics[width=0.8\linewidth]{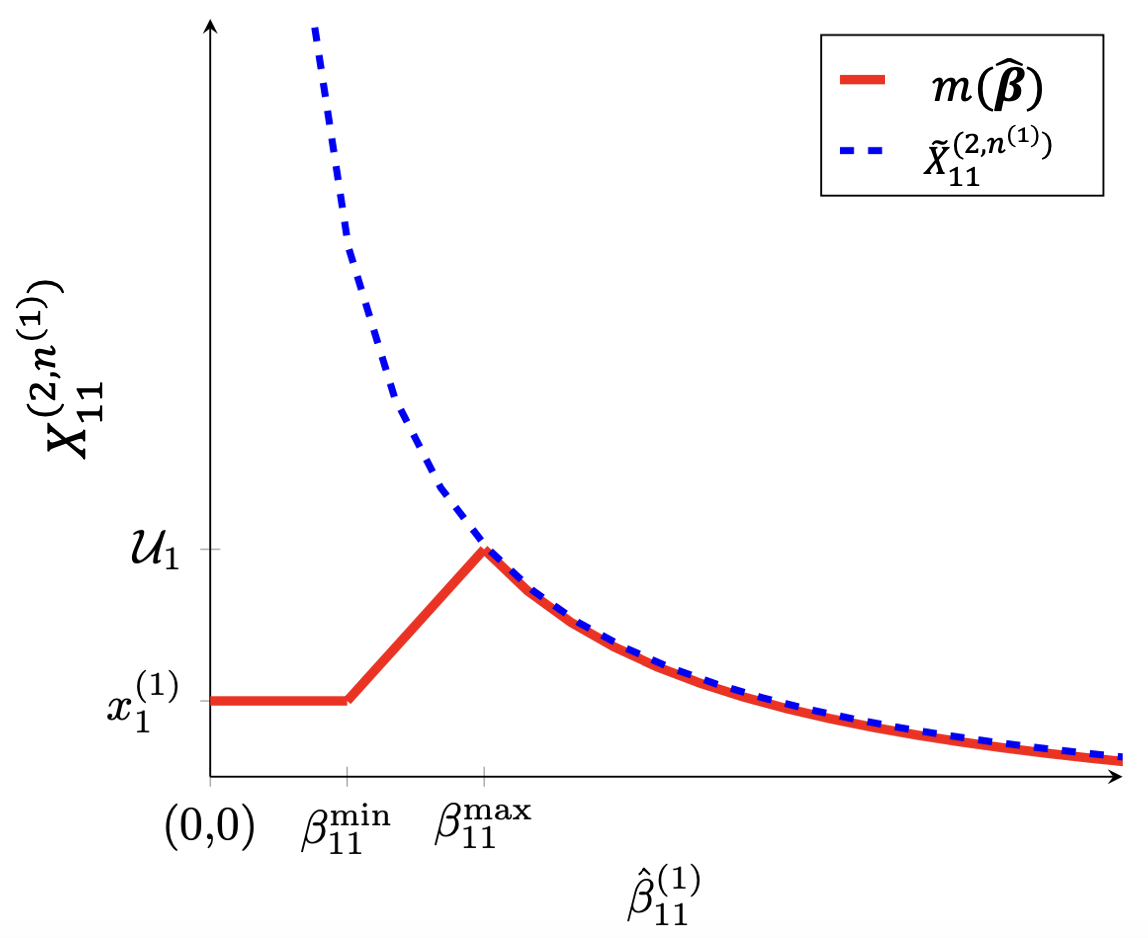}
    \caption{Illustration of the shrinking method for the first intervention component.}
    \label{fig:shrinking}
\end{figure}

The intuition behind the function $m(\hat{\boldsymbol{\beta}})$ is that when the value of $\hat{\beta}_{11}^{(1)}$ is close to 0, the stage 1 recommended intervention will be used in stage 2. When the value of $\hat{\beta}_{11}^{(1)}$ falls between $\beta_{11}^{min}$ and $\beta_{11}^{max}$, the value of the stage 2 recommended intervention is shrunk from the maximum $\mathcal{U}_1$ value toward the stage 1 recommended intervention. 

In summary, the shrinking method works as follows. For each $p= 1, \ldots, P$, 
let $\beta^{max}_{1p}$ be the solution of
\begin{equation*}\label{get beta max equation}
 \begin{aligned}
     \hat\beta_{0}^{(1)} + 
     \beta^{max}_{1p} \mathcal{U}_p + \sum_{\substack{q=1\\q \neq p}}^{P}  I\{ \hat\beta_{1q}^{(1)}>0\}\hat\beta_{1q}^{(1)}\mathcal{U}_q + 
     I\{ \hat\beta_{1q}^{(1)}<0\}\hat\beta_{1q}^{(1)}\mathcal{L}_q
     = logit(\tilde{p}).
 \end{aligned}
\end{equation*}
Let $\beta^{min}_{1p} = \beta^{max}_{1p} / 2$, and 
\begin{equation*}
    \begin{aligned}
        \hat{x}^{(2, n^{(1)})}_p = 
        \begin{cases}
            {x}^{(1)}_p &\text{if} \; \hat\beta_{1p}^{(1)} \leq \beta^{min}_{1p}, \\
            {x}^{(1)}_p + \frac{\mathcal{U}_p - {x}^{(1)}_p}{\beta^{max}_{1p} - \beta^{min}_{1p}} \bigl(\hat\beta_{1p}^{(1)} - \beta^{min}_{1p}\bigr) &\text{otherwise}.
        \end{cases}
    \end{aligned}
\end{equation*}

\section{Proof of Asymptotic Properties of \texorpdfstring{$\hat{\boldsymbol{\beta}}^{(2)}$}{beta}}\label{asymp properties section}

\begin{thm}\label{asymptotic properties of beta theorem}
    Under Assumptions 1-8 from the main text, $\hat{\boldsymbol{\beta}}^{(2)}$ is both consistent and asymptotically normal.
\end{thm}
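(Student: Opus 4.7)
My plan is to reduce Theorem~\ref{asymptotic properties of beta theorem} to the prior results in \citet{nevo2021analysis} and \citet{bing2023learnasyougo}, which already establish consistency and asymptotic normality of the final-stage LAGO estimator provided Assumption~\ref{intervention_cvg_assumption}, that is, convergence in probability of the recommended intervention to a deterministic limit, holds. The remaining assumptions in the present statement coincide with those required by the cited papers, so the only new work is to verify Assumption~\ref{intervention_cvg_assumption} when $\hat{\boldsymbol{x}}^{(2,n^{(1)})}$ is obtained from the two-constraint optimisation (\ref{optimization two constraints}) that combines an outcome goal with either the unconditional or conditional power goal of Theorems~\ref{unconditional power theorem} or \ref{conditional power goal theorem}. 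Once this is in hand, Assumption~\ref{conditional indep assu} makes the stage~2 data conditionally independent of the stage~1 data given $\hat{\boldsymbol{x}}^{(2,n^{(1)})}$, and the pooled score at $\boldsymbol{\beta}^*$ has mean zero under Assumptions~\ref{logistic assumption} and \ref{glm assumption}, so the cited machinery applies verbatim.

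The preliminary ingredients are standard. Since stage~1 uses a pre-specified design, classical GLM theory together with Assumption~\ref{glm_unif_bound_assu} yields $\hat{\boldsymbol{\beta}}^{(1)} \xrightarrow{P} \boldsymbol{\beta}^*$, and the WLLN gives $S_{\ell}^{(1)}/N_{\ell} \xrightarrow{P} \alpha_{\ell}^{(1)} p_{\ell}^*$ for $\ell \in \{0,1\}$. The Continuous Mapping Theorem then delivers $\hat{p}_{max}(\hat{\boldsymbol{\beta}}^{(1)}) \xrightarrow{P} p_{max}(\boldsymbol{\beta}^*)$, and the non-centrality $\lambda(\boldsymbol{x}; \hat{\boldsymbol{\beta}}^{(1)})$ from Theorem~\ref{unconditional power theorem}, as well as the conditional-power expression of Theorem~\ref{conditional power goal theorem}, converge uniformly over any compact $\boldsymbol{x}$-set to their deterministic analogues at $\boldsymbol{\beta}^*$.

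I would then split into the two cases flagged in Section~\ref{asymptotic properties section}. In case~(1), $p_{max}(\boldsymbol{\beta}^*) > \tilde{p}$, and with probability tending to one Algorithm~\ref{alg:rec_int} selects one of its first two branches; in either, the problem reduces to minimising $C(\boldsymbol{x})$ subject to a single constraint $\operatorname{expit}(\hat{\beta}_0^{(1)} + (\hat{\boldsymbol{\beta}}_1^{(1)})^T \boldsymbol{x}) \geq \tilde{q}_n$, where $\tilde{q}_n$ is a continuous function of $\hat{\boldsymbol{\beta}}^{(1)}$ and the stage~1 sums and therefore converges in probability to a deterministic $\tilde{q}^*$. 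Assumption~\ref{feasible assu} applied at $\tilde{q}^*$ furnishes a unique solution depending continuously on $\boldsymbol{\beta}$ in an open neighbourhood of $\boldsymbol{\beta}^*$, and the Continuous Mapping Theorem yields $\hat{\boldsymbol{x}}^{(2,n^{(1)})} \xrightarrow{P} \boldsymbol{x}^{(2)}$. In case~(2), $p_{max}(\boldsymbol{\beta}^*) < \tilde{p}$, and the shrinking branch of Algorithm~\ref{alg:rec_int} is eventually invoked; since the shrinking map of Appendix~\ref{shrinking method section} is by construction a continuous function of $\hat{\boldsymbol{\beta}}^{(1)}$, the same Continuous Mapping Theorem argument supplies the required convergence.

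The main obstacle will be the boundary behaviour of Algorithm~\ref{alg:rec_int}. Branch transitions occur at $\hat{p}_{max}(\hat{\boldsymbol{\beta}}^{(1)}) = \tilde{p}$ and at $\hat{p}_{max}(\hat{\boldsymbol{\beta}}^{(1)}) = \max(\tilde{p}, \tilde{p}_{pow})$; the algorithm was constructed so that the chosen intervention is continuous across them, but this continuity must be verified carefully when $p_{max}(\boldsymbol{\beta}^*)$ sits exactly at one of the thresholds, so that the Continuous Mapping Theorem applies without case-splitting pathology. Once Assumption~\ref{intervention_cvg_assumption} has been validated, Assumption~\ref{not reachable under the null assu} ensures the limit intervention $\boldsymbol{x}^{(2)}$ differs from the no-intervention baseline, supplying the between-stage variation needed to make the limiting information matrix non-singular, at which point the theorems of \citet{nevo2021analysis} and \citet{bing2023learnasyougo} close the proof.
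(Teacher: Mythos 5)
Your overall strategy is the same as the paper's (reduce to the results of \citet{nevo2021analysis} and \citet{bing2023learnasyougo} by verifying Assumption~\ref{intervention_cvg_assumption}, splitting on whether $p_{max}(\boldsymbol{\beta}^*)$ exceeds $\tilde{p}$, and using continuity of the shrinking method in the second case), but there is a genuine gap in your Case~(1). Algorithm~\ref{alg:rec_int} replaces $\tilde{p}$ by $\max(\tilde{p},\tilde{p}_{pow\_u})$ (or $\tilde{p}_{pow\_c}$), and Assumption~\ref{feasible assu} is stated only for the optimization (\ref{optimization no hat}) with the pre-specified outcome goal $\tilde{p}$; it does not license invoking uniqueness and continuous dependence ``at $\tilde{q}^*$'' for an arbitrary limit threshold. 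To close the argument you must show that the power constraint is asymptotically non-binding, i.e.\ that with probability tending to one $\tilde{p}_{pow\_u}\le\tilde{p}$ (equivalently, that any $\boldsymbol{x}$ meeting the outcome goal eventually meets the power goal), so that the effective constraint is exactly the one covered by Assumption~\ref{feasible assu}. This is precisely where the paper uses Assumption~\ref{not reachable under the null assu}: since $\tilde{p}>p_{no}(\boldsymbol{\beta}^*)$, the limiting difference $\hat{p}_1-\hat{p}_0$ is bounded away from zero while the variance term in $Z$ shrinks, so the (unconditional or conditional) power at the outcome-goal solution tends to one and the power goal is eventually slack. Your proposal instead asserts that $\tilde{q}_n$ is ``a continuous function of $\hat{\boldsymbol{\beta}}^{(1)}$ and the stage~1 sums and therefore converges,'' which overlooks that $\tilde{p}_{pow\_u}$ also depends explicitly on the growing sample sizes $N_0,N_1,n_\ell^{(2)}$ through $\lambda$, and it never identifies the limit as being at most $\tilde{p}$.

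Relatedly, you misallocate Assumption~\ref{not reachable under the null assu}: in the paper it is not used to guarantee a non-singular limiting information matrix (between-stage variation is not the issue it addresses), but exactly to drive the power-tends-to-one argument above. Your flagged concern about boundary behaviour when $p_{max}(\boldsymbol{\beta}^*)$ equals a threshold is reasonable, but the paper, like your plan, only treats the two strict cases, so that is not the missing piece; the missing piece is the asymptotic dominance of the outcome goal over the power goal in Case~(1), for both the unconditional and conditional approaches.
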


\begin{proof}
\textcolor{white}{xxx}\\
Let ${p}_{max}(\boldsymbol{\beta}^*)$ denote the maximum achievable success probability within the pre-specified bounds $\mathcal{L}_p$ and $\mathcal{U}_p$ of the intervention components $p=1,\ldots, P$, based on the true parameter $\boldsymbol{\beta}^*$.
Let $p_{no}(\boldsymbol{\beta}^*)$ and $\hat{p}_{no}\bigl(\hat{\boldsymbol{\beta}}^{(1)}\bigl)$ denote the success probability without any intervention based on the true parameter $\boldsymbol{\beta}^*$ and its estimate from the stage 1 data $\hat{\boldsymbol{\beta}}^{(1)}$, respectively.

To see that Assumption 4 from the main text holds when $\hat{\boldsymbol{x}}^{(2, n^{(1)})}$ is calculated using Algorithm 1 from Section 3.1 of the main text, first note that the stage 1 interventions of a LAGO trial are fixed in advance. Standard asymptotic theory implies that as the sample size increases, the stage 1-based estimator $\hat{\boldsymbol{\beta}}^{(1)}$ converges in probability to the true parameter $\boldsymbol{\beta}^*$. This convergence holds, e.g., when $\hat{\boldsymbol{\beta}}^{(1)}$ is either 1) the maximum likelihood estimator based on the stage 1 data, or 2) a solution to generalized estimating equations \citep{liang1986longitudinal} based on the stage 1 data.
Therefore, by the Continuous Mapping Theorem, as $n \xrightarrow{} \infty$, $\hat{p}_{max}\bigl(\hat{\boldsymbol{\beta}}^{(1)}\bigl) \xrightarrow{} {p}_{max}\left(\boldsymbol{\beta}^*\right)$ and $\hat{p}_{no}\bigl(\hat{\boldsymbol{\beta}}^{(1)}\bigl) \xrightarrow{} p_{no}\left(\boldsymbol{\beta}^*\right)$.

\begin{enumerate}
    \item If ${p}_{max}\left(\boldsymbol{\beta}^*\right) \geq \tilde{p}$, then by Assumption 8 from the main text, ${p}_{max}\left(\boldsymbol{\beta}^*\right) \geq \tilde{p} > p_{no}\left(\boldsymbol{\beta}^*\right)$. 
    For the unconditional power approach, the power in equation (3) from the main text is calculated as the unconditional $P(Z^2>\chi^2_{\alpha, 1})$ under the alternative hypothesis, or
    \begin{equation}\label{power equation proof}
    \begin{aligned}
        P\left(\left( \frac{\hat{p}_1 - \hat{p}_0}{\sqrt{\frac{\hat{p}_1(1-\hat{p}_1)}{N_1} + \frac{\hat{p}_0(1-\hat{p}_0)}{N_0}}}\right)^2>\chi^2_{\alpha, 1}\right).
    \end{aligned}
    \end{equation}
    Recall from Section 2.3 of the main text that $\alpha_{1}^{(k)} = \lim_{n \rightarrow \infty} n_1^{(k)}/N_1 > 0$. 
    Let $p_1^{(1)}\left(\boldsymbol{\beta}^*\right)$ denote the success probability in stage 1 based on the true parameter $\boldsymbol{\beta}^*$.
    Asymptotically, when the outcome goal can be reached,
    $\hat{p}_1 \xrightarrow{} \alpha_{1}^{(1)} p_1^{(1)}\left(\boldsymbol{\beta}^*\right) + \alpha_{1}^{(2)} \tilde{p}$, and 
    $\hat{p}_0 \xrightarrow{} p_{no}\left(\boldsymbol{\beta}^*\right)$.
    Because $p_1^{(1)}\left(\boldsymbol{\beta}^*\right) \geq p_{no}\left(\boldsymbol{\beta}^*\right)$, 
    as $n \xrightarrow{} \infty$, since $\tilde{p} > p_{no}\left(\boldsymbol{\beta}^*\right)$, $\hat{p}_1-\hat{p}_0 > 0$ and the positive denominator in equation (\ref{power equation proof}) goes to 0, it follows that the probability in equation (\ref{power equation proof}) goes to 1. 
    Thus, asymptotically, when the outcome goal is satisfied, the power goal is also satisfied.
    That is, for any $\epsilon > 0$, $\exists$ a sample size $n$ such that for all sample sizes $\geq n$, $\hat{\boldsymbol{x}}^{(2, n^{(1)})}$ based on only the outcome goal will also satisfy the power goal in equation (3) from the main text with probability $>1-\epsilon$. Then by Assumption 7 from the main text, Assumption 4 from the main text holds.
    \item If ${p}_{max}\left({\boldsymbol{\beta}}^{*}\right) < \tilde{p} $, asymptotically, as the sample size in stage 1 increases, Algorithm 1 from Section 3.1 of the main text uses the shrinking method from \citet{nevo2021analysis} to ensure that $\hat{\boldsymbol{x}}^{(2, n^{(1)})}$ is a continuous function of ${\boldsymbol{\beta}}$, thus satisfying Assumption 4 from the main text. 
\end{enumerate}

The same arguments apply when $\hat{\boldsymbol{x}}^{(2, n^{(1)})}$ is calculated based on the conditional power approach, using Algorithm 2 described in Section \ref{proof of theorem 2 section}. Therefore, Assumption 4 from the main text also holds when Algorithm 2 is used. Based on the results from \citet{nevo2021analysis, bing2023learnasyougo}, the estimator $\hat{\boldsymbol{\beta}}^{(2)}$ is consistent and asymptotically normal. 
\end{proof}

\section{Examples of LAGO Trials with Power Goals of Alternative Tests} \label{additional tests}
Appendix \ref{additional tests} describes in detail how the power goal works with the following tests: 
the two-sample z-test for the difference between two proportions with pooled variance (Section \ref{z test for proportions}), 
the Wald-type $P$ degrees-of-freedom test for binary outcomes (Section \ref{wald binary outcomes section}), 
the two-sample t-test for the difference between two means with unpooled variance (Section \ref{t test unpooled section}), 
the two-sample t-test for the difference between two means with pooled variance (Section \ref{two-sample t test Appendix}), 
and the Wald-type $P$ degrees-of-freedom test for continuous outcomes (Section \ref{wald cts outcomes section}).
Other tests can also be considered.

\subsection{Power Goal of the Two-Sample z-test for the Difference between Two Proportions with Pooled Variance}\label{z test for proportions}

Let
\begin{equation}
    \hat{p}_{pool} = 
    \frac{S_1^{(1)}+S_1^{(2)} + S_0^{(1)}+S_0^{(2)}}{N_1+N_0}.
    \label{p_pool}
\end{equation}
The test statistic of the two-sample z-test for the difference between two proportions with pooled variance is
\begin{equation}\label{z pool definition}
    Z_{pool} = \frac{\hat{p}_1 - \hat{p}_0}{\sqrt{ \hat{p}_{pool} (1-\hat{p}_{pool}) \left( \frac{1}{N_1} + \frac{1}{N_0} \right) }}.
\end{equation}

The unconditional power and conditional power approaches are discussed separately. First, consider the unconditional power approach. Let 
    \begin{equation}\label{p hat pool estimated}
        \hat{p}_{pool}(\hat{\boldsymbol{x}}^{\left(2, n^{(1)}\right)}, \hat{\boldsymbol{\beta}}^{(1)}) = \frac{ S_{1}^{(1)} + \hat{S}_1^{(2)}(\hat{\boldsymbol{x}}^{(2,n^{(1)})}, \hat{\boldsymbol{\beta}}^{(1)}) +S_{0}^{(1)} + \hat{S}_0^{(2)}(\hat{\boldsymbol{\beta}}^{(1)}) }{N_1+N_0}.
    \end{equation}
\begin{thm}{Two-sample z-test with pooled variance, difference between two proportions, unconditional power goal.}\label{modified unconditional power theorem}
    \textcolor{white}{xxx}\\
    Let $F_{\chi^2}(k; \nu, \lambda) = P(Z^2 \leq k)$ be the CDF of the non-central $\chi^2$ distribution with $\nu$ degrees of freedom and non-centrality parameter $\lambda$.
    Under Assumptions 1, 3, 4, 5, 7 and 8 from the main text, the stage 2 recommended intervention $\hat{\boldsymbol{x}}^{(2,n^{(1)})}$, subject to both an outcome goal and an unconditional power goal, is obtained by solving the following optimization problem: 
    \begin{equation*}
        \text{Min}_{\boldsymbol{x}} C\left(\boldsymbol{x}\right)
        \; \text {subject to} \; 
        \operatorname{expit}(\hat{\beta}_{0}^{(1)}+(\hat{\boldsymbol{\beta}}^{(1)}_1)^T {\hat{\boldsymbol{x}}^{(2,n^{(1)})}} )
        \geq \tilde{p}, \;\text{and} 
    \end{equation*}
    \begin{equation}\label{pooled z test power condition long}
    \begin{aligned}
        &1 - \\
        &F_{\chi^2}\left( \chi^2_{\alpha, 1} \frac{{ \hat{p}_{pool}(\hat{\boldsymbol{x}}^{\left(2, n^{(1)}\right)}, \hat{\boldsymbol{\beta}}^{(1)})  (1-\hat{p}_{pool}(\hat{\boldsymbol{x}}^{\left(2, n^{(1)}\right)}, \hat{\boldsymbol{\beta}}^{(1)}) ) ( \frac{1}{N_1} + \frac{1}{N_0} ) }}{{{\frac{S_1^{(1)}+\hat{S}_1^{(2)}(\hat{\boldsymbol{x}}^{(2,n^{(1)})}, \hat{\boldsymbol{\beta}}^{(1)})}{N_1} \left(1-\frac{S_1^{(1)}+\hat{S}_1^{(2)}(\hat{\boldsymbol{x}}^{(2,n^{(1)})}, \hat{\boldsymbol{\beta}}^{(1)})}{N_1}\right)}\frac{1}{N_1} + {\frac{S_0^{(1)}+\hat{S}_0^{(2)}(\hat{\boldsymbol{\beta}}^{(1)})}{N_0}\left(1-\frac{S_0^{(1)}+\hat{S}_0^{(2)}(\hat{\boldsymbol{\beta}}^{(1)})}{N_0}\right)}\frac{1}{N_0}}} ; \right.\\
        &\qquad\qquad 1, {\lambda}\bigl(\hat{\boldsymbol{x}}^{(2,n^{(1)})}; \hat{\boldsymbol{\beta}}^{(1)}\bigl)  \scalebox{1.4}{$\Bigg)$} 
        \geq \Pi,
    \end{aligned}
    \end{equation}
    where $\hat{S}_1^{(2)}(\hat{\boldsymbol{x}}^{(2,n^{(1)})}, \hat{\boldsymbol{\beta}}^{(1)})$, $\hat{S}_0^{(2)}(\hat{\boldsymbol{\beta}}^{(1)} )$, and ${\lambda}\bigl(\hat{\boldsymbol{x}}^{(2,n^{(1)})}; \hat{\boldsymbol{\beta}}^{(1)}\bigl)$ are defined in Theorem 1 of the main text.
\end{thm}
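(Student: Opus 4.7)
The plan is to adapt the argument of Theorem \ref{unconditional power theorem appendix} to the pooled-variance test statistic $Z_{pool}$ of equation (\ref{z pool definition}). The key observation is that, under the alternative hypothesis, the numerator $\hat{p}_1-\hat{p}_0$ of $Z_{pool}$ has the same asymptotic normal distribution as in the unpooled case, so the non-centrality parameter governing the alternative distribution of $Z_{pool}^2$ coincides with the $\lambda(\hat{\boldsymbol{x}}^{(2,n^{(1)})};\hat{\boldsymbol{\beta}}^{(1)})$ already defined in equation (\ref{theorem 1 lambda}); only a deterministic rescaling of the critical value by the ratio of the pooled to the unpooled asymptotic variance is required.

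First I would establish that under the alternative, $\hat{p}_1-\hat{p}_0$ is asymptotically $\mathcal{N}(p_1-p_0,\,V_u)$ with $V_u=p_1(1-p_1)/N_1+p_0(1-p_0)/N_0$, exactly as in the unpooled case, while the denominator of $Z_{pool}$ converges in probability to $\sqrt{V_p}$ with $V_p=p_{pool}(1-p_{pool})(1/N_1+1/N_0)$ and $p_{pool}$ the population analogue of equation (\ref{p_pool}). Slutsky's theorem then yields $Z_{pool}\xrightarrow{d}\mathcal{N}\bigl((p_1-p_0)/\sqrt{V_p},\,V_u/V_p\bigr)$, so $(V_p/V_u)Z_{pool}^2$ is asymptotically non-central $\chi^2_1$ with non-centrality $(p_1-p_0)^2/V_u$, which is precisely the non-centrality parameter $\lambda$ of Theorem \ref{unconditional power theorem appendix}.

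Next I would write
\begin{equation*}
P\bigl(Z_{pool}^2>\chi^2_{\alpha,1}\bigr)=P\bigl((V_p/V_u)Z_{pool}^2>(V_p/V_u)\chi^2_{\alpha,1}\bigr)=1-F_{\chi^2}\bigl((V_p/V_u)\chi^2_{\alpha,1};\,1,\,(p_1-p_0)^2/V_u\bigr),
\end{equation*}
and substitute the stage 1 plug-in estimators for $V_u$, $V_p$ and $\lambda$, using $\hat{S}_1^{(2)}$ from equation (\ref{mu_1_2}), $\hat{S}_0^{(2)}$ from equation (\ref{mu_0_2}), and $\hat{p}_{pool}(\hat{\boldsymbol{x}}^{(2,n^{(1)})},\hat{\boldsymbol{\beta}}^{(1)})$ from equation (\ref{p hat pool estimated}). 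Requiring the resulting estimated power to be at least $\Pi$ reproduces the constraint in equation (\ref{pooled z test power condition long}) verbatim, and appending the outcome constraint $\operatorname{expit}(\hat{\beta}_0^{(1)}+(\hat{\boldsymbol{\beta}}_1^{(1)})^T\hat{\boldsymbol{x}}^{(2,n^{(1)})})\geq\tilde{p}$ together with the cost objective yields the stated optimization problem.

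The main obstacle I anticipate is careful bookkeeping of which variance appears in which place: the critical value $\chi^2_{\alpha,1}$ is calibrated under the null, where $V_p$ and $V_u$ coincide, whereas under the alternative the factor $V_p/V_u$ is nontrivial and must be carried through the chi-square CDF; confusing the two would introduce a spurious rescaling. Verifying that Assumption 4 from the main text continues to hold for this new feasible region, so that Theorem \ref{asymptotic properties of beta theorem} remains applicable downstream, should follow from the same continuity and maximum-reachable-probability case split already used for Theorem \ref{unconditional power theorem appendix} and need not be redone from scratch.
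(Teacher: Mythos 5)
Your proposal is correct and follows essentially the same route as the paper's proof: both reduce $P(Z_{pool}^2>\chi^2_{\alpha,1})$ to a statement about the unpooled statistic by rescaling the critical value with the ratio of the pooled to the unpooled variance, keep the same non-centrality parameter ${\lambda}\bigl(\hat{\boldsymbol{x}}^{(2,n^{(1)})};\hat{\boldsymbol{\beta}}^{(1)}\bigr)$ from Theorem 1, and then plug in the stage 1 estimates $\hat{S}_1^{(2)}$, $\hat{S}_0^{(2)}$ and $\hat{p}_{pool}$. The only cosmetic difference is that the paper uses the exact algebraic identity $Z_{pool}^2=Z^2\,\hat V_u/\hat V_p$ before invoking the non-central $\chi^2$ approximation for $Z^2$, whereas you derive the limiting distribution of $Z_{pool}$ directly via Slutsky; the resulting constraint is the same.
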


\begin{proof}
    \textcolor{white}{xxx}\\
    To satisfy the unconditional power goal, the condition $P(Z_{pool}^2>\chi^2_{\alpha, 1})\geq \Pi$ needs to be satisfied under the alternative hypothesis.  

    As compared to Theorem 1 from Section 3 of the main text, $Z^2_{pool}$ does not approximately follow a non-central $\chi^2$ distribution with 1 degree of freedom under the alternative hypothesis, since the variance estimate for the numerator of $Z_{pool}$ is biased under the alternative hypothesis.
 
    Notice that with $Z$ as defined in Section 2.3 of the main text,
    \begin{equation*}
    \begin{aligned}
            Z_{pool} &= Z_{pool} \; \frac{\sqrt{ \hat{p}_{pool} (1-\hat{p}_{pool}) \left( \frac{1}{N_1} + \frac{1}{N_0} \right) }}{\sqrt{\frac{\hat{p}_1(1-\hat{p}_1)}{N_1} + \frac{\hat{p}_0(1-\hat{p}_0)}{N_0}}} \; \frac{\sqrt{\frac{\hat{p}_1(1-\hat{p}_1)}{N_1} + \frac{\hat{p}_0(1-\hat{p}_0)}{N_0}}}{\sqrt{ \hat{p}_{pool} (1-\hat{p}_{pool}) \left( \frac{1}{N_1} + \frac{1}{N_0} \right) }} \\ 
            &= Z \; \frac{\sqrt{\frac{\hat{p}_1(1-\hat{p}_1)}{N_1} + \frac{\hat{p}_0(1-\hat{p}_0)}{N_0}}}{\sqrt{ \hat{p}_{pool} (1-\hat{p}_{pool}) \left( \frac{1}{N_1} + \frac{1}{N_0} \right) }},
    \end{aligned}
    \end{equation*}
    or,
    \begin{equation*}
    \begin{aligned}
        Z_{pool}^2 = Z^2 \; \frac{{\frac{\hat{p}_1(1-\hat{p}_1)}{N_1} + \frac{\hat{p}_0(1-\hat{p}_0)}{N_0}}}{{ \hat{p}_{pool} (1-\hat{p}_{pool}) \left( \frac{1}{N_1} + \frac{1}{N_0} \right) }}.
    \end{aligned}
    \end{equation*}
    Therefore, the condition $P(Z^2_{pool}>\chi^2_{\alpha, 1})\geq \Pi$ is equivalent to
    \begin{equation*}
    \begin{aligned}
        P\left(Z^2 \; \frac{{\frac{\hat{p}_1(1-\hat{p}_1)}{N_1} + \frac{\hat{p}_0(1-\hat{p}_0)}{N_0}}}{{ \hat{p}_{pool} (1-\hat{p}_{pool}) \left( \frac{1}{N_1} + \frac{1}{N_0} \right) }}>\chi^2_{\alpha, 1}\right)\geq \Pi,
    \end{aligned}
    \end{equation*}
    or 
    \begin{equation}\label{cond new}
    \begin{aligned}
        P\left(Z^2 >\chi^2_{\alpha, 1} \frac{{ \hat{p}_{pool} (1-\hat{p}_{pool}) \left( \frac{1}{N_1} + \frac{1}{N_0} \right) }}{{\frac{\hat{p}_1(1-\hat{p}_1)}{N_1} + \frac{\hat{p}_0(1-\hat{p}_0)}{N_0}}}\right)\geq \Pi.
    \end{aligned}
    \end{equation}
    As stated in the proof for Theorem 1 of the main text, under the alternative hypothesis, $Z^2$ approximately follows a non-central $\chi^2$ distribution with 1 degree of freedom and non-centrality parameter $\lambda$ from Theorem 1 of the main text. 
    Equation (\ref{cond new}) is equivalent to 
    \begin{equation*}
    \begin{aligned}
        1 - F_{\chi^2}\left( \chi^2_{\alpha, 1} \frac{{ \hat{p}_{pool} (1-\hat{p}_{pool}) \left( \frac{1}{N_1} + \frac{1}{N_0} \right) }}{{\frac{\hat{p}_1(1-\hat{p}_1)}{N_1} + \frac{\hat{p}_0(1-\hat{p}_0)}{N_0}}} ; 1, \lambda \right)
        \geq \Pi,
    \end{aligned}
    \end{equation*}
    where $F_{\chi^2}(k; \nu, \lambda) = P(Z^2 \leq k)$ is the CDF of the non-central $\chi^2$ distribution with $\nu$ degrees of freedom and non-centrality parameter $\lambda$ from Theorem 1 of the main text. 

    At the end of stage 1, $\hat{p}_{pool}$, $\hat{p}_1$, $\hat{p}_0$, and $\lambda$ are all unknown. 
    The best guesses for $\hat{p}_{pool}$, $\hat{p}_1$, $\hat{p}_0$ under the stage 2 recommended intervention $\hat{\boldsymbol{x}}^{(2,n^{(1)})}$ are $\hat{p}_{pool}(\hat{\boldsymbol{x}}^{\left(2, n^{(1)}\right)}, \hat{\boldsymbol{\beta}}^{(1)})$ of equation (\ref{p hat pool estimated}),
    $\left.({S_1^{(1)}+\hat{S}_1^{(2)}(\hat{\boldsymbol{x}}^{(2,n^{(1)})}, \hat{\boldsymbol{\beta}}^{(1)})})\right/{N_1}$, 
    and $\left.({S_0^{(1)}+\hat{S}_0^{(2)}(\hat{\boldsymbol{\beta}}^{(1)})})\right/{N_0}$, respectively.
    Similar to the proof of Theorem 1 from the main text, the non-centrality parameter $\lambda$ of the non-central $\chi^2$ distribution of $Z^2$ can be estimated under $\hat{\boldsymbol{x}}^{(2, n^{(1)})}$, based on the stage 1 data, by ${\lambda}\bigl(\hat{\boldsymbol{x}}^{(2,n^{(1)})}; \hat{\boldsymbol{\beta}}^{(1)}\bigl)$ of equation (8) of the main text.
    
    Therefore, to satisfy the unconditional power goal, the condition $P(Z_{pool}^2>\chi^2_{\alpha, 1})\geq \Pi$ is estimated to be satisfied by ensuring that equation (\ref{pooled z test power condition long}) holds. Combining with the constraint imposed by the outcome goal implies Theorem \ref{modified unconditional power theorem}.
\end{proof}
    
\bigskip 
Next, consider the conditional power approach.
Recall that $z_{\alpha}$ is the critical value corresponding to the upper $\alpha$ tail probability of a standard normal distribution: for $Z \sim \mathcal{N}(0,1)$, $P(Z > z_{\alpha}) = \alpha$.
\begin{thm}{Two-sample z-test with pooled variance, difference between two proportions, conditional power goal.}\label{modified conditional power goal theorem}
    \textcolor{white}{xxx}\\
    Let 
    \begin{equation*}
    \Delta(\hat{\boldsymbol{x}}^{(2,n^{(1)})}, \hat{\boldsymbol{\beta}}^{(1)}) = 
    \frac{n_1^{(2)}}{N_1} \operatorname{expit}(\hat{\beta}_0^{(1)}+(\hat{\boldsymbol{\beta}}_1^{(1)})^T \hat{\boldsymbol{x}}^{(2,n^{(1)})}) -\frac{n_0^{(2)}}{N_0} \operatorname{expit}( \hat{\beta}_0^{(1)}), \\
    \end{equation*}
    \begin{equation*}
    \begin{aligned}
    \text{and}\;\; &\hat{\sigma}^2(\hat{\boldsymbol{x}}^{(2,n^{(1)})}, \hat{\boldsymbol{\beta}}^{(1)}) = 
    \frac{n_0^{(2)}}{N_0^2} \operatorname{expit}( \hat{\beta}_0^{(1)})(1-\operatorname{expit}( \hat{\beta}_0^{(1)})) \\
    & \hspace{2cm} +\frac{n_1^{(2)}}{N_1^2} \operatorname{expit}(\hat{\beta}_0^{(1)}+(\hat{\boldsymbol{\beta}}_1^{(1)})^T \hat{\boldsymbol{x}}^{(2,n^{(1)})})(1-\operatorname{expit}(\hat{\beta}_0^{(1)}+(\hat{\boldsymbol{\beta}}_1^{(1)})^T \hat{\boldsymbol{x}}^{(2,n^{(1)})})).
    \end{aligned}
    \end{equation*}
    Under Assumptions 1, 3, 4, 5, 7 and 8 from the main text, the stage 2 recommended intervention $\hat{\boldsymbol{x}}^{(2,n^{(1)})}$, subject to both an outcome goal and the conditional power goal from equation (3) of the main text, solves the following optimization problem: 
    \begin{equation*}
        \text{Min}_{\boldsymbol{x}} C\left(\boldsymbol{x}\right)
        \; \text {subject to} \; 
        \operatorname{expit}\left(\hat{\beta}_{0}^{(1)}+(\hat{\boldsymbol{\beta}}^{(1)}_1)^T {\hat{\boldsymbol{x}}^{(2,n^{(1)})}} \right)
        \geq \tilde{p}, \;\text{and}
    \end{equation*}
    \begin{equation*}
    \begin{aligned}
        & z_{\alpha/2}
        \sqrt{
        \left( \frac{1}{N_1} + \frac{1}{N_0} \right)\left(\hat{p}_{pool}(\hat{\boldsymbol{x}}^{\left(2, n^{(1)}\right)})\left(1-\hat{p}_{pool}(\hat{\boldsymbol{x}}^{\left(2, n^{(1)}\right)})\right)\right)
        } \\
        & \hspace{3cm} - \frac{S_1^{(1)}}{N_1} + \frac{S_0^{(1)}}{N_0} - \Delta\left(\hat{\boldsymbol{x}}^{(2,n^{(1)})}, \hat{\boldsymbol{\beta}}^{(1)}\right)
         - z_{\Pi} \; \hat{\sigma}\left(\hat{\boldsymbol{x}}^{(2,n^{(1)})}, \hat{\boldsymbol{\beta}}^{(1)}\right)
         \leq 0, \\ 
    \end{aligned}
    \end{equation*}
    where $\hat{p}_{pool}\bigl(\hat{\boldsymbol{x}}^{\left(2, n^{(1)}\right)}, \hat{\boldsymbol{\beta}}^{(1)}\bigl)$ is defined in equation (\ref{p hat pool estimated}).
\end{thm}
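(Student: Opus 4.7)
The plan is to mirror the proof of Theorem \ref{conditional power goal theorem appendix} in Section \ref{proof of theorem 2 section}, changing only the denominator of the test statistic from the unpooled form to the pooled form of equation (\ref{z pool definition}). I would start from the target $P\bigl(Z_{pool} > z_{\alpha/2} \;\bigm|\; \overline{\boldsymbol{a}}^{(1)}, \overline{\boldsymbol{Y}}^{(1)}\bigr) \geq \Pi$ under the alternative (with $P(Z_{pool} < -z_{\alpha/2})$ negligible for a positive intervention effect, as argued in the proof of Theorem \ref{conditional power goal theorem appendix}), and rewrite it using equation (\ref{z pool definition}) as
$$P\left(\hat{p}_1 - \hat{p}_0 \;>\; z_{\alpha/2}\sqrt{\hat{p}_{pool}(1-\hat{p}_{pool})\left(\tfrac{1}{N_1}+\tfrac{1}{N_0}\right)} \;\Bigm|\; \overline{\boldsymbol{a}}^{(1)}, \overline{\boldsymbol{Y}}^{(1)}\right) \geq \Pi.$$
I would then peel off the stage-1 contributions, $\hat{p}_1 - \hat{p}_0 = \bigl(S_1^{(1)}/N_1 - S_0^{(1)}/N_0\bigr) + \bigl(S_1^{(2)}/N_1 - S_0^{(2)}/N_0\bigr)$, where the first bracket is fixed by conditioning on the stage-1 data and only the second bracket carries randomness.

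For the second bracket, I would reuse the argument from the proof of Theorem \ref{conditional power goal theorem appendix} verbatim: under the stage-1 estimate $\hat{\boldsymbol{\beta}}^{(1)}$ and the candidate $\hat{\boldsymbol{x}}^{(2,n^{(1)})}$, the stage-2 sums $S_0^{(2)}$ and $S_1^{(2)}$ are conditionally independent binomials whose normal approximations yield the mean $\Delta(\hat{\boldsymbol{x}}^{(2,n^{(1)})}, \hat{\boldsymbol{\beta}}^{(1)})$ and variance $\hat{\sigma}^2(\hat{\boldsymbol{x}}^{(2,n^{(1)})}, \hat{\boldsymbol{\beta}}^{(1)})$ stated in the theorem. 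Conditional consistency of $\hat{\boldsymbol{\beta}}^{(1)}$ under Assumptions 1 and 3 guarantees that the conditional CLT applies, so the pivot $\bigl((S_1^{(2)}/N_1 - S_0^{(2)}/N_0) - \Delta\bigr) / \hat{\sigma}$ is asymptotically standard normal given the stage-1 data.

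The main obstacle, and the only substantive departure from Theorem \ref{conditional power goal theorem appendix}, is that the pooled estimator $\hat{p}_{pool}$ on the right-hand side still depends on the random $S_0^{(2)}$ and $S_1^{(2)}$, since pooling mixes the two treatment groups rather than separating them as the unpooled expression does. I would handle this by the same plug-in substitution used in the proof of Theorem \ref{modified unconditional power theorem}: replace $\hat{p}_{pool}$ by its deterministic stage-1-based estimate $\hat{p}_{pool}(\hat{\boldsymbol{x}}^{(2,n^{(1)})}, \hat{\boldsymbol{\beta}}^{(1)})$ from equation (\ref{p hat pool estimated}). Consistency of $\hat{\boldsymbol{\beta}}^{(1)}$ combined with the Continuous Mapping Theorem implies that the plug-in differs from the realized $\hat{p}_{pool}$ by $o_p(1)$ conditional on the stage-1 data, so the error introduced into the conditional probability vanishes asymptotically. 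This is the standard fixed-design approximation underlying conditional power calculations for tests with pooled variance.

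With the right-hand side treated as fixed, standardizing the probability inequality by $\Delta$ and $\hat{\sigma}$ and applying $\Phi(-z_\Pi) = 1 - \Phi(z_\Pi)$ together with the monotonicity of the standard normal CDF converts the condition into
$$\frac{z_{\alpha/2}\sqrt{\hat{p}_{pool}(\hat{\boldsymbol{x}}^{(2,n^{(1)})},\hat{\boldsymbol{\beta}}^{(1)})(1-\hat{p}_{pool}(\hat{\boldsymbol{x}}^{(2,n^{(1)})},\hat{\boldsymbol{\beta}}^{(1)}))(\tfrac{1}{N_1}+\tfrac{1}{N_0})} - \tfrac{S_1^{(1)}}{N_1} + \tfrac{S_0^{(1)}}{N_0} - \Delta}{\hat{\sigma}} \leq z_\Pi,$$
which rearranges into the deterministic inequality stated in Theorem \ref{modified conditional power goal theorem}. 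Adjoining the outcome-goal constraint $\operatorname{expit}\bigl(\hat{\beta}_0^{(1)} + (\hat{\boldsymbol{\beta}}_1^{(1)})^T \hat{\boldsymbol{x}}^{(2,n^{(1)})}\bigr) \geq \tilde{p}$ yields the constrained optimization problem.
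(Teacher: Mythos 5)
Your proposal is correct and matches the paper's intended argument: the paper omits the detailed proof, stating only that it follows the proof of Theorem \ref{conditional power goal theorem appendix} with the pooled variance of equation (\ref{p_pool}) replacing the unpooled variance, which is precisely the route you take. Even the step you flag as the "only substantive departure" --- plugging the stage-1-based estimate $\hat{p}_{pool}\bigl(\hat{\boldsymbol{x}}^{(2,n^{(1)})}, \hat{\boldsymbol{\beta}}^{(1)}\bigr)$ into the critical-value side --- is the direct analogue of how the paper's Theorem \ref{conditional power goal theorem appendix} proof replaces the unknown $S_1^{(2)}$ and $S_0^{(2)}$ inside $G_1$ by $\hat{S}_1^{(2)}$ and $\hat{S}_0^{(2)}$, so your treatment is faithful to the paper's approach rather than a new ingredient.
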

The proof of Theorem \ref{modified conditional power goal theorem} follows the same steps as that of Theorem 2 in Section 3.2 of the main text, with the only difference being that the pooled variance of equation (\ref{p_pool}) is used instead of the unpooled variance. Therefore, we omit the detailed proof of Theorem \ref{modified conditional power goal theorem} here.

\subsection{Power Goal of a \texorpdfstring{$P$}{p} Degree-of-Freedom Test for Binary Outcomes}\label{wald binary outcomes section}

As discussed in Section 2.3 of the main text, the $P$-df test refers to the Wald-type test that is used to assess the significance of the $P$-dimensional intervention package. This section recommends center-specific interventions $\boldsymbol{x}_j^{(2,n^{(1)})}$, where $j=1,...,J$, because the power of the $P$-df test is higher if one allows for center-specific recommended interventions.

Let $\hat{\boldsymbol{\beta}}^{(2)} = \bigl( \hat{\beta}_0^{(2)}, \bigl(\hat{\boldsymbol{\beta}}_1^{(2)}\bigr)^T \bigr)^T$ be the estimator of $\boldsymbol{\beta}$ from the final analysis based on data from both stages combined, and let
$\boldsymbol{\Sigma}$ be the asymptotic variance-covariance matrix of $\hat{\boldsymbol{\beta}}^{(2)}$. 
Let $\hat{\boldsymbol{\Sigma}}^{(2)}$ be the estimator for $\boldsymbol{\Sigma}$ based on data from both stages combined, and let $\hat{{\boldsymbol{\Sigma}}}^{(2)}_{\boldsymbol{\beta}_1}$ be the sub-matrix of $\hat{\boldsymbol{\Sigma}}^{(2)}$ relevant only to $\boldsymbol{\beta}_1$.  
The Wald-type test statistic is
\begin{equation}\label{w assum section}
{W} = n \left(\hat{\boldsymbol{\beta}}_1^{(2)}-\mathbf{0}\right)^T\left(\hat{\boldsymbol{\Sigma}}_{\boldsymbol{\beta}_1}^{(2)}\right)^{-1}\left(\hat{\boldsymbol{\beta}}_1^{(2)}-\mathbf{0}\right)
=
n \left(\hat{\boldsymbol{\beta}}_1^{(2)}\right)^T\left(\hat{\boldsymbol{\Sigma}}_{\boldsymbol{\beta}_1}^{(2)}\right)^{-1} \hat{\boldsymbol{\beta}}_1^{(2)}.
\end{equation}
By Theorem \ref{asymptotic properties of beta theorem}, 
under the null hypothesis that the intervention package has no effect on the outcome of interest, regardless of variations in implementation, $\hat{\boldsymbol{\beta}}^{(2)}$ is asymptotically normal under the conditions presented in Theorem \ref{asymptotic properties of beta theorem}. Then, under the null, $W$ asymptotically follows a $\chi^2$ distribution with $P$ degrees of freedom, and the null hypothesis is rejected at significance level $\alpha$ when $W > \chi_{\alpha, P}^2$, where $\chi_{\alpha, P}^2$ is the critical value corresponding to the upper $\alpha$ tail probability of a (central) $\chi^2$ distribution with $P$ degrees of freedom \citep{nevo2021analysis}. 

The asymptotic properties of the estimator $\hat{\boldsymbol{\beta}}^{(2)}$ proven in \citet{nevo2021analysis}, \citet{bing2023learnasyougo}, and Theorem \ref{asymptotic properties of beta theorem} imply that, under the alternative hypothesis for which the power is calculated,
$W$ approximately follows a non-central $\chi^2$ distribution with non-centrality parameter approximated using the stage 1 data by
$\lambda(\hat{\boldsymbol{\beta}}^{(2)}) = 
n \bigl(\hat{\boldsymbol{\beta}}_1^{(2)}\bigr)^T\bigl(\hat{\boldsymbol{\Sigma}}_{\boldsymbol{\beta}_1}^{(2)}\bigr)^{-1} \hat{\boldsymbol{\beta}}_1^{(2)}
$.

For the unconditional power approach, at the end of stage 1, the condition ${P}\left({W} > \chi_{\alpha, P}^2\right) \geq \Pi$ needs to be satisfied under the alternative hypothesis, unconditionally on the stage 1 data (although using the stage 1 data to calculate the recommended interventions $\hat{\boldsymbol{x}}_j^{(2,n^{(1)})}$, and to estimate the true underlying $\boldsymbol{\beta}^*$).

\begin{thm}{P-df test, binary outcomes, unconditional power goal.}\label{binary pdf thm} 
\textcolor{white}{xxx}\\  
Let $\chi^2_{\alpha, P}$ be the upper $\alpha$ quantile of the central $\chi^2$ distribution with $P$ degrees of freedom. For $\alpha=0.05$ and $P=1$, $\chi^2_{\alpha, P}=3.84$; for $\alpha=0.05$ and $P=2$, $\chi^2_{\alpha, P}=5.99$. 
Let ${\lambda}_{P,min}$ be the minimum value of the non-centrality parameter for the non-central $\chi^2$ distribution with $P$ degrees of freedom, so that for a variable $T$ from a non-central $\chi^2$ distribution with non-centrality parameter ${\lambda}_{P,min}$, the probability of $T$ exceeding $\chi^2_{\alpha, P}$ equals $\Pi$. 

Let
\begin{equation}\label{estimated I}
\begin{aligned}
&\hat{\boldsymbol{\Sigma}}(\hat{\boldsymbol{x}}_1^{(2,n^{(1)})},...,\hat{\boldsymbol{x}}_J^{(2,n^{(1)})};\hat{\boldsymbol{\beta}}^{(1)}) 
 = \\
&\left( \frac{1}{n} \sum_{\substack{j=1\\\text{\textcolor{white}{jjj}}}}^{J_1} {n_j^{(1)}} \left(\begin{array}{c}
1 \\
\boldsymbol{a}_j^{(1)} %\\
\end{array}\right)\left(\begin{array}{c}
1 \\
\boldsymbol{a}_j^{(1)}%\\
\end{array}\right)^T\left[1-p_{\boldsymbol{a}_j^{(1)}}\left(\hat{\boldsymbol{\beta}}^{(1)} 
\right)\right] p_{\boldsymbol{a}_j^{(1)}}\left(\hat{\boldsymbol{\beta}}^{(1)} 
\right) \right. \\
%%%%%%%%%%%%%%%%%%%
&  +\frac{1}{n}\sum_{\substack{j=1\\\text{j intervention}}}^{J_2} {n_j^{(2)}} \left(\begin{array}{c}
1 \\
{\hat{\boldsymbol{x}}}^{(2,n^{(1)})}_j 
\end{array}\right)\left(\begin{array}{c}
1 \\
\hat{\boldsymbol{x}}^{(2,n^{(1)} )}_j 
\end{array}\right)^T \left[1-p_{\hat{\boldsymbol{x}}^{(2,n^{(1)})}_j}\left(\hat{\boldsymbol{\beta}}^{(1)} 
\right)\right] p_{\hat{\boldsymbol{x}}^{(2,n^{(1)})}_j}\left(\hat{\boldsymbol{\beta}}^{(1)} 
\right)  \\
%%%%%%%%%%%%%%%%%%%
& + \frac{1}{n}\left. \sum_{\substack{j=1\\\text{j control}}}^{J_2} {n_j^{(2)}} \left(\begin{array}{c}
1 \\
{\boldsymbol{0}} 
\end{array}\right)\left(\begin{array}{c}
1 \\
{\boldsymbol{0}} 
\end{array}\right)^T \left[1-p_{{\boldsymbol{0}}}\left(\hat{\boldsymbol{\beta}}^{(1)} 
\right)\right] p_{{\boldsymbol{0}}}\left(\hat{\boldsymbol{\beta}}^{(1)} 
\right) \right)^{-1},
\end{aligned}
\end{equation}
and let $\left(\hat{\boldsymbol{\Sigma}}(\hat{\boldsymbol{x}}_1^{(2,n^{(1)})},...,\hat{\boldsymbol{x}}_J^{(2,n^{(1)})};\hat{\boldsymbol{\beta}}^{(1)})\right)^{-1}_{\boldsymbol{\beta}_1}$ be the sub-matrix of $\left(\hat{\boldsymbol{\Sigma}}(\hat{\boldsymbol{x}}_1^{(2,n^{(1)})},...,\hat{\boldsymbol{x}}_J^{(2,n^{(1)})};\hat{\boldsymbol{\beta}}^{(1)})\right)^{-1} $ relevant only to $\boldsymbol{\beta}_1$. 
Let 
\begin{equation}\label{ncp new}
{\lambda}\bigl(\hat{\boldsymbol{x}}_1^{(2,n^{(1)})},...,\hat{\boldsymbol{x}}_J^{(2,n^{(1)})}; \hat{\boldsymbol{\beta}}^{(1)}\bigl) = n \; \left(\hat{\boldsymbol{\beta}}_1^{(1)}\right)^T \left(\hat{\boldsymbol{\Sigma}}(\hat{\boldsymbol{x}}_1^{(2,n^{(1)})},...,\hat{\boldsymbol{x}}_J^{(2,n^{(1)})};\hat{\boldsymbol{\beta}}^{(1)})\right)^{-1} _{\boldsymbol{\beta}_1}\hat{\boldsymbol{\beta}}_1^{(1)}.    
\end{equation}
Under Assumptions 1, 3, 4, 5, 7 and 8 from the main text, the stage 2 recommended interventions $\hat{\boldsymbol{x}}_1^{(2,n^{(1)})},...,\hat{\boldsymbol{x}}_J^{(2,n^{(1)})}$, subject to both an outcome goal and an unconditional power goal for the $P$-df test, are the solution to the following optimization problem: 
    \begin{equation}\label{Pdf test theorem binary}
    \begin{aligned}
    &\text{Min}_{\boldsymbol{x}_{1},...,\boldsymbol{x}_{J}} C\left(\boldsymbol{x}_{1},...,\boldsymbol{x}_{J}\right)
    \; \text {subject to} \; 
    \frac{1}{J}\sum_{j=1}^{J}
    \operatorname{expit}(\hat{\beta}_{0}^{(1)}+(\hat{\boldsymbol{\beta}}^{(1)}_1)^T \hat{\boldsymbol{x}}_j^{(2,n^{(1)})}) \geq \tilde{p}, \\
    & \;\;\text{and}\;\;
     {\lambda}\bigl(\hat{\boldsymbol{x}}_1^{(2,n^{(1)})},...,\hat{\boldsymbol{x}}_J^{(2,n^{(1)})}; \hat{\boldsymbol{\beta}}^{(1)}\bigl) \geq {\lambda}_{P,min}.
    \end{aligned}
    \end{equation}
\end{thm}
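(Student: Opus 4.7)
The plan is to parallel the proof of Theorem~\ref{unconditional power theorem appendix}, replacing the one-dimensional z-test theory with the $P$-dimensional Wald theory. First, I would invoke Theorem~\ref{asymptotic properties of beta theorem} to assert that $\hat{\boldsymbol{\beta}}^{(2)}$ is consistent and asymptotically normal under the LAGO adaptation scheme. A standard quadratic-form argument then yields that under the null hypothesis $\boldsymbol{\beta}_1 = \boldsymbol{0}$, the Wald statistic $W$ of equation~(\ref{w assum section}) converges in distribution to $\chi^2_P$, while under the alternative $W$ is approximately non-central $\chi^2_P$ with non-centrality parameter $\lambda = n (\boldsymbol{\beta}_1^*)^T (\boldsymbol{\Sigma}_{\boldsymbol{\beta}_1})^{-1} \boldsymbol{\beta}_1^*$.

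Second, because the tail probability $P(T > \chi^2_{\alpha,P})$ of the non-central $\chi^2_P$ distribution is strictly increasing in the non-centrality parameter, the unconditional power requirement $P(W > \chi^2_{\alpha,P}) \geq \Pi$ is equivalent to $\lambda \geq \lambda_{P,min}$, with $\lambda_{P,min}$ as defined in the theorem statement. This reduces the power goal to a single scalar inequality on $\lambda$.

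Third, at the end of stage~1, I would estimate $\lambda$ by plugging $\hat{\boldsymbol{\beta}}^{(1)}$ in for $\boldsymbol{\beta}^*$ and by estimating $\boldsymbol{\Sigma}$ using the standard inverse-Fisher-information formula for logistic regression. Each observation with design vector $(1, \boldsymbol{a}^T)^T$ contributes a rank-one term $(1, \boldsymbol{a}^T)^T (1, \boldsymbol{a}^T) \, p_{\boldsymbol{a}}(1 - p_{\boldsymbol{a}})$ to the information. Summing across stage~1 centers with interventions $\boldsymbol{a}_j^{(1)}$ weighted by $n_j^{(1)}$, stage~2 intervention-arm centers with proposed interventions $\hat{\boldsymbol{x}}_j^{(2,n^{(1)})}$ weighted by $n_j^{(2)}$, and stage~2 control centers at $\boldsymbol{0}$ weighted by $n_j^{(2)}$ yields exactly equation~(\ref{estimated I}). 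Substituting into the formula for $\lambda$ gives equation~(\ref{ncp new}), and combining the resulting power constraint with the averaged outcome-goal constraint across the $J$ intervention centers delivers optimization problem~(\ref{Pdf test theorem binary}).

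The main obstacle is justifying that equation~(\ref{estimated I}) is a consistent estimator of $\boldsymbol{\Sigma}$ when the stage~2 design is data-adaptive through $\hat{\boldsymbol{x}}_j^{(2,n^{(1)})}$. Under a non-adaptive design this is a textbook calculation for the logistic MLE, but LAGO adaptation introduces dependence between the stage~2 design and the stage~1 data. Assumption~\ref{intervention_cvg_assumption} (convergence of the recommended interventions) combined with the asymptotic framework of \citet{nevo2021analysis} and \citet{bing2023learnasyougo} ensures that the stage~2 design behaves asymptotically as if interventions $\boldsymbol{x}_j^{(2)} = \lim \hat{\boldsymbol{x}}_j^{(2,n^{(1)})}$ had been planned in advance, so the standard inverse-information formula applies in the limit. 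Once this step is in place, the remainder of the proof is a mechanical translation of the argument in Theorem~\ref{unconditional power theorem appendix} from a 1-df to a $P$-df setting.
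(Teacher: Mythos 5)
Your proposal is correct and follows essentially the same route as the paper's proof: approximate the Wald statistic's distribution under the alternative by a non-central $\chi^2_P$ (justified by the asymptotic normality of $\hat{\boldsymbol{\beta}}^{(2)}$ from Theorem~\ref{asymptotic properties of beta theorem} and the framework of \citet{nevo2021analysis}), estimate the unknown $\hat{\boldsymbol{\beta}}_1^{(2)}$ and $\hat{\boldsymbol{\Sigma}}_{\boldsymbol{\beta}_1}^{(2)}$ at the end of stage 1 by plugging $\hat{\boldsymbol{\beta}}^{(1)}$ and the recommended interventions $\hat{\boldsymbol{x}}_j^{(2,n^{(1)})}$ into the inverse Fisher information, and reduce the power goal to the non-centrality constraint ${\lambda}\bigl(\hat{\boldsymbol{x}}_1^{(2,n^{(1)})},\ldots,\hat{\boldsymbol{x}}_J^{(2,n^{(1)})};\hat{\boldsymbol{\beta}}^{(1)}\bigr)\geq\lambda_{P,min}$ combined with the outcome goal. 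Your extra discussion of why the plug-in information estimator remains valid under the adaptive stage-2 design is exactly the step the paper handles by citing the consistency result (equation (A.13)) of \citet{nevo2021analysis}, so there is no substantive difference.
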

\begin{proof}
\textcolor{white}{xxx}\\
At the end of stage 1, not all data are available to calculate the test statistic $W$ (equation (\ref{w assum section})): $\hat{\boldsymbol{\beta}}_1^{(2)}$ and $\hat{\boldsymbol{\Sigma}}_{\boldsymbol{\beta}_1}^{(2)}$ are unknown. We estimate their means based on the stage 1 data and under the stage 2 recommended interventions $\hat{\boldsymbol{x}}_1^{(2,n^{(1)})},...,\hat{\boldsymbol{x}}_J^{(2,n^{(1)})}$. 

For $\hat{\boldsymbol{\beta}}_1^{(2)}$, we use the estimate $\hat{\boldsymbol{\beta}}_1^{(1)}$ from the fitted model based on the stage 1 data.

For $\hat{\boldsymbol{\Sigma}}_{\boldsymbol{\beta}_1}^{(2)}$,
\citet{nevo2021analysis} showed that it can be consistently estimated by the inverse of the observed Fisher information (see equation (A.13) of their Appendix):
\begin{equation}\label{new fisher}
\begin{aligned}
&  \left(\frac{1}{n} \sum_{j=1}^{J_1} {n_j^{(1)}} \left(\begin{array}{c}
1 \\
\boldsymbol{a}_j^{(1)} %\\
\end{array}\right)\left(\begin{array}{c}
1 \\
\boldsymbol{a}_j^{(1)} %\\
\end{array}\right)^T\left[1-p_{\boldsymbol{a}_j^{(1)}}\left(\hat{\boldsymbol{\beta}}^{(2)} % 
\right)\right] p_{\boldsymbol{a}_j^{(1)}}\left(\hat{\boldsymbol{\beta}}^{(2)} % 
\right) \right. \\
%%%%%%%%%%%%%%%%%%%
& \left.\qquad +\frac{1}{n}\sum_{j=1}^{J_2} {n_j^{(2)}} \left(\begin{array}{c}
1 \\
\boldsymbol{A}_j^{(2, n^{(1)})} 
\end{array}\right)\left(\begin{array}{c}
1 \\
\boldsymbol{A}_j^{(2, n^{(1)})} 
\end{array}\right)^T\left[1-p_{\boldsymbol{A}_j^{(2, n^{(1)})}}\left(\hat{\boldsymbol{\beta}}^{(2)}
\right)\right] p_{\boldsymbol{A}_j^{(2, n^{(1)})}}\left(\hat{\boldsymbol{\beta}}^{(2)} 
\right) \right)^{-1}.
\end{aligned}
\end{equation}
At the end of stage 1, $\hat{\boldsymbol{\beta}}^{(2)}$ and $\boldsymbol{A}_j^{(2, n^{(1)})}$ from equation (\ref{new fisher}) are unknown. To address this, we substitute $\hat{\boldsymbol{\beta}}^{(2)}$ and the $\boldsymbol{A}_j^{(2, n^{(1)})}$ by $\hat{\boldsymbol{\beta}}^{(1)}$ and the $\hat{\boldsymbol{x}}_j^{(2,n^{(1)})}$, to obtain $\hat{\boldsymbol{\Sigma}}(\hat{\boldsymbol{x}}_1^{(2,n^{(1)})},...,\hat{\boldsymbol{x}}_J^{(2,n^{(1)})};\hat{\boldsymbol{\beta}}^{(1)}) $ of equation \ref{estimated I}. 

The asymptotic properties of the estimator $\hat{\boldsymbol{\beta}}^{(2)}$ proven in \citet{nevo2021analysis} and Section 4 of the main text imply that, under the alternative hypothesis, $W$ approximately follows a non-central $\chi^2$ distribution with non-centrality parameter $\lambda(\hat{\boldsymbol{\beta}}^{(2)})$. 
Based on the stage 1 data, the non-centrality parameter $\lambda(\hat{\boldsymbol{\beta}}^{(2)})$ can be approximated under the $\hat{\boldsymbol{x}}_j^{(2,n^{(1)})}$ by ${\lambda}\bigl(\hat{\boldsymbol{x}}_1^{(2,n^{(1)})},...,\hat{\boldsymbol{x}}_J^{(2,n^{(1)})}; \hat{\boldsymbol{\beta}}^{(1)}\bigl)$ from equation (\ref{ncp new}).

It follows that to satisfy the unconditional power goal, the condition ${P}\left({W} > \chi_{\alpha, P}^2\right)> \Pi$ under the alternative hypothesis is approximately satisfied by aiming for ${\lambda}\bigl(\hat{\boldsymbol{x}}_1^{(2,n^{(1)})},...,\hat{\boldsymbol{x}}_J^{(2,n^{(1)})}; \hat{\boldsymbol{\beta}}^{(1)}\bigl) \geq {\lambda}_{P,min}$. Combining with equation (4) from the main text implies Theorem \ref{binary pdf thm}.
\end{proof}

\subsection{Power Goal of the Two-Sample t-Test for the Difference between Two Means with Unpooled Variance}\label{t test unpooled section}
The test statistic of the two-sample t-test for the difference between two means with unpooled variance in the final analysis based on data from both stages combined is
\begin{equation}\label{t unpool test statistic}
    T_{unpool}=\frac{\hat{\mu}_1-\hat{\mu}_0} {\sqrt{
    \hat{\sigma}_1^2/N_1 + \hat{\sigma}_0^2/N_0
    }},
\end{equation}
where
\begin{equation*}
\begin{aligned}
    &\hat{\mu}_1 = \frac{S_1^{(1)}+S_1^{(2)}}{N_1},
    \;\; \hat{\mu}_0 = \frac{S_0^{(1)}+S_0^{(2)}}{N_0}, \\
    &\hat{\sigma}_1^2 = \frac{1}{N_1-1} \sum_{\substack{i=1\\\text{i intervention}}}^{N_1} \left( Y_i - \hat{\mu}_1  \right)^2, 
    \;\; \hat{\sigma}_0^2 = \frac{1}{N_0-1} \sum_{\substack{i=1\\\text{i control}}}^{N_0} \left( Y_i - \hat{\mu}_0  \right)^2.
\end{aligned}
\end{equation*}

Let $\hat{\sigma}_0^{2{(1)}}$ and $\hat{\sigma}_1^{2{(1)}}$ be the stage 1-based estimated variances of the outcome in the control and intervention groups, respectively.
Let $\hat{\mu}_0^{(1)}$ and $\hat{\mu}_1^{(1)}$ be the stage 1-based estimated means in the control and intervention groups, respectively.

The unconditional power and conditional power approaches will be discussed separately. First consider the unconditional power approach. 
\begin{thm}{Two-sample t-test with unpooled variance, difference between two means, unconditional power goal.}\label{unconditional power theorem t test unpooled}
    \textcolor{white}{xxx}\\
    Let
    \begin{equation}\label{sigma tilde}
    \begin{aligned}
        \tilde{\sigma}^2_1\bigl(\hat{\boldsymbol{x}}^{(2,n^{(1)})};\hat{\boldsymbol{\beta}}^{(1)} \bigl) = \frac{1}{N_1-1} \left( (N_1-2)\hat{\sigma}_1^{2(1)} + \frac{n_1^{(1)}n_1^{(2)}}{N_1} \left( g^{-1}\left(\hat{\beta}_0^{(1)}+\hat{\boldsymbol{\beta}}_1^{(1)\;T} {\hat{\boldsymbol{x}}^{(2,n^{(1)})}}\right)  - \hat\mu_1^{(1)}\right)^2 \right),
    \end{aligned}
    \end{equation}
    and
    \begin{equation}
    {\lambda}\bigl(\hat{\boldsymbol{x}}^{(2,n^{(1)})}; \hat{\boldsymbol{\beta}}^{(1)} \bigl) =
    \frac{\left(\left.\bigl({S_1^{(1)}+ n_1^{(2)} \; g^{-1}\bigl(\hat{\beta}_0^{(1)}+\hat{\boldsymbol{\beta}}_1^{(1)\;T} {\hat{\boldsymbol{x}}^{(2,n^{(1)})}}\bigl) }\bigl) \right/{N_1} 
        - 
        \left.\bigl(S_0^{(1)}+ n_0^{(2)} \; g^{-1} \bigl(\hat{\beta}_0^{(1)}\bigl) \bigl)\right/{N_0} \right)^2}
        { \tilde{\sigma}^2_1\bigl(\hat{\boldsymbol{x}}^{(2,n^{(1)})};\hat{\boldsymbol{\beta}}^{(1)} \bigl)/N_1
        + \hat{\sigma}_0^{2(1)} / N_0
        } .
    \label{ncp t unpooled}
    \end{equation}
    Let ${\lambda}_{min}$ be the minimum value of the non-centrality parameter for the non-central $\chi^2$ distribution with 1 degree of freedom, so that for a variable $T$ from a non-central $\chi^2$ distribution with non-centrality parameter ${\lambda}_{min}$, the probability of $T$ exceeding $\chi^2_{\alpha, 1}$ equals $\Pi$. 
    
    Under Assumptions 2 --- 8 from the main text, the stage 2 recommended intervention ${\boldsymbol{x}}^{(2,n^{(1)})}$, subject to both an outcome goal and an unconditional power goal, solves the following optimization problem: 
    \begin{equation*}
        \text{Min}_{\boldsymbol{x}} C\left(\boldsymbol{x}\right)
        \; \text {subject to} \; 
        g^{-1}(\hat{\beta}_{0}^{(1)}+(\hat{\boldsymbol{\beta}}^{(1)}_1)^T {\hat{\boldsymbol{x}}^{(2,n^{(1)})}} )
        \geq \tilde{\theta}, \;\text{and} \; {\lambda}\bigl(\hat{\boldsymbol{x}}^{(2,n^{(1)})}; \hat{\boldsymbol{\beta}}^{(1)} \bigl)  \geq \lambda_{min}.
    \end{equation*}
\end{thm}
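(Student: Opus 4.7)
The plan is to mirror the argument of Theorem~\ref{unconditional power theorem appendix} from the binary case, with two substantive modifications to accommodate (i) the continuous-outcome GLM in Assumption 2 in place of the logistic model, and (ii) the fact that the intervention-group variance $\sigma_1^2$ must be estimated using stage 1 data together with a GLM-based prediction for stage 2. As in the binary case, standard large-sample theory for the two-sample problem implies that, under the alternative, $T_{unpool}^2$ is approximately non-central $\chi^2_1$ with non-centrality parameter $\lambda^* = (\mu_1-\mu_0)^2 / \bigl(\sigma_1^2/N_1 + \sigma_0^2/N_0\bigr)$, so that the unconditional condition $P(T_{unpool}^2 > \chi^2_{\alpha,1}) \geq \Pi$ is equivalent to $\lambda^* \geq \lambda_{min}$. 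The task then reduces to showing that the quantity in (\ref{ncp t unpooled}) is the natural stage 1-based estimator of $\lambda^*$ evaluated at a candidate stage 2 intervention.

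For the numerator this is straightforward. The stage 1 sums $S_1^{(1)}, S_0^{(1)}$ are observed, and under Assumption 2 the unobserved stage 2 sums $S_1^{(2)}, S_0^{(2)}$ have conditional expectations $n_1^{(2)} g^{-1}\bigl(\hat{\beta}_0^{(1)} + (\hat{\boldsymbol{\beta}}_1^{(1)})^T \hat{\boldsymbol{x}}^{(2,n^{(1)})}\bigr)$ and $n_0^{(2)} g^{-1}\bigl(\hat{\beta}_0^{(1)}\bigr)$, respectively; plugging these into $(\hat{\mu}_1-\hat{\mu}_0)^2$ recovers the numerator of (\ref{ncp t unpooled}). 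For the control-group variance, Assumption 5 says the error distribution does not depend on the intervention composition, so the stage 1 estimator $\hat{\sigma}_0^{2(1)}$ is already a consistent estimator of $\sigma_0^2$ and can be used directly in the denominator.

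The main obstacle is the intervention-group variance estimator $\tilde{\sigma}_1^2$ in (\ref{sigma tilde}). I would derive it from the standard within/between sum-of-squares decomposition for two subgroups: writing $\hat{\sigma}_1^2$, the pooled stage 1 plus stage 2 intervention-group sample variance, in terms of the within-subgroup variances $\hat{\sigma}_1^{2(1)}, \hat{\sigma}_1^{2(2)}$ and the between-subgroup gap $\hat{\mu}_1^{(1)} - \hat{\mu}_1^{(2)}$ gives an expression with a $(n_1^{(1)}-1)\hat{\sigma}_1^{2(1)} + (n_1^{(2)}-1)\hat{\sigma}_1^{2(2)}$ within term and a $\frac{n_1^{(1)} n_1^{(2)}}{N_1}\bigl(\hat{\mu}_1^{(1)} - \hat{\mu}_1^{(2)}\bigr)^2$ between term. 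Under Assumption 5 the stage 2 within-group variance has the same distribution as the stage 1 one, which motivates the substitution $\hat{\sigma}_1^{2(2)} \mapsto \hat{\sigma}_1^{2(1)}$ and collapses the within term to $(N_1-2)\hat{\sigma}_1^{2(1)}$; replacing the unobserved $\hat{\mu}_1^{(2)}$ by its GLM prediction $g^{-1}\bigl(\hat{\beta}_0^{(1)}+(\hat{\boldsymbol{\beta}}_1^{(1)})^T\hat{\boldsymbol{x}}^{(2,n^{(1)})}\bigr)$ and dividing by $N_1-1$ gives exactly (\ref{sigma tilde}). The algebra here is routine; the interpretive work is verifying that Assumption 5 justifies the ``stage 2 variance equals stage 1 variance'' substitution, and checking that the resulting estimator is continuous in $\hat{\boldsymbol{\beta}}^{(1)}$.

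Combining these numerator and denominator estimators yields ${\lambda}\bigl(\hat{\boldsymbol{x}}^{(2,n^{(1)})};\hat{\boldsymbol{\beta}}^{(1)}\bigr)$, and imposing both the outcome-goal constraint $g^{-1}\bigl(\hat{\beta}_0^{(1)} + (\hat{\boldsymbol{\beta}}_1^{(1)})^T \hat{\boldsymbol{x}}^{(2,n^{(1)})}\bigr) \geq \tilde{\theta}$ and the estimated power constraint ${\lambda}\bigl(\hat{\boldsymbol{x}}^{(2,n^{(1)})};\hat{\boldsymbol{\beta}}^{(1)}\bigr) \geq \lambda_{min}$ gives the optimization problem in the statement. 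Assumptions 3, 4, 6, 7, and 8 enter exactly as in Theorem~\ref{unconditional power theorem appendix}: they guarantee that $\hat{\boldsymbol{x}}^{(2,n^{(1)})}$ depends continuously on $\hat{\boldsymbol{\beta}}^{(1)}$ and that feasibility and uniqueness of the optimization are preserved in a neighborhood of $\boldsymbol{\beta}^*$, which is what is needed for the subsequent consistency and asymptotic normality arguments to go through.
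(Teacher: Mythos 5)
Your proposal is correct and takes essentially the same approach as the paper's proof: approximate $T_{unpool}^2$ under the alternative by a non-central $\chi^2_1$, estimate the numerator by plugging the stage 1 sums and the GLM predictions $n_1^{(2)}g^{-1}\bigl(\hat{\beta}_0^{(1)}+(\hat{\boldsymbol{\beta}}_1^{(1)})^T\hat{\boldsymbol{x}}^{(2,n^{(1)})}\bigr)$, $n_0^{(2)}g^{-1}\bigl(\hat{\beta}_0^{(1)}\bigr)$ into $\hat\mu_1-\hat\mu_0$, use $\hat{\sigma}_0^{2(1)}$ for the control variance, and build $\tilde{\sigma}_1^2$ by combining the stage 1 within-variance with a between-stage term involving the predicted stage 2 mean, which yields exactly the non-centrality constraint $\lambda\geq\lambda_{min}$ alongside the outcome goal. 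The only cosmetic differences are that you derive (\ref{sigma tilde}) via the exact within/between sum-of-squares identity (with the between term $\frac{n_1^{(1)}n_1^{(2)}}{N_1}(\hat\mu_1^{(1)}-\hat\mu_1^{(2)})^2$) and then substitute estimates, whereas the paper expands by adding and subtracting $\hat\mu_1^{(1)}$ and the GLM prediction and approximates the stage 2 within-sum and cross-term directly, and the error-distribution-independence you invoke is Assumption 6 (not 5) in the paper's numbering.
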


\begin{proof}
    \textcolor{white}{xxx}\\
    Under the alternative hypothesis, $T_{unpool}$ approximately follows a normal distribution with a non-zero mean and variance 1. Therefore, $T^2_{unpool}$ approximately follows a non-central $\chi^2$ distribution with 1 degree of freedom. The unconditional power approach estimates the non-centrality parameter of this non-central $\chi^2$ distribution based on the stage 1 data.
    
    At the end of stage 1, not all data are available to calculate the test statistic $T_{unpool}$ of equation (\ref{t unpool test statistic}). 
    The stage 2-related terms in $T_{unpool}$, $S_1^{(2)}$ and $S_0^{(2)}$, are estimated under $\hat{\boldsymbol{x}}^{(2, n^{(1)})}$ and $\hat{\boldsymbol{\beta}}^{(1)}$. 
    Specifically, $\left.\left({S_1^{(1)}+ n_1^{(2)} \; g^{-1}\left(\hat{\beta}_0^{(1)}+\hat{\boldsymbol{\beta}}_1^{(1)\;T} {\hat{\boldsymbol{x}}^{(2,n^{(1)})}}\right) }\right) \right/{N_1} $ and $\left.\left(S_0^{(1)}+ n_0^{(2)} \; g^{-1} \left(\hat{\beta}_0^{(1)}\right) \right)\right/{N_0}$ are used to estimate $\hat{\mu}_1$ and $\hat{\mu}_0$.
    
    Estimating the variance terms, $\hat{\sigma}_1^2$ and $\hat{\sigma}_0^2$, in $T_{unpool}$ can be done based on the stage 1 data as follows.
    Based on the stage 1 data, the best guess of $\hat{\sigma}_0^2$ at the end of the study is $\hat{\sigma}_0^{2{(1)}}$. 
    For $\hat{\sigma}_1^2$, notice that 
    \begin{equation}\label{plus minus original}
    \begin{aligned}
        & \hat{\sigma}_1^2  = 
        \frac{1}{N_1-1} \left( \sum_{\substack{i=1\\\text{i intervention}}}^{n_1^{(1)}} \left( Y_i^{(1)} - \hat{\mu}_1  \right)^2 + \sum_{\substack{i=1\\\text{i intervention}}}^{n_1^{(2)}} \left( Y_i^{(2, n^{(1)})} -  \hat{\mu}_1  \right)^2 \right)
        \\
        &= \frac{1}{N_1-1} \left( \sum_{\substack{i=1\\\text{i intervention}}}^{n_1^{(1)}} \left( Y_i^{(1)} - \hat{\mu}_1^{(1)} + \hat{\mu}_1^{(1)} - \hat{\mu}_1  \right)^2 + \right. \\
        &\left. \hspace{1cm} \sum_{\substack{i=1\\\text{i intervention}}}^{n_1^{(2)}} \left( Y_i^{(2, n^{(1)})} - g^{-1}\left(\hat{\beta}_0^{(1)}+\hat{\boldsymbol{\beta}}_1^{(1)\;T} {\hat{\boldsymbol{x}}^{(2,n^{(1)})}}\right) + g^{-1}\left(\hat{\beta}_0^{(1)}+\hat{\boldsymbol{\beta}}_1^{(1)\;T} {\hat{\boldsymbol{x}}^{(2,n^{(1)})}}\right)  -  \hat{\mu}_1  \right)^2 \right).
    \end{aligned}
    \end{equation}
    Consider the two summations from equation (\ref{plus minus original}) separately. First,
    \begin{equation}\label{pm expanded}
    \begin{aligned}
        &\sum_{\substack{i=1\\\text{i intervention}}}^{n_1^{(1)}} \left( Y_i^{(1)} - \hat{\mu}_1^{(1)} + \hat{\mu}_1^{(1)} - \hat{\mu}_1  \right)^2 \\ 
        &= 
        \sum_{\substack{i=1\\\text{i intervention}}}^{n_1^{(1)}} \left(Y_i^{(1)} - \hat{\mu}_1^{(1)} \right)^2
        + \sum_{\substack{i=1\\\text{i intervention}}}^{n_1^{(1)}} \left(\hat{\mu}_1^{(1)} - \hat{\mu}_1 \right)^2 + 2\left(\hat{\mu}_1^{(1)} - \hat{\mu}_1 \right)\sum_{\substack{i=1\\\text{i intervention}}}^{n_1^{(1)}} \left(Y_i^{(1)} - \hat{\mu}_1^{(1)} \right) \\
        &= \sum_{\substack{i=1\\\text{i intervention}}}^{n_1^{(1)}} \left(Y_i^{(1)} - \hat{\mu}_1^{(1)} \right)^2
        + n_1^{(1)} \left(\hat{\mu}_1^{(1)} - \hat{\mu}_1 \right)^2;
    \end{aligned}
    \end{equation}
    The cross-product term from equation (\ref{pm expanded}) equals zero because $$\hat{\mu}_1^{(1)} = \sum_{\substack{i=1\\\text{i intervention}}}^{n_1^{(1)}} Y_i^{(1)} / n_1^{(1)}.$$ Based on the stage 1 data, $\left.\left({n_1^{(1)} \hat{\mu}_1^{(1)}+ n_1^{(2)} \; g^{-1}\left(\hat{\beta}_0^{(1)}+\hat{\boldsymbol{\beta}}_1^{(1)\;T} {\hat{\boldsymbol{x}}^{(2,n^{(1)})}}\right) }\right) \right/{N_1} $ is the best estimate of $\hat{\mu}_1$ at the end of the study.
    Therefore, equation (\ref{pm expanded}) equals approximately 
    \begin{equation}\label{first sum}
    \begin{aligned}
        &\hat{\sigma}^{2(1)}_1 \left(n_1^{(1)} - 1 \right) + n_1^{(1)} \left( \hat{\mu}_1^{(1)} -  \frac{\left({n_1^{(1)} \hat{\mu}_1^{(1)}+ n_1^{(2)} \; g^{-1}\left(\hat{\beta}_0^{(1)}+\hat{\boldsymbol{\beta}}_1^{(1)\;T} {\hat{\boldsymbol{x}}^{(2,n^{(1)})}}\right) }\right)}{N_1} \right) ^2 \\
        % \\
        &=
        \hat{\sigma}^{2(1)}_1 \left(n_1^{(1)} - 1 \right) + n_1^{(1)} \left( \frac{\hat{\mu}_1^{(1)}N_1}{N_1} - \frac{n_1^{(1)} \hat{\mu}_1^{(1)}}{N_1} - \frac{ n_1^{(2)} \; g^{-1}\left(\hat{\beta}_0^{(1)}+\hat{\boldsymbol{\beta}}_1^{(1)\;T} {\hat{\boldsymbol{x}}^{(2,n^{(1)})}}\right) }{N_1} \right) ^2
        \\
        &=
        \hat{\sigma}^{2(1)}_1 \left(n_1^{(1)} - 1 \right) + n_1^{(1)} \left( \frac{n_1^{(2)}\hat{\mu}_1^{(1)}}{N_1} - \frac{ n_1^{(2)} \; g^{-1}\left(\hat{\beta}_0^{(1)}+\hat{\boldsymbol{\beta}}_1^{(1)\;T} {\hat{\boldsymbol{x}}^{(2,n^{(1)})}}\right) }{N_1} \right) ^2
        \\
        &=
        \hat{\sigma}^{2(1)}_1 \left(n_1^{(1)} - 1 \right) + \frac{n_1^{(1)}(n_1^{(2)})^2}{N_1^2} \left( \hat{\mu}_1^{(1)} - g^{-1}\left(\hat{\beta}_0^{(1)}+\hat{\boldsymbol{\beta}}_1^{(1)\;T} {\hat{\boldsymbol{x}}^{(2,n^{(1)})}}\right) \right)^2.
    \end{aligned}
    \end{equation}

    Similarly, 
    \begin{equation}\label{second term}
    \begin{aligned}
        & \sum_{\substack{i=1\\\text{i intervention}}}^{n_1^{(2)}} \left( Y_i^{(2, n^{(1)})} - g^{-1}\left(\hat{\beta}_0^{(1)}+\hat{\boldsymbol{\beta}}_1^{(1)\;T} {\hat{\boldsymbol{x}}^{(2,n^{(1)})}}\right) + g^{-1}\left(\hat{\beta}_0^{(1)}+\hat{\boldsymbol{\beta}}_1^{(1)\;T} {\hat{\boldsymbol{x}}^{(2,n^{(1)})}}\right)  -  \hat{\mu}_1  \right)^2\\
        &= \sum_{\substack{i=1\\\text{i intervention}}}^{n_1^{(2)}}
        \left(Y_i^{(2, n^{(1)})} - g^{-1}\left(\hat{\beta}_0^{(1)}+\hat{\boldsymbol{\beta}}_1^{(1)\;T} {\hat{\boldsymbol{x}}^{(2,n^{(1)})}}\right)  \right)^2
        + \sum_{\substack{i=1\\\text{i intervention}}}^{n_1^{(2)}}
        \left( g^{-1}\left(\hat{\beta}_0^{(1)}+\hat{\boldsymbol{\beta}}_1^{(1)\;T} {\hat{\boldsymbol{x}}^{(2,n^{(1)})}}\right)  - \hat\mu_1\right)^2 \\
        & \;\;\;\; + 2\sum_{\substack{i=1\\\text{i intervention}}}^{n_1^{(2)}} 
        \left(Y_i^{(2, n^{(1)})} - g^{-1}\left(\hat{\beta}_0^{(1)}+\hat{\boldsymbol{\beta}}_1^{(1)\;T} {\hat{\boldsymbol{x}}^{(2,n^{(1)})}}\right)  \right) \left( g^{-1}\left(\hat{\beta}_0^{(1)}+\hat{\boldsymbol{\beta}}_1^{(1)\;T} {\hat{\boldsymbol{x}}^{(2,n^{(1)})}}\right)  - \hat\mu_1\right) \\
        & \approx
        \hat{\sigma}_1^{2(1)} \left(n_1^{(2)}-1 \right) + n_1^{(2)} \left( g^{-1}\left(\hat{\beta}_0^{(1)}+\hat{\boldsymbol{\beta}}_1^{(1)\;T} {\hat{\boldsymbol{x}}^{(2,n^{(1)})}}\right)  -  \frac{\left({n_1^{(1)} \hat{\mu}_1^{(1)}+ n_1^{(2)} \; g^{-1}\left(\hat{\beta}_0^{(1)}+\hat{\boldsymbol{\beta}}_1^{(1)\;T} {\hat{\boldsymbol{x}}^{(2,n^{(1)})}}\right) }\right)}{N_1} \right) ^2 \\
        & \;\;\;\; + 2\left( g^{-1}\left(\hat{\beta}_0^{(1)}+\hat{\boldsymbol{\beta}}_1^{(1)\;T} {\hat{\boldsymbol{x}}^{(2,n^{(1)})}}\right)  - \hat\mu_1\right) \sum_{\substack{i=1\\\text{i intervention}}}^{n_1^{(2)}} 
        \left(Y_i^{(2, n^{(1)})} - g^{-1}\left(\hat{\beta}_0^{(1)}+\hat{\boldsymbol{\beta}}_1^{(1)\;T} {\hat{\boldsymbol{x}}^{(2,n^{(1)})}}\right)  \right)
        \\
        & =
        \hat{\sigma}_1^{2(1)} \left(n_1^{(2)}-1 \right) + n_1^{(2)} \left( \frac{ N_1 g^{-1}\left(\hat{\beta}_0^{(1)}+\hat{\boldsymbol{\beta}}_1^{(1)\;T} {\hat{\boldsymbol{x}}^{(2,n^{(1)})}}\right)}{N_1}  - \frac{n_1^{(2)} \; g^{-1}\left(\hat{\beta}_0^{(1)}+\hat{\boldsymbol{\beta}}_1^{(1)\;T} {\hat{\boldsymbol{x}}^{(2,n^{(1)})}}\right) }{N_1} - \frac{n_1^{(1)} \hat{\mu}_1^{(1)}}{N_1} \right) ^2
        \\
        & =
        \hat{\sigma}_1^{2(1)} \left(n_1^{(2)}-1 \right) + n_1^{(2)} \left( \frac{ n_1^{(1)} g^{-1}\left(\hat{\beta}_0^{(1)}+\hat{\boldsymbol{\beta}}_1^{(1)\;T} {\hat{\boldsymbol{x}}^{(2,n^{(1)})}}\right) }{N_1} - \frac{n_1^{(1)} \hat{\mu}_1^{(1)}}{N_1} \right) ^2
        \\
        & = 
        \hat{\sigma}_1^{2(1)} \left(n_1^{(2)}-1 \right) + \frac{n_1^{(2)}(n_1^{(1)})^2}{N_1^2} \left( g^{-1}\left(\hat{\beta}_0^{(1)}+\hat{\boldsymbol{\beta}}_1^{(1)\;T} {\hat{\boldsymbol{x}}^{(2,n^{(1)})}}\right)  - \hat\mu_1^{(1)}\right)^2.
    \end{aligned}
    \end{equation}

    Combining equations (\ref{plus minus original}), (\ref{first sum}) and (\ref{second term}), it follows that based on the stage 1 data, equation (\ref{sigma tilde}) provides the best estimate of $\hat{\sigma}_1^2$ at the end of the study.

    Under the alternative hypothesis, $T_{unpool}^2$ approximately follows a non-central $\chi^2$ distribution with 1 degree of freedom. 
    The non-centrality parameter $\lambda$ can be estimated under $\hat{\boldsymbol{x}}^{(2, n^{(1)})}$ and $\hat{\boldsymbol{\beta}}^{(1)}$ by ${\lambda}\bigl(\hat{\boldsymbol{x}}^{(2,n^{(1)})}; \hat{\boldsymbol{\beta}}^{(1)} \bigl)$ from equation (\ref{ncp t unpooled}).
    To satisfy the unconditional power goal, the condition $P(T_{unpool}^2>\chi^2_{\alpha, 1})\geq \Pi$ under the alternative hypothesis is ensured by achieving ${\lambda}\bigl(\hat{\boldsymbol{x}}^{(2,n^{(1)})}; \hat{\boldsymbol{\beta}}^{(1)} \bigl) \geq {\lambda}_{min}$. Combining equation (\ref{ncp t unpooled}) with the constraint imposed by the outcome goal implies Theorem~\ref{unconditional power theorem t test unpooled}.
\end{proof}

Next, we discuss the conditional power approach.

\begin{thm}{Two-sample t-test with unpooled variance, difference between two means, conditional power goal.}\label{two-sample t test theorem unpooled}
    \textcolor{white}{xxx}\\
    Let
    \begin{equation}\label{big delta cts unpooled}
    \begin{aligned}
        \Delta^{(g)}\left(\hat{\boldsymbol{x}}^{(2,n^{(1)})}; \hat{\boldsymbol{\beta}}^{(1)}\right) = \frac{n_1^{(2)}}{N_1} \; g^{-1}\left(\hat{\beta}_0^{(1)}+\hat{\boldsymbol{\beta}}_1^{(1)\;T} {\hat{\boldsymbol{x}}^{(2,n^{(1)})}}\right) - \frac{n_0^{(2)}}{N_0}  \; g^{-1} \left(\hat{\beta}_0^{(1)}\right),
    \end{aligned}
    \end{equation}
    \begin{equation}\label{big sigma cts unpooled}
    \begin{aligned}
        \text{and}\;\; 
        \hat{\sigma}^{(g)}  = 
        \sqrt{
        \frac{n_1^{(2)}\hat{\sigma}_1^{2{(1)}}}{N_1^2} + 
        \frac{n_0^{(2)}\hat{\sigma}_0^{2{(1)}}}{N_0^2} }.
    \end{aligned}
    \end{equation}
    Under Assumptions 2 --- 8 from the main text, the stage 2 recommended intervention ${\boldsymbol{x}}^{(2,n^{(1)})}$, subject to both an outcome goal and the conditional power goal, solves the following optimization problem:
    \begin{equation*}
    \begin{aligned}
        &\text{Min}_{\boldsymbol{x}} C\left(\boldsymbol{x}\right)
        \; \text {subject to} \; \; 
        g^{-1}\left(\hat{\beta}_0^{(1)}+\left(\hat{\boldsymbol{\beta}}_1^{(1)}\right)^T \hat{\boldsymbol{x}}^{(2,n^{(1)})}\right) \geq \tilde{\theta},
        \;\text{and}\; 
    \end{aligned}
    \end{equation*}
    \begin{equation}
    \begin{aligned}
         &z_{\alpha / 2} \sqrt{\frac{\tilde{\sigma}^2_1\bigl(\hat{\boldsymbol{x}}^{(2,n^{(1)})};\hat{\boldsymbol{\beta}}^{(1)} \bigl)}{N_1}+\frac{\hat{\sigma}_0^{2{(1)}}}{N_0}} -\frac{S_1^{(1)}}{N_1} +\frac{S_0^{(1)}}{N_0} 
         - 
         \Delta^{(g)}\left(\hat{\boldsymbol{x}}^{(2,n^{(1)})}; \hat{\boldsymbol{\beta}}^{(1)}\right) - z_{\Pi}\hat{\sigma}^{(g)}
         \leq 0,
    \end{aligned}
    \end{equation}
    where $\tilde{\sigma}^2_1\bigl(\hat{\boldsymbol{x}}^{(2,n^{(1)})};\hat{\boldsymbol{\beta}}^{(1)} \bigl)$ is defined in Theorem \ref{unconditional power theorem t test unpooled}.
\end{thm}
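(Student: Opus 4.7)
The plan is to mirror the structure of the proof of Theorem 2 (conditional power for binary outcomes, Section A.2), adapting each binomial-specific piece to the GLM setting of Assumption 2 and re-using the variance calculation already done in the proof of Theorem \ref{unconditional power theorem t test unpooled}. First, I would reformulate the conditional power requirement $P\bigl(T_{unpool} > z_{\alpha/2} \mid \overline{\boldsymbol{a}}^{(1)}, \overline{\boldsymbol{Y}}^{(1)}\bigr) \geq \Pi$ (the negative tail is negligible under a positive-effect alternative) by multiplying through by the positive denominator, then decomposing $\hat{\mu}_1 = S_1^{(1)}/N_1 + S_1^{(2)}/N_1$ and similarly for $\hat{\mu}_0$. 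This isolates the purely stage-2 random quantity $S_1^{(2)}/N_1 - S_0^{(2)}/N_0$ on the left-hand side, exactly as in the derivation of equation (A.7) in the proof of Theorem 2.

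Next, I would estimate, from the stage 1 data and under the candidate $\hat{\boldsymbol{x}}^{(2,n^{(1)})}$, the conditional mean and variance of this stage-2 quantity. Assumption 2 gives $E(Y_{ij} \mid \boldsymbol{a}_j, \boldsymbol{x}; \boldsymbol{\beta}) = g^{-1}(\beta_0 + \boldsymbol{\beta}_1^T \boldsymbol{a}_j)$, and Assumption 5 provides independent errors with within-arm variances estimated from stage 1 by $\hat{\sigma}_1^{2(1)}$ and $\hat{\sigma}_0^{2(1)}$. Plugging $\hat{\boldsymbol{\beta}}^{(1)}$ for $\boldsymbol{\beta}^*$ yields $\Delta^{(g)}(\hat{\boldsymbol{x}}^{(2,n^{(1)})}; \hat{\boldsymbol{\beta}}^{(1)})$ of (\ref{big delta cts unpooled}) as the estimated conditional mean and $(\hat{\sigma}^{(g)})^2$ of (\ref{big sigma cts unpooled}) as the estimated conditional variance. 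For the denominator of $T_{unpool}$, I would reuse the derivation already established in the proof of Theorem \ref{unconditional power theorem t test unpooled} verbatim: the best stage 1-based approximation of $\hat{\sigma}_1^2$ is $\tilde{\sigma}_1^2(\hat{\boldsymbol{x}}^{(2,n^{(1)})}; \hat{\boldsymbol{\beta}}^{(1)})$ from (\ref{sigma tilde}) and of $\hat{\sigma}_0^2$ is $\hat{\sigma}_0^{2(1)}$, so the denominator becomes a deterministic function of the stage 1 data and $\hat{\boldsymbol{x}}^{(2,n^{(1)})}$.

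Finally, I would standardize the left-hand side by subtracting $\Delta^{(g)}$ and dividing by $\hat{\sigma}^{(g)}$ to produce a pivot. By the Central Limit Theorem applied conditionally on the stage 1 data (valid because Assumption 3 makes the stage 2 data independent of the stage 1 data given $\hat{\boldsymbol{x}}^{(2,n^{(1)})}$, and because $\hat{\boldsymbol{\beta}}^{(1)} \xrightarrow{P} \boldsymbol{\beta}^*$), the standardized quantity is asymptotically $\mathcal{N}(0,1)$. Using $\Pi = 1 - \Phi(z_{\Pi})$ converts the probability inequality into
\[
z_{\alpha/2}\sqrt{\frac{\tilde{\sigma}_1^2(\hat{\boldsymbol{x}}^{(2,n^{(1)})}; \hat{\boldsymbol{\beta}}^{(1)})}{N_1} + \frac{\hat{\sigma}_0^{2(1)}}{N_0}} - \frac{S_1^{(1)}}{N_1} + \frac{S_0^{(1)}}{N_0} - \Delta^{(g)}(\hat{\boldsymbol{x}}^{(2,n^{(1)})}; \hat{\boldsymbol{\beta}}^{(1)}) - z_{\Pi}\hat{\sigma}^{(g)} \leq 0,
\]
which, combined with the outcome-goal constraint $g^{-1}(\hat{\beta}_0^{(1)} + (\hat{\boldsymbol{\beta}}_1^{(1)})^T \hat{\boldsymbol{x}}^{(2,n^{(1)})}) \geq \tilde{\theta}$, yields the optimization problem in the statement.

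The main obstacle is already absorbed by the earlier Theorem \ref{unconditional power theorem t test unpooled}, whose proof contains the delicate algebra that splits $\hat{\sigma}_1^2$ into its two between-stage contributions, adds and subtracts within-stage means, and cancels the resulting cross-product sums to yield $\tilde{\sigma}_1^2$. The only genuinely new work here is verifying that the conditional CLT remains valid after substituting $\hat{\boldsymbol{\beta}}^{(1)}$ for $\boldsymbol{\beta}^*$ in the mean and variance of $S_1^{(2)}/N_1 - S_0^{(2)}/N_0$; this follows from Slutsky's theorem together with the consistency of $\hat{\boldsymbol{\beta}}^{(1)}$. Apart from this, the argument is a routine adaptation of the proof of Theorem 2, replacing the binomial mean $n p_{\boldsymbol{a}}(\boldsymbol{\beta})$ by $n g^{-1}(\beta_0 + \boldsymbol{\beta}_1^T \boldsymbol{a})$ and the binomial variance $p(1-p)$ by the GLM residual variance.
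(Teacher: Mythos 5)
Your proposal is correct and follows essentially the same route as the paper's proof: reduce the conditional power condition to a statement about $S_1^{(2)}/N_1 - S_0^{(2)}/N_0$, estimate its conditional mean and variance under the GLM with $\hat{\boldsymbol{\beta}}^{(1)}$ to get $\Delta^{(g)}$ and $\hat{\sigma}^{(g)}$, replace $\hat{\sigma}_1^2$ and $\hat{\sigma}_0^2$ in the denominator by $\tilde{\sigma}_1^2\bigl(\hat{\boldsymbol{x}}^{(2,n^{(1)})};\hat{\boldsymbol{\beta}}^{(1)}\bigr)$ and $\hat{\sigma}_0^{2(1)}$ as in Theorem \ref{unconditional power theorem t test unpooled}, and apply the conditional CLT with $\Pi = 1-\Phi(z_{\Pi})$ to obtain the stated constraint. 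Your standardization consistently divides by $\hat{\sigma}^{(g)}$, which matches the final inequality in the theorem and is, if anything, cleaner than the paper's intermediate displays.
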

The superscript $^{(g)}$ in $\Delta^{(g)}\left(\hat{\boldsymbol{x}}^{(2,n^{(1)})}; \hat{\boldsymbol{\beta}}^{(1)}\right)$ and $\hat{\sigma}^{2,(g)}$ reminds us that these quantiles are in the context of LAGO trials with continuous outcomes, modeled with a GLM and general link function.

\begin{proof}
\textcolor{white}{xxx}\\
Consider the test statistic $T_{unpool}$ from equation (\ref{t unpool test statistic}). Asymptotically, this t-test approximates a z-test, and the null hypothesis $H_0: \mu_1=\mu_0$, is rejected at significance level $\alpha$ when $|T_{unpool}|>z_{\alpha / 2}$. 
Assuming the intervention aims to increase the mean of the outcome, a positive effect of the intervention on the outcome would be expected. Thus, under the alternative hypothesis for which the power is calculated, the probability of observing $T_{unpool}<-z_{\alpha / 2}$ is usually negligible. Recall that $\overline{\boldsymbol{a}}^{(1)}$ and $\overline{\boldsymbol{Y}}^{(1)}$ are the actual interventions and outcomes for all stage 1 centers, respectively.
To satisfy the conditional power goal, the condition $P\left(T_{unpool}>z_{\alpha / 2} \;\middle\vert\; \overline{\boldsymbol{a}}^{(1)}, \overline{\boldsymbol{Y}}^{(1)}\right) \geq \Pi$ needs to be satisfied under the alternative hypothesis.

Similar to the proof of Theorem 2 from the main text,
$P\left(T_{unpool}>z_{\alpha / 2}\;\middle\vert\; \overline{\boldsymbol{a}}^{(1)}, \overline{\boldsymbol{Y}}^{(1)}\right) \geq \Pi$ is equivalent to
\begin{equation}\label{two-sample t test conditional condition unpooled}
    P\left(\frac{S_1^{(2)}}{N_1}-\frac{S_0^{(2)}}{N_0}>z_{\alpha / 2} \sqrt{\frac{\hat{\sigma}_1^2}{N_1}+\frac{\hat{\sigma}_0^2}{N_0}}+\frac{S_0^{(1)}}{N_0}-\frac{S_1^{(1)}}{N_1}\;\middle\vert\; \overline{\boldsymbol{a}}^{(1)}, \overline{\boldsymbol{Y}}^{(1)}\right) \geq \Pi.
\end{equation}
By the Central Limit Theorem, both ${{S}_1^{(2)}}/{N_1}$ and ${{S}_0^{(2)}}/{N_0}$ from equation (\ref{two-sample t test conditional condition unpooled}) are asymptotically normally distributed. 
The conditional power approach estimates the mean and variance of the distribution of $ \bigl({{S}_1^{(2)}}/{N_1} - {{S}_0^{(2)}}/{N_0} \bigl) $ based on $\hat{\boldsymbol{x}}^{(2,n^{(1)})}$ and $\hat{\boldsymbol{\beta}}^{(1)}$, 
denoted by 
$\Delta^{(g)}\left(\hat{\boldsymbol{x}}^{(2,n^{(1)})}; \hat{\boldsymbol{\beta}}^{(1)}\right)$ and 
$\hat{\sigma}^{2(g)}$, respectively. 

The estimated asymptotic distributions of ${{S}_1^{(2)}}/{N_1}$ and ${{S}_0^{(2)}}/{N_0}$ based on $\hat{\boldsymbol{x}}^{(2,n^{(1)})}$ and $\hat{\boldsymbol{\beta}}^{(1)}$ are
\begin{equation}
\begin{aligned}
    &\frac{{S}_1^{(2)}}{N_1} \approx \mathcal{N} \left(
    \frac{n_1^{(2)}}{N_1} \; g^{-1}\left(\hat{\beta}_0^{(1)}+\hat{\boldsymbol{\beta}}_1^{(1)\;T} {\hat{\boldsymbol{x}}^{(2,n^{(1)})}}
    \right),
    \frac{n_1^{(2)}\hat{\sigma}_1^{2{(1)}}}{N_1^2}
    \right),\\
    &\text{and}\;\; 
    \frac{{S}_0^{(2)}}{N_0} \approx \mathcal{N}\left( 
    \frac{n_0^{(2)}}{N_0}  \; g^{-1} \left(\hat{\beta}_0^{(1)}\right),
    \frac{n_0^{(2)}\hat{\sigma}_0^{2{(1)}}}{N_0^2}
    \right).
\end{aligned}
\end{equation}
Thus, $\Delta^{(g)}\left(\hat{\boldsymbol{x}}^{(2,n^{(1)})}; \hat{\boldsymbol{\beta}}^{(1)}\right)$ and 
$\hat{\sigma}^{2,(g)}$ from equations (\ref{big delta cts unpooled}) and (\ref{big sigma cts unpooled}) are the mean and variance of the distribution of $(S_1^{(2)}/N_1 - S_0^{(2)}/N_0)$ estimated based on the stage 1 data.

Let 
$$
G_3 = 
\left(
\left(\frac{{S}_1^{(2)}}{N_1}-\frac{{S}_0^{(2)}}{N_0}\right) 
- 
\Delta^{(g)}\left(\hat{\boldsymbol{x}}^{(2,n^{(1)})}; \hat{\boldsymbol{\beta}}^{(1)}\right)
\right)
\frac{1}{\hat{\sigma}^{(g)} }.
$$
Following the same argument as in the proof for Theorem \ref{unconditional power theorem t test unpooled}, based on the stage 1 data, the best guess of $\hat{\sigma}_0^2$ at the end of the study is $\hat{\sigma}_0^{2{(1)}}$, and the best guess of $\hat{\sigma}_1^2$ at the end of the study is $\tilde{\sigma}^2_1\bigl(\hat{\boldsymbol{x}}^{(2,n^{(1)})};\hat{\boldsymbol{\beta}}^{(1)} \bigl)$ (equation (\ref{sigma tilde})).
Equation (\ref{two-sample t test conditional condition unpooled}) can be reformulated as
\footnotesize{
\begin{equation}\label{long new g2 continuous unpooled}
P\left(G_3>
\left(z_{\alpha / 2} \sqrt{\frac{\tilde{\sigma}^2_1\bigl(\hat{\boldsymbol{x}}^{(2,n^{(1)})};\hat{\boldsymbol{\beta}}^{(1)} \bigl)}{N_1}+\frac{\hat{\sigma}_0^{2{(1)}}}{N_0}} + \frac{S_0^{(1)}}{N_0}-\frac{S_1^{(1)}}{N_1}
-\Delta^{(g)}\left(\hat{\boldsymbol{x}}^{(2,n^{(1)})}; \hat{\boldsymbol{\beta}}^{(1)}\right)
\right) \frac{1}{\hat{\sigma}^{2,(g)}}\;\middle\vert\; \overline{\boldsymbol{a}}^{(1)}, \overline{\boldsymbol{Y}}^{(1)}
\right) \geq \Pi.
\end{equation}}
\normalsize
\noindent Similar to Section 2.4 of the main text, 
by the Central Limit Theorem, $G_3$ is approximately normally distributed with mean 0 and variance 1. 
With $\Phi(\cdot)$, the cumulative distribution function of the standard normal distribution, equation (\ref{long new g2 continuous unpooled}) is equivalent to
\footnotesize
\begin{equation}\label{two-sample t-test cdf part}
\begin{aligned}
&1-\Phi\left(
\left(z_{\alpha / 2} \sqrt{\frac{\tilde{\sigma}^2_1\bigl(\hat{\boldsymbol{x}}^{(2,n^{(1)})};\hat{\boldsymbol{\beta}}^{(1)} \bigl)}{N_1}+\frac{\hat{\sigma}_0^{2{(1)}}}{N_0}} + \frac{S_0^{(1)}}{N_0}-\frac{S_1^{(1)}}{N_1}
-\Delta^{(g)}\left(\hat{\boldsymbol{x}}^{(2,n^{(1)})}; \hat{\boldsymbol{\beta}}^{(1)}\right)
\right) \frac{1}{\hat{\sigma}^{2,(g)}}\;\middle\vert\; \overline{\boldsymbol{a}}^{(1)}, \overline{\boldsymbol{Y}}^{(1)}
\right)\\
&\hspace{12cm}\geq 1-\Phi\left(z_{\Pi}\right),
\end{aligned}
\end{equation}
\normalsize
because $\Pi=\Phi\left(-z_{\Pi}\right)=1-\Phi\left(z_{\Pi}\right)$.
Combining equation (\ref{two-sample t-test cdf part}) with the constraint imposed by the outcome goal implies Theorem \ref{two-sample t test theorem unpooled}.
\end{proof}

\subsection{Power Goal of the Two-Sample t-test for the Difference between Two Means with Pooled Variance}\label{two-sample t test Appendix}

Let
\begin{equation*}
    \hat{s}_{pool} =
    \sqrt{\frac{(N_1-1)\hat{\sigma}_1^2 + (N_0-1)\hat{\sigma}_0^2}{N_1+N_0-2}}
    \label{s_pool}
\end{equation*}
The test statistic of the two-sample t-test for the difference between two means with pooled variance is
\begin{equation*}\label{t pool definition}
    T_{pool}=\frac{\hat{\mu}_1-\hat{\mu}_0} {\hat{s}_{pool}  \sqrt{
    \frac{1}{N_1}+\frac{1}{N_0}
    }}.
\end{equation*}

The unconditional power and conditional power approaches are discussed separately. First, consider the unconditional power approach.

Let
    \begin{equation}\label{s hat pool estimated}
        \hat{s}_{pool}(\hat{\boldsymbol{x}}^{\left(2, n^{(1)}\right)}, \hat{\boldsymbol{\beta}}^{(1)}) =     \sqrt{\frac{(N_1-1) \tilde{\sigma}^2_1\bigl(\hat{\boldsymbol{x}}^{(2,n^{(1)})};\hat{\boldsymbol{\beta}}^{(1)} \bigl) + (N_0-1)\hat{\sigma}_0^{2(1)}}{N_1+N_0-2}},
    \end{equation}
    where $\tilde{\sigma}^2_1\bigl(\hat{\boldsymbol{x}}^{(2,n^{(1)})};\hat{\boldsymbol{\beta}}^{(1)} \bigl) $ is defined in Theorem \ref{unconditional power theorem t test unpooled}.
\begin{thm}{Two-sample t-test with unpooled variance, difference between two means, unconditional power goal.}\label{pooled t modified unconditional power theorem}
    \textcolor{white}{xxx}\\
    Let $F_{\chi^2}(k; \nu, \lambda) = P(Z^2 \leq k)$ for $Z\sim \mathcal{N}(0,1)$, be the CDF of the non-central $\chi^2$ distribution with $\nu$ degrees of freedom and non-centrality parameter $\lambda$.
    Under Assumptions 2 -- 8 from the main text, the stage 2 recommended intervention $\hat{\boldsymbol{x}}^{(2,n^{(1)})}$, subject to both an outcome goal and an unconditional power goal, is obtained by solving the following optimization problem: 
    \begin{equation*}
        \text{Min}_{\boldsymbol{x}} C\left(\boldsymbol{x}\right)
        \; \text {subject to} \; 
        \operatorname{expit}\left(\hat{\beta}_{0}^{(1)}+(\hat{\boldsymbol{\beta}}^{(1)}_1)^T {\hat{\boldsymbol{x}}^{(2,n^{(1)})}} \right)
        \geq \tilde{p}, \;\text{and} 
    \end{equation*}
    \begin{equation*}\label{pooled t test power condition long}
    \begin{aligned}
        &1 - F_{\chi^2}\left( \chi^2_{\alpha, 1} \frac{ \hat{s}_{pool}^2(\hat{\boldsymbol{x}}^{\left(2, n^{(1)}\right)}, \hat{\boldsymbol{\beta}}^{(1)}) {(\frac{1}{N_1} + \frac{1}{N_0})} } { \tilde{\sigma}^2_1\bigl(\hat{\boldsymbol{x}}^{(2,n^{(1)})};\hat{\boldsymbol{\beta}}^{(1)} \bigl)/N_1
        + \hat{\sigma}_0^{2(1)} / N_0} ; 1, {\lambda}\bigl(\hat{\boldsymbol{x}}^{(2,n^{(1)})}; \hat{\boldsymbol{\beta}}^{(1)}\bigl) \right) 
        \geq \Pi,
    \end{aligned}
    \end{equation*}
    where $\hat{S}_1^{(2)}\left(\hat{\boldsymbol{x}}^{(2,n^{(1)})}, \hat{\boldsymbol{\beta}}^{(1)}\right)$ and ${\lambda}\bigl(\hat{\boldsymbol{x}}^{(2,n^{(1)})}; \hat{\boldsymbol{\beta}}^{(1)}\bigl)$ are defined in Theorem \ref{unconditional power theorem t test unpooled}.
\end{thm}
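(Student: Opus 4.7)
The plan is to mirror the proof of Theorem \ref{modified unconditional power theorem} (the pooled z-test case), reducing the pooled t-statistic to the unpooled one whose distribution was already characterized in Theorem \ref{unconditional power theorem t test unpooled}. The key algebraic observation is that $T_{pool}$ and $T_{unpool}$ share the same numerator $\hat{\mu}_1-\hat{\mu}_0$ and differ only in the variance estimator in the denominator, so
\begin{equation*}
T_{pool}^2 \;=\; T_{unpool}^2 \cdot \frac{\hat{\sigma}_1^2/N_1 + \hat{\sigma}_0^2/N_0}{\hat{s}_{pool}^2 \bigl(1/N_1 + 1/N_0\bigr)}.
\end{equation*}
Hence the unconditional power constraint $P(T_{pool}^2 > \chi^2_{\alpha,1}) \geq \Pi$ is equivalent to
\begin{equation*}
P\left(T_{unpool}^2 \;>\; \chi^2_{\alpha,1}\,\frac{\hat{s}_{pool}^2 \bigl(1/N_1 + 1/N_0\bigr)}{\hat{\sigma}_1^2/N_1 + \hat{\sigma}_0^2/N_0}\right) \;\geq\; \Pi,
\end{equation*}
which is directly analogous to the reduction carried out in the proof of Theorem \ref{modified unconditional power theorem}.

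Next, I would invoke the distributional result from the proof of Theorem \ref{unconditional power theorem t test unpooled}: under the alternative, $T_{unpool}^2$ is approximately non-central $\chi^2$ with $1$ degree of freedom, and its non-centrality parameter is estimated from the stage 1 data by ${\lambda}\bigl(\hat{\boldsymbol{x}}^{(2,n^{(1)})};\hat{\boldsymbol{\beta}}^{(1)}\bigr)$ given in equation (\ref{ncp t unpooled}). This lets the tail probability above be expressed through the CDF $F_{\chi^2}(\,\cdot\,;1,\lambda)$.

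Finally, because $\hat{s}_{pool}^2$, $\hat{\sigma}_1^2$, and $\hat{\sigma}_0^2$ are not all observable at the end of stage 1, I would replace them with the stage 1-based best guesses derived earlier: $\hat{\sigma}_0^2$ by $\hat{\sigma}_0^{2(1)}$, $\hat{\sigma}_1^2$ by $\tilde{\sigma}^2_1\bigl(\hat{\boldsymbol{x}}^{(2,n^{(1)})};\hat{\boldsymbol{\beta}}^{(1)}\bigr)$ from equation (\ref{sigma tilde}), and $\hat{s}_{pool}^2$ by $\hat{s}_{pool}^2\bigl(\hat{\boldsymbol{x}}^{(2,n^{(1)})},\hat{\boldsymbol{\beta}}^{(1)}\bigr)$ from equation (\ref{s hat pool estimated}). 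Substituting these into the equivalent constraint and combining with the outcome-goal constraint $g^{-1}\bigl(\hat{\beta}_0^{(1)}+(\hat{\boldsymbol{\beta}}_1^{(1)})^T\hat{\boldsymbol{x}}^{(2,n^{(1)})}\bigr)\geq \tilde{\theta}$ yields the optimization problem stated in Theorem \ref{pooled t modified unconditional power theorem}.

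The main obstacle is bookkeeping rather than technique: one must carefully keep track of which quantities need stage 1-based estimators (the pooled $\hat{s}_{pool}^2$ mixes both group variances, each of which is handled differently, as $\hat{\sigma}_1^{2(1)}$ already appears inside $\tilde{\sigma}_1^2$). There is no genuinely new probabilistic content beyond the reduction to Theorem \ref{unconditional power theorem t test unpooled} and the variance-ratio rescaling, exactly parallel to how Theorem \ref{modified unconditional power theorem} reduced the pooled z-test to the unpooled z-test of Theorem \ref{unconditional power theorem appendix}.
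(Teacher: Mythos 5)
Your proposal is correct and takes essentially the route the paper intends: the paper omits this proof, saying only that it follows from Theorem \ref{unconditional power theorem t test unpooled} the same way Theorem \ref{modified unconditional power theorem} follows from Theorem 1 of the main text, which is exactly your rescaling $T_{pool}^2 = T_{unpool}^2\,\bigl(\hat{\sigma}_1^2/N_1+\hat{\sigma}_0^2/N_0\bigr)\big/\bigl(\hat{s}_{pool}^2(1/N_1+1/N_0)\bigr)$, the non-central $\chi^2$ approximation of $T_{unpool}^2$ with non-centrality parameter ${\lambda}\bigl(\hat{\boldsymbol{x}}^{(2,n^{(1)})}; \hat{\boldsymbol{\beta}}^{(1)}\bigr)$, and substitution of the stage 1-based estimates $\tilde{\sigma}^2_1$, $\hat{\sigma}_0^{2(1)}$, and $\hat{s}_{pool}^2\bigl(\hat{\boldsymbol{x}}^{(2,n^{(1)})},\hat{\boldsymbol{\beta}}^{(1)}\bigr)$. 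Your only deviation is stating the outcome-goal constraint as $g^{-1}(\cdot)\geq\tilde{\theta}$, which is in fact the appropriate continuous-outcome form (the theorem's use of $\operatorname{expit}$ and $\tilde{p}$ appears to be a carry-over from the binary case).
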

The proof of Theorem \ref{pooled t modified unconditional power theorem} follows directly from the proof of Theorem \ref{unconditional power theorem t test unpooled}, the same way Theorem \ref{modified unconditional power theorem} follows from Theorem 1 of the main text. Therefore, the proof of Theorem \ref{pooled t modified unconditional power theorem} is omitted.

Next, we discuss the conditional power approach.
\begin{thm}{Two-sample t-test with unpooled variance, difference between two means, conditional power goal.}\label{two-sample t test theorem pooled}
    \textcolor{white}{xxx}\\
    Under Assumptions 2 --- 8 from the main text, the stage 2 recommended intervention ${\boldsymbol{x}}^{(2,n^{(1)})}$, subject to both an outcome goal and the conditional power goal, solves the following optimization problem:
    \begin{equation*}
    \begin{aligned}
        &\text{Min}_{\boldsymbol{x}} C\left(\boldsymbol{x}\right)
        \; \text {subject to} \; \; 
        g^{-1}\left(\hat{\beta}_0^{(1)}+\left(\hat{\boldsymbol{\beta}}_1^{(1)}\right)^T \hat{\boldsymbol{x}}^{(2,n^{(1)})}\right) \geq \tilde{\theta},
        \;\text{and}\; 
    \end{aligned}
    \end{equation*}
    \begin{equation*}
    \begin{aligned}
         &z_{\alpha / 2}  \;\hat{s}_{pool}(\hat{\boldsymbol{x}}^{\left(2, n^{(1)}\right)}, \hat{\boldsymbol{\beta}}^{(1)}) \sqrt{\frac{1}{N_1} + \frac{1}{N_0}} -\frac{S_1^{(1)}}{N_1} +\frac{S_0^{(1)}}{N_0} 
         - 
         \Delta^{(g)}\left(\hat{\boldsymbol{x}}^{(2,n^{(1)})}; \hat{\boldsymbol{\beta}}^{(1)}\right) - z_{\Pi}\hat{\sigma}^{(g)}
         \leq 0,
    \end{aligned}
    \end{equation*}
    where $\Delta^{(g)}\left(\hat{\boldsymbol{x}}^{(2,n^{(1)})}; \hat{\boldsymbol{\beta}}^{(1)}\right)$ and $\hat{\sigma}^{(g)}$ are defined in Theorem \ref{two-sample t test theorem unpooled}.
\end{thm}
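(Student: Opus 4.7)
The plan is to mirror the proof of Theorem \ref{two-sample t test theorem unpooled} (the unpooled-variance conditional power theorem), substituting the pooled-variance denominator throughout. Starting from the rejection rule, the two-sided t-test rejects $H_0: \mu_1 = \mu_0$ when $|T_{pool}| > z_{\alpha/2}$; since a positive intervention effect is anticipated, $P(T_{pool} < -z_{\alpha/2})$ under the alternative is negligible, so the conditional power goal reduces to $P(T_{pool} > z_{\alpha/2} \mid \overline{\boldsymbol{a}}^{(1)}, \overline{\boldsymbol{Y}}^{(1)}) \geq \Pi$. Rearranging the inequality $T_{pool} > z_{\alpha/2}$ to isolate the stage 2 contribution $S_1^{(2)}/N_1 - S_0^{(2)}/N_0$ on the left-hand side gives the analogue of equation~(\ref{two-sample t test conditional condition unpooled}), with the pooled standard error $\hat{s}_{pool}\sqrt{1/N_1 + 1/N_0}$ replacing $\sqrt{\hat\sigma_1^2/N_1 + \hat\sigma_0^2/N_0}$.

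Next, I would estimate the quantities that are unknown at the end of stage 1. The mean and variance of $S_1^{(2)}/N_1 - S_0^{(2)}/N_0$ are estimated, as in Theorem \ref{two-sample t test theorem unpooled}, by $\Delta^{(g)}\bigl(\hat{\boldsymbol{x}}^{(2,n^{(1)})}; \hat{\boldsymbol{\beta}}^{(1)}\bigr)$ and $\bigl(\hat\sigma^{(g)}\bigr)^2$ respectively, via the normal approximation under the fitted GLM. The pooled variance estimator $\hat{s}_{pool}$ requires the final-study estimates $\hat\sigma_1^2$ and $\hat\sigma_0^2$; for $\hat\sigma_0^2$ the stage 1 estimate $\hat\sigma_0^{2(1)}$ is used, and for $\hat\sigma_1^2$ the decomposition from equations (\ref{plus minus original})--(\ref{second term}) in the proof of Theorem \ref{unconditional power theorem t test unpooled} yields the plug-in $\tilde\sigma_1^2\bigl(\hat{\boldsymbol{x}}^{(2,n^{(1)})};\hat{\boldsymbol{\beta}}^{(1)}\bigr)$. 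Substituting these into the pooled-variance formula gives $\hat{s}_{pool}\bigl(\hat{\boldsymbol{x}}^{(2,n^{(1)})}, \hat{\boldsymbol{\beta}}^{(1)}\bigr)$ from equation (\ref{s hat pool estimated}).

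With these estimators in place, I would standardize the stage 2 contribution by subtracting $\Delta^{(g)}$ and dividing by $\hat\sigma^{(g)}$, so that the standardized quantity $G_3$ is approximately $\mathcal{N}(0,1)$ by the Central Limit Theorem, just as in the unpooled case. Applying $1 - \Phi(\cdot)$ to both sides of the resulting probability inequality and using $\Pi = 1 - \Phi(z_\Pi)$ yields, after clearing the positive factor $1/\hat\sigma^{(g)}$, the displayed pooled-variance constraint. Combining with the outcome goal $g^{-1}\bigl(\hat\beta_0^{(1)} + (\hat{\boldsymbol{\beta}}_1^{(1)})^T \hat{\boldsymbol{x}}^{(2,n^{(1)})}\bigr) \geq \tilde\theta$ gives the claimed optimization problem.

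I do not expect a serious obstacle, since the argument is structurally identical to the unpooled case: only the denominator $\sqrt{\hat\sigma_1^2/N_1 + \hat\sigma_0^2/N_0}$ is replaced by $\hat{s}_{pool}\sqrt{1/N_1 + 1/N_0}$, and the estimation of $\hat{s}_{pool}$ relies on the same stage 1 plug-ins $\tilde\sigma_1^2$ and $\hat\sigma_0^{2(1)}$ already justified in Theorem \ref{unconditional power theorem t test unpooled}. The one place to be careful is the algebraic manipulation of the pooled standard error so that the constraint is monotone in $\hat{\boldsymbol{x}}^{(2,n^{(1)})}$ through $\tilde\sigma_1^2\bigl(\hat{\boldsymbol{x}}^{(2,n^{(1)})};\hat{\boldsymbol{\beta}}^{(1)}\bigr)$ and $\Delta^{(g)}$, and this is why the omission of the explicit proof (as the authors suggest) is reasonable.
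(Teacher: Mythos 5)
Your proposal is correct and matches the paper's route exactly: the paper itself omits the detailed proof, remarking only that it follows the unpooled-variance conditional power argument (rejection rule, negligible lower tail, conditioning on stage 1 data, estimating the stage 2 contribution by $\Delta^{(g)}$ and $\hat{\sigma}^{(g)}$, and plugging the stage 1-based guesses $\tilde{\sigma}^2_1$ and $\hat{\sigma}_0^{2(1)}$ into $\hat{s}_{pool}$ of equation (\ref{s hat pool estimated})), which is precisely the substitution you carry out. Your added caution about how the constraint depends on $\hat{\boldsymbol{x}}^{(2,n^{(1)})}$ through $\tilde{\sigma}^2_1$ and $\Delta^{(g)}$ is a reasonable extra check but does not change the argument.
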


The proof of Theorem \ref{two-sample t test theorem pooled} follows the same steps as the proof of Theorem \ref{two-sample t test theorem unpooled}, with the only difference being that the pooled variance of equation (\ref{s hat pool estimated}) is used instead of the unpooled variance. Therefore, we omit the detailed proof of Theorem \ref{two-sample t test theorem pooled}.

\subsection{Power Goal of a \texorpdfstring{$P$}{p} Degree-of-Freedom Test for Continuous Outcomes}\label{wald cts outcomes section}
Section \ref{wald cts outcomes section} considers the power goal of the Wald-type $P$-df test for continuous outcomes, using the unconditional power approach. The theorem presented below modifies Theorem \ref{binary pdf thm} from Section \ref{wald binary outcomes section}, which discusses the power goal of the Wald-type $P$-df test for binary outcomes.

\begin{thm}{P-df test, continuous outcomes, unconditional power goal.}\label{cts pdf thm} 
\textcolor{white}{xxx}\\
Let $\chi^2_{\alpha, P}$ be the upper $\alpha$ quantile of the central $\chi^2$ distribution with $P$ degrees of freedom. For $\alpha=0.05$ and $P=1$, $\chi^2_{\alpha, P}=3.84$; for $\alpha=0.05$ and $P=2$, $\chi^2_{\alpha, P}=5.99$. 
Let ${\lambda}_{P,min}$ be the minimum value of the non-centrality parameter for the non-central $\chi^2$ distribution with $P$ degrees of freedom, so that for a variable $T$ from a non-central $\chi^2$ distribution with non-centrality parameter ${\lambda}_{P,min}$, the probability of $T$ exceeding $\chi^2_{\alpha, P}$ equals $\Pi$. 

Let 
\begin{equation*}\label{cts var estimate stage 1}
\begin{aligned}
&\hat{\boldsymbol{\Sigma}}(\boldsymbol{x}_1^{(2,n^{(1)})},...,\boldsymbol{x}_J^{(2,n^{(1)})}; \hat{\boldsymbol{\beta}}^{(1)})
    =\\
    &\left(\sum_{j=1}^{J^{(1)}} \frac{n_j^{(1)}}{n}\left(\left.\frac{\partial}{\partial \boldsymbol{\beta}}\right|_{\hat{\boldsymbol{\beta}}^{(1)}} g^{-1}\left(\boldsymbol{a}_j^{(1)} ; \boldsymbol{\beta}\right)\right)^{\otimes 2}+\sum_{j=1}^{J^{(2)}} \frac{n_j^{(2)}}{n}\left(\left.\frac{\partial}{\partial \boldsymbol{\beta}}\right|_{\hat{\boldsymbol{\beta}}^{(1)}} g^{-1}\left(\boldsymbol{x}_j^{(2,n^{(1)})}; \boldsymbol{\beta}\right)\right)^{\otimes 2}\right)^{-1}\\
    &\left(\frac{1}{n} \sum_{j=1}^{J^{(1)}}\left\{\left(\left.\frac{\partial}{\partial \boldsymbol{\beta}}\right|_{\hat{\boldsymbol{\beta}}^{(1)}} g^{-1}\left(\boldsymbol{a}_j^{(1)} ; \boldsymbol{\beta}\right)\right)^{\otimes 2} \sum_{i=1}^{n_j^{(1)}}\left(Y_{i j}^{(1)}-g^{-1}\left(\boldsymbol{a}_j^{(1)} ; \hat{\boldsymbol{\beta}}^{(1)}\right)\right)^2\right\} \right.\\
    & \hspace{2cm} +\frac{1}{n} \sum_{\substack{j=1\\\text{j intervention}}}^{J^{(2)}} \left\{\left(\left.\frac{\partial}{\partial \boldsymbol{\beta}}\right|_{\hat{\boldsymbol{\beta}}^{(1)}} g^{-1}\left(\boldsymbol{x}_j^{(2,n^{(1)})}; \boldsymbol{\beta}\right)\right)^{\otimes 2} 
    \hat{\sigma}^{2(1)}_1
    n_j^{(2)}
    \right\} \\
    & \hspace{2cm} \left. +\frac{1}{n} \sum_{\substack{j=1\\\text{j control}}}^{J^{(2)}} \left\{\left(\left.\frac{\partial}{\partial \boldsymbol{\beta}}\right|_{\hat{\boldsymbol{\beta}}^{(1)}} g^{-1}\left(\boldsymbol{0}; \boldsymbol{\beta}\right)\right)^{\otimes 2} 
    \hat{\sigma}^{2(1)}_0 n_j^{(2)}
    \right\} \right)\\
    &\left(\sum_{j=1}^{J^{(1)}} \frac{n_j^{(1)}}{n}\left(\left.\frac{\partial}{\partial \boldsymbol{\beta}}\right|_{\hat{\boldsymbol{\beta}}^{(1)}} g^{-1}\left(\boldsymbol{a}_j^{(1)}; \boldsymbol{\beta}\right)\right)^{\otimes 2}+\sum_{j=1}^{J^{(2)}} \frac{n_j^{(2)}}{n}\left(\left.\frac{\partial}{\partial \boldsymbol{\beta}}\right|_{\hat{\boldsymbol{\beta}}^{(1)}} g^{-1}\left(\boldsymbol{x}_j^{(2,n^{(1)})}; \boldsymbol{\beta}\right)\right)^{\otimes 2}\right)^{-1}
    ,
\end{aligned}
\end{equation*} 
where $Z^{\otimes2} = ZZ^T$ is the Kronecker product, and
$\hat{\sigma}_0^{2{(1)}}$ and $\hat{\sigma}_1^{2{(1)}}$ are the stage 1-based estimated variances of the outcome in the control and intervention groups, respectively.

Let
$\left({\hat{\boldsymbol{\Sigma}}(\boldsymbol{x}_1^{(2,n^{(1)})},...,\boldsymbol{x}_J^{(2,n^{(1)})}; \hat{\boldsymbol{\beta}}^{(1)})}\right)^{-1}_{\boldsymbol{\beta}_1}$ be the sub-matrix of $\left({\hat{\boldsymbol{\Sigma}}(\boldsymbol{x}_1^{(2,n^{(1)})},...,\boldsymbol{x}_J^{(2,n^{(1)})}; \hat{\boldsymbol{\beta}}^{(1)})}\right)^{-1}$ relevant only to $\boldsymbol{\beta}_1$. 
Let 
\begin{equation*}
\begin{aligned}
{\lambda}\bigl(\boldsymbol{x}_1^{(2,n^{(1)})},...,\boldsymbol{x}_J^{(2,n^{(1)})}; \hat{\boldsymbol{\beta}}^{(1)}\bigl) = n\; \left(\hat{\boldsymbol{\beta}}_1^{(1)}\right)^T \left({\hat{\boldsymbol{\Sigma}}(\boldsymbol{x}_1^{(2,n^{(1)})},...,\boldsymbol{x}_J^{(2,n^{(1)})}; \hat{\boldsymbol{\beta}}^{(1)})}\right)^{-1}_{\boldsymbol{\beta}_1} \hat{\boldsymbol{\beta}}_1^{(1)}.
\end{aligned}
\end{equation*}

Under Assumptions 2 --- 8 from the main text, the stage 2 recommended interventions \\$\boldsymbol{x}_1^{(2,n^{(1)})},...,\boldsymbol{x}_J^{(2,n^{(1)})}$, subject to both an outcome goal and an unconditional power goal for the $P$-df test, are the solution to the following optimization problem: 
    \begin{equation*}
    \begin{aligned}
    &\text{Min}_{\boldsymbol{x}_1,..., \boldsymbol{x}_{j}} C\left(\boldsymbol{x}_1, ..., \boldsymbol{x}_{j}\right)
    \; \text {subject to} \; 
    \frac{1}{J}\sum_{j=1}^{J}
    g^{-1}(\hat{\beta}_{0}^{(1)}+(\hat{\boldsymbol{\beta}}^{(1)}_1)^T {\boldsymbol{x}_j^{(2,n^{(1)})}}) \geq \tilde{\theta}, \\
    &\;\;\text{and}\; 
     {\lambda}\bigl(\boldsymbol{x}_1^{(2,n^{(1)})},...,\boldsymbol{x}_J^{(2,n^{(1)})}; \hat{\boldsymbol{\beta}}^{(1)}\bigl) \geq {\lambda}_{P,min}.
    \end{aligned}
    \end{equation*}
\end{thm}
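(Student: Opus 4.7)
The plan is to mirror the proof of Theorem \ref{binary pdf thm}, with two substantive changes to accommodate continuous outcomes modeled by the GLM of Assumption 2: (i) the information-like matrix is replaced by its GLM analogue built from $\partial g^{-1}/\partial\boldsymbol{\beta}$, and (ii) the residual-variance term in the sandwich estimator must be estimated from the stage 1 data because, unlike in the Bernoulli case, the conditional variance is not a deterministic function of the mean. By Theorem \ref{asymptotic properties of beta theorem} (whose extension to continuous outcomes is proven in \citet{bing2023learnasyougo}), $\hat{\boldsymbol{\beta}}^{(2)}$ is asymptotically normal, so the Wald statistic $W = n(\hat{\boldsymbol{\beta}}_1^{(2)})^{T}(\hat{\boldsymbol{\Sigma}}_{\boldsymbol{\beta}_1}^{(2)})^{-1}\hat{\boldsymbol{\beta}}_1^{(2)}$ is asymptotically central $\chi^2_P$ under $H_0$ and non-central $\chi^2_P$ under $H_1$, with non-centrality parameter $\lambda(\hat{\boldsymbol{\beta}}^{(2)}) = n(\hat{\boldsymbol{\beta}}_1^{(2)})^{T}(\hat{\boldsymbol{\Sigma}}_{\boldsymbol{\beta}_1}^{(2)})^{-1}\hat{\boldsymbol{\beta}}_1^{(2)}$.

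First, I would note that at the end of stage 1, $\hat{\boldsymbol{\beta}}^{(2)}$, $\hat{\boldsymbol{\Sigma}}^{(2)}$, and the stage 2 actual interventions $\boldsymbol{A}_j^{(2,n^{(1)})}$ are all unobserved. The natural substitutions are $\hat{\boldsymbol{\beta}}^{(2)} \leftarrow \hat{\boldsymbol{\beta}}^{(1)}$ and $\boldsymbol{A}_j^{(2,n^{(1)})} \leftarrow \boldsymbol{x}_j^{(2,n^{(1)})}$, justified by the assumption of no unplanned variation (and, in general, by continuity of the $h_j^{(2)}$ together with Assumption 4). Next, I would invoke the sandwich form of the asymptotic variance for M-estimators in the GLM as derived in \citet{bing2023learnasyougo}: $\hat{\boldsymbol{\Sigma}}^{(2)} = A_n^{-1} B_n A_n^{-1}$, where $A_n$ is the scaled sum of outer products $(\partial g^{-1}/\partial\boldsymbol{\beta})^{\otimes 2}$ over all participants and $B_n$ is the same sum weighted by squared residuals $(Y_{ij} - g^{-1}(\cdot;\boldsymbol{\beta}))^2$. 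For stage 1 centers the residuals are directly observable; for stage 2 centers they are not, but by Assumption 5 the error distribution does not depend on the intervention composition, so the per-participant expected squared residual in each arm is arm-specific only, and can be consistently estimated using the stage 1 residual variances $\hat{\sigma}^{2(1)}_1$ (intervention) and $\hat{\sigma}^{2(1)}_0$ (control). Substituting these pieces yields exactly the expression for $\hat{\boldsymbol{\Sigma}}(\boldsymbol{x}_1^{(2,n^{(1)})},\ldots,\boldsymbol{x}_J^{(2,n^{(1)})};\hat{\boldsymbol{\beta}}^{(1)})$ in the statement, and hence the estimated non-centrality $\lambda(\boldsymbol{x}_1^{(2,n^{(1)})},\ldots,\boldsymbol{x}_J^{(2,n^{(1)})};\hat{\boldsymbol{\beta}}^{(1)})$.

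Third, the unconditional power constraint $P(W > \chi^2_{\alpha,P}) \geq \Pi$ under $H_1$ is, by the non-central $\chi^2_P$ approximation, equivalent to $\lambda(\hat{\boldsymbol{\beta}}^{(2)}) \geq \lambda_{P,min}$; after substituting the stage 1 estimates this becomes the second constraint in the theorem. For the outcome goal, since the interventions are center-specific and the 1-df outcome constraint of equation (4) from the main text is per-center, averaging over the $J$ intervention-arm centers of stage 2 mirrors exactly the rationale given for Theorem \ref{binary pdf thm}: the overall mean outcome in the intervention arm is the arithmetic average of the $g^{-1}(\hat{\beta}_0^{(1)} + (\hat{\boldsymbol{\beta}}_1^{(1)})^T \boldsymbol{x}_j^{(2,n^{(1)})})$, which the constraint $(1/J)\sum_j g^{-1}(\cdots) \geq \tilde{\theta}$ enforces. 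Combining the two constraints with minimization of $C(\boldsymbol{x}_1,\ldots,\boldsymbol{x}_J)$ delivers the stated optimization problem.

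The main obstacle is the second step: legitimizing the replacement of the (unobserved) stage 2 squared residuals by $\hat{\sigma}^{2(1)}_1 n_j^{(2)}$ in the intervention arm and $\hat{\sigma}^{2(1)}_0 n_j^{(2)}$ in the control arm inside $B_n$. This requires Assumption 5 (independence of the error distribution from $\boldsymbol{a}_j$), together with the convergence $\hat{\sigma}^{2(1)}_a \xrightarrow{P} \mathrm{Var}(\epsilon \mid \text{arm } a)$ for $a \in \{0,1\}$ as $n^{(1)} \to \infty$, which follows from standard M-estimation theory for the GLM under Assumptions 2, 5, and 6. A secondary subtlety is that $\hat{\boldsymbol{\Sigma}}(\boldsymbol{x}_1^{(2,n^{(1)})},\ldots,\boldsymbol{x}_J^{(2,n^{(1)})};\hat{\boldsymbol{\beta}}^{(1)})$ must depend continuously on $\hat{\boldsymbol{\beta}}^{(1)}$ and on $\boldsymbol{x}_j^{(2,n^{(1)})}$ so that Assumption 4 propagates to $\lambda(\cdot)$; this follows from twice continuous differentiability of $g$ (Assumption 2) and is needed to invoke Theorem \ref{asymptotic properties of beta theorem} in the $P$-df continuous-outcome setting.
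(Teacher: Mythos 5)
Your proposal is correct and takes essentially the same approach the paper intends: the paper omits this proof, noting only that it parallels the binary $P$-df case (Theorem \ref{binary pdf thm}), which is exactly the substitution-plus-noncentral-$\chi^2$ argument you give (replace $\hat{\boldsymbol{\beta}}^{(2)}$ and the unobserved stage 2 actual interventions by $\hat{\boldsymbol{\beta}}^{(1)}$ and the recommended interventions, convert $P(W>\chi^2_{\alpha,P})\geq\Pi$ into $\lambda\geq\lambda_{P,min}$, and combine with the averaged outcome goal). Your treatment of the only continuous-outcome-specific step — estimating the middle term of the sandwich variance by plugging in the stage 1 arm-specific residual variances $\hat{\sigma}_1^{2(1)}$ and $\hat{\sigma}_0^{2(1)}$, justified via Assumption 5 — matches the $\hat{\boldsymbol{\Sigma}}$ displayed in the theorem statement, so the proposal faithfully fills in the omitted proof.
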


The proof of Theorem \ref{cts pdf thm} is similar to the proof of Theorem \ref{binary pdf thm} and is therefore omitted.

\section{Computing Stage 1 Recommended Interventions Based on Pre-trial Information}\label{both lago stages}

Section \ref{both lago stages} describes the calculation of the stage 1 recommended intervention, ${\boldsymbol{x}}^{(1)}$ based on pre-trial information, considering both an outcome goal and a power goal.  

The calculation of ${\boldsymbol{x}}^{(1)}$ begins with initial guesses of the coefficient estimates $\boldsymbol{\beta}$, denoted as $\bigl({\boldsymbol{\beta}}^{(0)}\bigr)^T = \bigl( {\beta}_0^{(0)}, \bigl({\boldsymbol{\beta}}_1^{(0)}\bigr)^T \bigr)$. ${\boldsymbol{\beta}}^{(0)}$ may come from prior research or through iterative discussions with subject matter experts. Experts typically propose initial values of ${\boldsymbol{\beta}}^{(0)}$ based on their domain knowledge and relevant studies. Statisticians then conduct preliminary analyses, such as calculating projected means or probabilities, and review these results with experts to ensure both practicality and alignment with trial objectives.  
After establishing the initial guess ${\boldsymbol{\beta}}^{(0)}$, the stage 1 recommended intervention ${\boldsymbol{x}}^{(1)}$ can be obtained by solving:
\begin{equation}\label{stage 1 rec int optimization}
\text{Min}_{\boldsymbol{x}} C\left(\boldsymbol{x}\right)
\; \text {such that} 
\begin{cases}
&{p}_{\boldsymbol{x}}\left( {\boldsymbol{\beta}}^{(0)} \right) \geq \tilde{p} \;\; 
\text{or} \;\; {E}\left(Y_{ij} | \boldsymbol{x}; {\boldsymbol{\beta}}^{(0)} \right) \geq \tilde{\mu} \;\;
, \\
&{Power}\left(\boldsymbol{x}; {\boldsymbol{\beta}}^{(0)} \right) \geq \Pi.
\end{cases}
\end{equation}

In planning LAGO stage 1, we assume that recommended interventions will be used for both stage 1 and stage 2.
Thus, power goal calculations are based on $\boldsymbol{\beta}^{(0)}$, assuming all stage 1 and stage 2 centers in the intervention group were to implement these recommended interventions.  
While methods for calculating power during LAGO stage 1 planning are similar to conventional power calculations, LAGO trials uniquely calculate the intervention component compositions by considering the power goal, the outcome goal, and the cost simultaneously.
The LAGO design ensures that if observed stage 1 data indicate unmet pre-trial expectations, stage 2 recommended interventions will be adjusted to better align with these expectations.

We present methods for calculating stage 1 recommended interventions based on an outcome goal and a power goal of a two-sample z-test for proportions in the context of LAGO trials with a binary outcome.

First, consider the outcome goal of equation (\ref{stage 1 rec int optimization}), ${p}_{\boldsymbol{x}}\bigl( {\boldsymbol{\beta}}^{(0)} \bigl) \geq \tilde{p}$, which implies that $\boldsymbol{x}^{(1)}$ needs to satisfy 
\begin{equation}\label{binary stage 1 rec outcome goal}
\operatorname{expit}\left({\beta}_0^{(0)}+\left({\boldsymbol{\beta}}_1^{(0)}\right)^T \boldsymbol{x}^{(1)}\right)
\geq \tilde{p}.    
\end{equation}

Next, consider the same power goal as from equation (\ref{stage 1 rec int optimization}). Unlike Section 3 of the main text, there is no conditional approach for incorporating such power goal as there is no data to be conditioned on before the start of the trial. 
Theorem \ref{unconditional power theorem stage 1} presents the unconditional power approach.

\begin{thm}\label{unconditional power theorem stage 1}
    \textcolor{white}{xxx}\\
    Let $\chi^2_{\alpha, 1}$ be the upper $\alpha$ quantile of the central $\chi^2$ distribution with 1 degree of freedom. 
    For $\alpha=0.05$, $\chi^2_{\alpha, 1}=3.84$. 
    Let ${\lambda}_{min}$ be the minimum value of the non-centrality parameter for the non-central $\chi^2$ distribution with 1 degree of freedom, so that for a variable $T$ from a non-central $\chi^2$ distribution with non-centrality parameter ${\lambda}_{min}$, the probability of $T$ exceeding $\chi^2_{\alpha, 1}$ equals $\Pi$. 
    Let
    \begin{equation*}
        p_{\boldsymbol{x}^{(1)}}\left({\boldsymbol{\beta}}^{(0)}\right) = \operatorname{expit}\left({\beta}_0^{(0)}+\left({\boldsymbol{\beta}}_1^{(0)}\right)^T \boldsymbol{x}^{(1)}\right).
    \end{equation*}
    Under Assumptions 1, 3, 4, 5, 7 and 8 from the main text, the stage 1 recommended intervention ${\boldsymbol{x}}^{(1)}$, subject to both an outcome goal and a power goal, solves the following optimization problem: 
    \begin{equation*}
        \text{Min}_{\boldsymbol{x}} C\left(\boldsymbol{x}\right)
        \; \text {subject to} \; 
        \operatorname{expit}\left({\beta}_{0}^{(0)}+({\boldsymbol{\beta}}^{(0)}_1)^T {\boldsymbol{x}^{(1)}} \right)
        \geq \tilde{p}, \;\text{and}
    \end{equation*}
    \begin{equation*}
    \begin{aligned}
        \frac{ p_{\boldsymbol{x}^{(1)}}\left({\boldsymbol{\beta}}^{(0)}\right)  - \operatorname{expit}\left( {\beta}_0^{(0)}\right) }{\sqrt{ p_{\boldsymbol{x}^{(1)}}\left({\boldsymbol{\beta}}^{(0)}\right) \left.\left( 1- p_{\boldsymbol{x}^{(1)}}\left({\boldsymbol{\beta}}^{(0)}\right) \right) \right/{N_1} + \operatorname{expit}\left( {\beta}_0^{(0)}\right) \left.\left( 1- \operatorname{expit}\left( {\beta}_0^{(0)}\right) \right) \right/{N_0}}} \geq {\lambda}_{min}.
    \end{aligned}
    \end{equation*}
\end{thm}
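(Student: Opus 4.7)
The plan is to adapt the proof of Theorem 1 from the main text to the pre-trial planning setting. Because no data has been collected yet, the initial parameter guess $\boldsymbol{\beta}^{(0)}$ plays the role of $\hat{\boldsymbol{\beta}}^{(1)}$, and the actual intervention in all intervention-arm centers is treated as equal to the recommended $\boldsymbol{x}^{(1)}$ (since, at planning, both stages are assumed to implement $\boldsymbol{x}^{(1)}$, as stated in the paragraph preceding the theorem). Accordingly, the stage-1 sums $S_1^{(1)}, S_0^{(1)}$ that appear in Theorem 1 are replaced with their model-based expectations under $\boldsymbol{\beta}^{(0)}$.

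First, I would handle the outcome goal. Under Assumption 1, the success probability in a center receiving $\boldsymbol{x}^{(1)}$ is $p_{\boldsymbol{x}^{(1)}}(\boldsymbol{\beta}^{(0)}) = \operatorname{expit}(\beta_0^{(0)} + (\boldsymbol{\beta}_1^{(0)})^T \boldsymbol{x}^{(1)})$, so requiring this quantity to be at least $\tilde p$ gives the first constraint directly.

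Second, I would handle the power goal by mirroring the derivation in the proof of Theorem 1. Under the alternative hypothesis, standard asymptotic theory for the two-sample z-test for two proportions (equation (5) of the main text) implies that $Z^2$ is approximately distributed as a non-central $\chi^2$ with 1 degree of freedom and non-centrality parameter $\lambda$, so that $P(Z^2 > \chi^2_{\alpha,1}) \geq \Pi$ is equivalent to requiring $\lambda \geq \lambda_{min}$ (in the same sense as in Theorem 1). In the pre-trial setting, $\hat p_1$ and $\hat p_0$ are replaced by their model-based values $p_{\boldsymbol{x}^{(1)}}(\boldsymbol{\beta}^{(0)})$ and $\operatorname{expit}(\beta_0^{(0)})$, and the variances in the denominator are estimated with these same model-based proportions over the total sample sizes $N_1$ and $N_0$ (which now represent the total across both planned stages). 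Substituting into the expression for the non-centrality parameter from equation (\ref{theorem 1 lambda}), with all stage-1 observed sums removed since no data has been collected, yields exactly the stated inequality.

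The main obstacle is to carefully justify the replacements. The unconditional power approach in Theorem 1 kept $S_0^{(1)}, S_1^{(1)}$ as observed sums and only used $\boldsymbol{\beta}^{(0)}$ to estimate the stage 2 contribution; here, both stages' contributions must be projected forward using $\boldsymbol{\beta}^{(0)}$ alone. One must argue that since the trial has not started, the best estimate of each center's eventual success rate is the model-based value under $\boldsymbol{\beta}^{(0)}$, and that under the assumption of implementing $\boldsymbol{x}^{(1)}$ throughout, the pooled stage-1-plus-stage-2 sums telescope into a single model-based probability times the total sample size. I would also note the mild notational issue that the inequality as stated compares the (unsquared) standardized mean difference against $\lambda_{min}$; interpreting $\lambda_{min}$ as the non-centrality parameter (i.e., the squared standardized mean) as defined in Theorem 1 then gives an exact parallel with equation (\ref{theorem 1 lambda}), and the remainder of the argument goes through verbatim.
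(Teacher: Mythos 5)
Your proposal is correct and follows essentially the route the paper intends: the paper omits this proof, stating it ``follows similar arguments to the proof of Theorem 1,'' and your adaptation—replacing $\hat{\boldsymbol{\beta}}^{(1)}$ by $\boldsymbol{\beta}^{(0)}$, dropping the observed stage~1 sums, and projecting both stages forward under the planning assumption that $\boldsymbol{x}^{(1)}$ is implemented throughout so the pooled arm proportions reduce to $p_{\boldsymbol{x}^{(1)}}(\boldsymbol{\beta}^{(0)})$ and $\operatorname{expit}(\beta_0^{(0)})$—is exactly that argument. Your remark about the unsquared standardized difference being compared to $\lambda_{min}$ correctly identifies a notational slip in the stated theorem (the non-centrality parameter in Theorem~1 is the squared quantity), and your resolution is the right reading.
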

The proof of Theorem \ref{unconditional power theorem stage 1} is omitted here since it follows similar arguments to the proof of Theorem 1 from the main text.

\section{Simulations: Additional Details and Results} \label{additional simulation tables}
Section \ref{cubic cost motivation section} describes the motivation behind using a cubic cost function in health care settings, and provides details of the cubic cost function used in Section 5 of the main text. 
Section \ref{simulation scenario 1a appendix} provides the full simulation results for Scenario 1a.
Section \ref{simulation scenario 1b appendix} provides simulation results for Scenario 1b. 
Section \ref{simulation scenario 2a appendix} provides the full simulation results for Scenario 2a.
Section \ref{simulation scenario 2b appendix} provides simulation results for Scenario 2b.

\subsection{The Cubic Cost Function}\label{cubic cost motivation section}
As described in Section 5 of the main text, we considered a cubic cost function. The idea behind using a cubic cost function is to mimic the cost function to be closer to the real world setting where cost initially has a economy of scale, followed by increasing marginal cost once the intervention component levels are above certain thresholds. This type of cost function is particularly useful in the context of implementation science in health-related fields (see Appendix Section E.5 of \citet{bing2023learnasyougo}). 

The cubic cost function considered in the simulations was $C(\boldsymbol{x})=2x_1^3-1.19x_1^2+10x_1+10 + 0.1x_2^3-0.2x_2^2+2x_2$. Figure \ref{fig:cubic-cost} shows the pattern of the initial economy of scale followed by increasing marginal cost after certain thresholds for both components $x_1$ and $x_2$. 

\begin{figure}[hbtp!]
    \centering
    \includegraphics[width=1\linewidth]{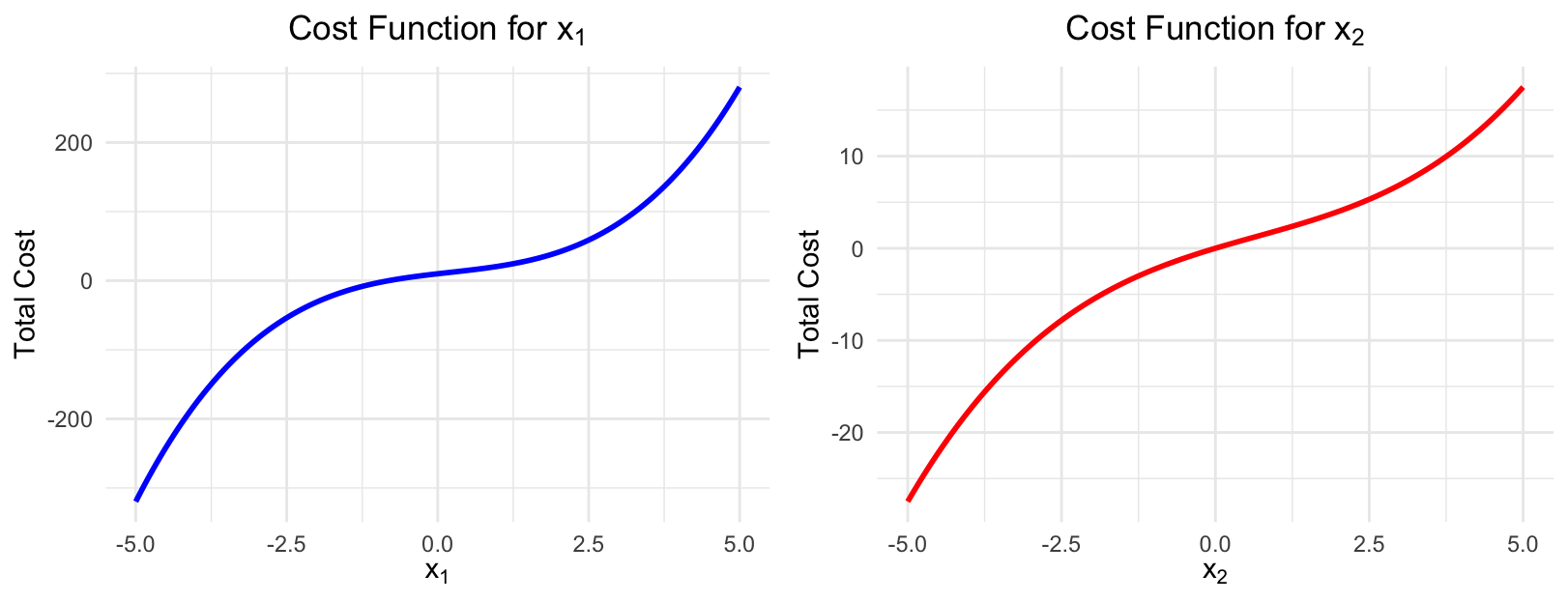}
    \caption{Cubic cost functions for $x_1$ and $x_2$.}
    \label{fig:cubic-cost}
\end{figure}

\newpage 
\subsection{Scenario 1a Additional Results} \label{simulation scenario 1a appendix}
Appendix \ref{simulation scenario 1a appendix} provides the full results for Simulation Scenario 1a.

Scenario 1a compared two optimization criteria for calculating the stage 2 recommended interventions: based on only an outcome goal, and based on both an outcome goal and a power goal of the two-sample z-test for the difference between two proportions. 
The total number of centers in both the intervention group and the control group was $J = 4$ for each stage.
The per-center sample sizes considered were $n_j^{(1)}=n_j^{(2)}=40, 50, 60, \text{and}\; 100$.
The minimum and maximum values of $x_1$ and $x_2$ were $[L_1, U_1] = [0,2]$ and $[L_2, U_2] = [0,8]$, respectively. 
Stage 1 of the two-stage LAGO trial used a fractional factorial design, setting the two-component intervention package ($x_1$, $x_2$) at $(0,0)$ in the control group, and $(1,0)$, $(0,4)$, $(1,4)$ in the intervention group.
The true coefficients were $(\beta^*_0, \beta^*_{11}, \beta^*_{12}) = (0.1, 0.3, 0.15)$.
The model for the binary outcome was $\operatorname{logit}\left(\operatorname{pr}\left(Y_{i j}=1 \mid \boldsymbol{A}=\right. \left.\boldsymbol{a}, \boldsymbol{X}=\boldsymbol{x}; \boldsymbol{\beta}\right)\right)
= \beta_0+\boldsymbol{\beta}_1^T \boldsymbol{a}.$
The power goals considered were $\Pi = 0.80, 0.85, 0.90 \;\text{and}\; 0.95$.
The cubic cost function of Section \ref{cubic cost motivation section} for the intervention package was considered:
$C(\boldsymbol{x})=2x_1^3-1.19x_1^2+10x_1+10 + 0.1x_2^3-0.2x_2^2+2x_2$.
The outcome goal was $\tilde{p} = 0.7$, so that the true optimal intervention based on the outcome goal only was $(0.5, 4.0)$ within two decimal places. 

\newpage
\begin{table}[h!]
\renewcommand{\arraystretch}{0.5}
    \caption{
    Scenario 1a simulation results for individual package component effects and power in LAGO trials with binary outcomes and a cubic cost function.}
    \vspace{0.4cm}
    \centering
    \scriptsize{
        \begin{tabular}{p{0.1in}p{0.1in}lp{0.25in}lllllllllllll}
          &  &  &  &  & \multicolumn{3}{c}{$\hat{\beta}_{11}$} &  & \multicolumn{3}{c}{$\hat{\beta}_{12}$} &  &  \\ \cline{6-8} \cline{10-12}
            $n_j^{(1)}$ & $n_j^{(2)}$ & \begin{tabular}[c]{@{}l@{}}\%\\ Power\\ Goal\end{tabular} & \begin{tabular}[c]{@{}l@{}}Power\\ Goal\\ Approach\\ (U/C)\end{tabular} &  & \begin{tabular}[c]{@{}l@{}}\%Rel\\ Bias\end{tabular} & \begin{tabular}[c]{@{}l@{}}SE/\\ EMP.SD\\ ($\times 100$)\end{tabular} & CP95 &  & \begin{tabular}[c]{@{}l@{}}\%Rel\\ Bias\end{tabular} & \begin{tabular}[c]{@{}l@{}}SE/\\ EMP.SD\\ ($\times 100$)\end{tabular} & CP95 &  & \begin{tabular}[c]{@{}l@{}}\%\\ Power\end{tabular} \\ \hline
            40 & 40 & — & --- &  & 3.23 & 101.3 & 95.4 &  & 1.20 & 96.4 & 95.0 &  & 68.8* \\
             &  & 80 & C &  & 4.08 & 101.8 & 95.4 &  & 2.63 & 94.8 & 94.9 &  & 81.2 \\
             &  &  & U &  & 4.23 & 102.3 & 95.5 &  & 2.89 & 95.2 & 95.0 &  & 83.2 \\
             &  & 85 & C &  & 4.02 & 101.9 & 95.5 &  & 2.77 & 95.0 & 94.9 &  & 82.6 \\
             &  &  & U &  & 4.31 & 102.7 & 95.7 &  & 3.27 & 95.7 & 95.0 &  & 84.4 \\
             &  & 90 & C &  & 4.19 & 102.6 & 95.6 &  & 3.06 & 95.5 & 95.0 &  & 83.9 \\
             &  &  & U &  & 3.68 & 103.2 & 95.6 &  & 3.59 & 96.6 & 95.0 &  & 85.6 \\
             &  & 95 & C &  & 3.80 & 103.1 & 95.7 &  & 3.60 & 96.5 & 95.1 &  & 85.5 \\
             &  &  & U &  & 3.07 & 102.9 & 95.7 &  & 4.07 & 98.2 & 95.1 &  & 87.8 \\ \hline
            50 & 50 & — & --- &  & 1.09 & 99.9 & 94.5 &  & 0.94 & 97.8 & 94.9 &  & 79.3* \\
             &  & 80 & C &  & 1.14 & 100.0 & 94.5 &  & 1.54 & 97.0 & 95.1 &  & 87.4 \\
             &  &  & U &  & 1.59 & 100.0 & 94.5 &  & 1.75 & 97.1 & 94.9 &  & 88.7 \\
             &  & 85 & C &  & 1.47 & 100.1 & 94.5 &  & 1.71 & 96.9 & 95.0 &  & 88.2 \\
             &  &  & U &  & 2.06 & 100.1 & 94.6 &  & 1.99 & 97.2 & 95.0 &  & 89.9 \\
             &  & 90 & C &  & 1.74 & 100.1 & 94.5 &  & 1.80 & 97.3 & 94.9 &  & 89.5 \\
             &  &  & U &  & 2.36 & 100.3 & 94.6 &  & 2.28 & 98.1 & 95.1 &  & 91.3 \\
             &  & 95 & C &  & 2.37 & 100.2 & 94.7 &  & 2.18 & 97.7 & 95.0 &  & 90.8 \\
             &  &  & U &  & 2.01 & 100.7 & 94.9 &  & 2.55 & 99.0 & 95.0 &  & 92.8 \\ \hline
            60 & 60 & — & --- &  & 4.42 & 103.3 & 95.6 &  & 2.47 & 103.6 & 96.1 &  & 87.5* \\
             &  & 80 & C &  & 5.02 & 102.7 & 95.5 &  & 2.44 & 103.6 & 96.0 &  & 92.5 \\
             &  &  & U &  & 5.15 & 103.0 & 95.6 &  & 2.43 & 103.8 & 96.0 &  & 93.2 \\
             &  & 85 & C &  & 5.15 & 103.0 & 95.5 &  & 2.45 & 103.7 & 96.1 &  & 93.0 \\
             &  &  & U &  & 5.26 & 102.9 & 95.5 &  & 2.45 & 104.3 & 96.2 &  & 94.0 \\
             &  & 90 & C &  & 5.25 & 102.9 & 95.5 &  & 2.47 & 104.1 & 96.0 &  & 93.7 \\
             &  &  & U &  & 5.19 & 102.7 & 95.2 &  & 2.56 & 104.7 & 96.2 &  & 95.1 \\
             &  & 95 & C &  & 5.09 & 102.7 & 95.3 &  & 2.57 & 104.5 & 96.1 &  & 94.9 \\
             &  &  & U &  & 5.73 & 103.2 & 95.2 &  & 2.68 & 105.7 & 96.1 &  & 96.4 \\ \hline
            100 & 100 & --- & --- &  & 5.18 & 99.6 & 94.3 &  & 3.42 & 98.7 & 95.7 &  & 98.2*
            \end{tabular}
    }
    \label{simulation 1a table 1}
    \vspace{0.4cm}
    \scriptsize \\
    {
    \setlength{\baselineskip}{0.5\baselineskip}
        \raggedright{
            2000 simulated datasets.  
            Number of centers for each stage: $J$ = 4. 
            True coefficients: $(\beta^*_0, \beta^*_{11}, \beta^*_{12}) = (0.1, 0.3, 0.15)$. \\
            A fractional factorial design was used in stage 1 with intervention package components set to: (0,0), (1,0), (0,4), and (1,4), and true success probabilities = 0.525, 0.599, 0.668, and 0.731.\\ 
            Cost function (cubic): $C(\boldsymbol{x})=2x_1^3-1.19x_1^2+10x_1+10 + 0.1x_2^3-0.2x_2^2+2x_2$. 
            Ranges: for $x_1$, $[0,2]$, for $x_2$, $[0,8]$. \\ 
            Outcome goal: $\Tilde{p}$ = 0.7. 
            No unplanned variation in the interventions.\\
            \medskip
            $^*$: percent power calculated using the LAGO design with only an outcome goal.\\
            $n_j^{(1)}$: number of participants in each center $j$ at stage 1. $n_j^{(2)}$: number of participants in each center $j$ at stage 2.\\
            \%Power Goal: percent power goal of the two-sample z-test for the difference between two proportions. For scenarios with only an outcome goal, the \%Power Goal is set to ---.  \\
            Power Goal Approach (U/C): U -- the unconditional power approach, C -- the conditional power approach. \\ 
            \%RelBias: percent relative bias $|100(\hat{\beta}-\beta^\star)/\beta^\star|$.\\
            SE: mean estimated standard error, 
            EMP.SD: empirical standard deviation.\\
            CP95: empirical coverage rate of the 95\% confidence intervals.\\
            \%Power: percent power of the two-sample z-test for the difference between two proportions at the end of the LAGO trial.\\
        }}
\end{table}

\newpage
\begin{table}[h]
\renewcommand{\arraystretch}{0.5}
    \caption{
    Scenario 1a simulation results for the estimated optimal intervention in LAGO trials with binary outcomes and a cubic cost function.}
    \vspace{0.4cm}
    \centering
    \scriptsize{
        \begin{tabular}{llllllllllll}
         &  &  &  &  & \multicolumn{3}{c}{Stage 1} &  & \multicolumn{3}{c}{Stage 2 / LAGO Optimized} \\ \cline{6-8} \cline{10-12} 
        $n_j^{(1)}$ & $n_j^{(2)}$ & \begin{tabular}[c]{@{}l@{}}\%\\ Power\\ Goal\end{tabular} & \begin{tabular}[c]{@{}l@{}}Power\\ Goal\\ Approach\\ (U/C)\end{tabular} &  & \begin{tabular}[c]{@{}l@{}}\%Rel\\ Bias\\ $\hat{x}_1^{opt}$\end{tabular} & \begin{tabular}[c]{@{}l@{}}\%Rel\\ Bias\\ $\hat{x}_2^{opt}$\end{tabular} & \begin{tabular}[c]{@{}l@{}}PrOpt\\ (Q2.5,Q97.5)\end{tabular} &  & \begin{tabular}[c]{@{}l@{}}\%Rel\\ Bias\\ $\hat{x}_1^{opt}$\end{tabular} & \begin{tabular}[c]{@{}l@{}}\%Rel\\ Bias\\ $\hat{x}_2^{opt}$\end{tabular} & \begin{tabular}[c]{@{}l@{}}PrOpt\\ (Q2.5,Q97.5)\end{tabular} \\ \hline
        40 & 40 & --- & --- &  & 3.5 & 23.8 & (0.570, 0.779) &  & 12.9 & 13.3 & (0.606, 0.788) \\
         &  & 80 & C &  &  &  &  &  & 11.5 & 14.1 & (0.605, 0.786) \\
         &  &  & U &  &  &  &  &  & 11.1 & 14.2 & (0.605, 0.786) \\
         &  & 85 & C &  &  &  &  &  & 11.4 & 14.2 & (0.605, 0.786) \\
         &  &  & U &  &  &  &  &  & 10.6 & 14.3 & (0.605, 0.786) \\
         &  & 90 & C &  &  &  &  &  & 11.0 & 14.3 & (0.605, 0.786) \\
         &  &  & U &  &  &  &  &  & 9.6 & 14.2 & (0.604, 0.786) \\
         &  & 95 & C &  &  &  &  &  & 9.9 & 14.2 & (0.604, 0.786) \\
         &  &  & U &  &  &  &  &  & 7.3 & 14.4 & (0.602, 0.784) \\ \hline
        50 & 50 & --- & --- &  & 9.5 & 19.7 & (0.581, 0.767) &  & 8.8 & 11.0 & (0.612, 0.786) \\
         &  & 80 & C &  &  &  &  &  & 7.3 & 11.1 & (0.616, 0.775) \\
         &  &  & U &  &  &  &  &  & 7.1 & 11.3 & (0.615, 0.777) \\
         &  & 85 & C &  &  &  &  &  & 7.3 & 11.4 & (0.615, 0.774) \\
         &  &  & U &  &  &  &  &  & 6.8 & 11.5 & (0.614, 0.774) \\
         &  & 90 & C &  &  &  &  &  & 6.9 & 11.4 & (0.615, 0.775) \\
         &  &  & U &  &  &  &  &  & 5.9 & 11.5 & (0.612, 0.775) \\
         &  & 95 & C &  &  &  &  &  & 5.8 & 11.5 & (0.612, 0.773) \\
         &  &  & U &  &  &  &  &  & 4.7 & 11.5 & (0.612, 0.770) \\ \hline
        60 & 60 & --- & --- &  & 4.1 & 15.9 & (0.593, 0.772) &  & 7.6 & 10.4 & (0.629, 0.759) \\
         &  & 80 & C &  &  &  &  &  & 6.8 & 11.0 & (0.628, 0.757) \\
         &  &  & U &  &  &  &  &  & 6.4 & 10.8 & (0.628, 0.757) \\
         &  & 85 & C &  &  &  &  &  & 6.2 & 10.8 & (0.628, 0.757) \\
         &  &  & U &  &  &  &  &  & 5.8 & 10.7 & (0.629, 0.757) \\
         &  & 90 & C &  &  &  &  &  & 5.9 & 10.7 & (0.629, 0.757) \\
         &  &  & U &  &  &  &  &  & 5.3 & 10.6 & (0.629, 0.757) \\
         &  & 95 & C &  &  &  &  &  & 5.3 & 10.6 & (0.628, 0.757) \\
         &  &  & U &  &  &  &  &  & 4.5 & 10.8 & (0.626, 0.756) \\ \hline
        100 & 100 & --- & --- &  & 5.7 & 10.3 & (0.614, 0.761) &  & 6.0 & 7.7 & (0.645, 0.744)
        \end{tabular}
        \label{simulation 1a table 2}
    }
    \vspace{0.4cm}
    \scriptsize \\
    {
    \setlength{\baselineskip}{0.5\baselineskip}
        \raggedright{
            2000 simulated datasets.  
            Number of centers for each stage: $J$ = 4. 
            True coefficients: $(\beta^*_0, \beta^*_{11}, \beta^*_{12}) = (0.1, 0.3, 0.15)$. \\
            A fractional factorial design was used in stage 1 with intervention package components set to: (0,0), (1,0), (0,4), and (1,4), and true success probabilities = 0.525, 0.599, 0.668, and 0.731.\\ 
            Cost function (cubic): $C(\boldsymbol{x})=2x_1^3-1.19x_1^2+10x_1+10 + 0.1x_2^3-0.2x_2^2+2x_2$. 
            Ranges: for $x_1$, $[0,2]$, for $x_2$, $[0,8]$. \\ 
            Outcome goal: $\Tilde{p}$ = 0.7. 
            No unplanned variation in the interventions.\\
            \medskip
            $n_j^{(1)}$: number of participants in each center $j$ at stage 1. 
            $n_j^{(2)}$: number of participants in each center $j$ at stage 2.\\
            \%Power Goal: percent power goal of the two-sample z-test for the difference between two proportions. For scenarios with only an outcome goal, the \%Power Goal is set to ---.  \\
            Power Goal Approach (U/C): U -- the unconditional power approach, C -- the conditional power approach. \\ 
            \%Rel Bias $\hat{x}_1^{opt}$: relative bias of the first component of the estimated optimal intervention $100(\hat{x}_1^{opt} - {x}_1^{opt})/{x}_1^{opt}$. \\
            \%Rel Bias $\hat{x}_2^{opt}$: relative bias of the second component of the estimated optimal intervention $100(\hat{x}_2^{opt} - {x}_2^{opt})/{x}_2^{opt}$.\\
            PrOpt: true success probability under the recommended intervention, calculated using true coefficients. \\
            Q2.5, Q97.5: 2.5\% and 97.5\% quantiles.\\
        }}
\end{table}

\newpage 
\subsection{Scenario 1b Results}\label{simulation scenario 1b appendix}
Compared to Scenario 1a, Scenario 1b used the same true parameter values, number of centers, per-center sample sizes, minimum and maximum values of the two-component intervention package, and power goals. 
Notably, Scenario 1b considered a linear cost function for the intervention package: $C(\boldsymbol{x}) = x_1 + 4x_2$. The outcome goal was $\tilde{p}=0.7455$, so that the true optimal intervention based on only the outcome goal was $(3.25, 0)$ within three decimal places. 

% \newpage
\begin{table}[h!]
\renewcommand{\arraystretch}{0.5}
    \caption{
    Scenario 1b simulation results for individual package component effects and power in LAGO trials with binary outcomes and a linear cost function.}
    \vspace{0.4cm}
    \centering
    \scriptsize{
        \begin{tabular}{llllllllllllll}
         &  &  &  &  & \multicolumn{3}{c}{$\hat{\beta}_{11}$} &  & \multicolumn{3}{c}{$\hat{\beta}_{12}$} &  &  \\ \cline{6-8} \cline{10-12}
        $n_j^{(1)}$ & $n_j^{(2)}$ & \begin{tabular}[c]{@{}l@{}}\%\\ Power\\ Goal\end{tabular} & \begin{tabular}[c]{@{}l@{}}Power\\ Goal\\ Approach\\ (U/C)\end{tabular} &  & \begin{tabular}[c]{@{}l@{}}\%Rel\\ Bias\end{tabular} & \begin{tabular}[c]{@{}l@{}}SE/\\ EMP.SD\\ ($\times 100$)\end{tabular} & CP95 &  & \begin{tabular}[c]{@{}l@{}}\%Rel\\ Bias\end{tabular} & \begin{tabular}[c]{@{}l@{}}SE/\\ EMP.SD\\ ($\times 100$)\end{tabular} & CP95 &  & \begin{tabular}[c]{@{}l@{}}\%\\ Power\end{tabular} \\ \hline
        40 & 40 & — & --- &  & 13.52 & 89.9 & 94.0 &  & 4.91 & 97.6 & 94.9 &  & 75.9* \\
         &  & 80 & C &  & 12.03 & 91.5 & 94.2 &  & 4.04 & 95.7 & 95.0 &  & 81.1 \\
         &  &  & U &  & 11.74 & 92.2 & 94.4 &  & 3.88 & 95.1 & 94.8 &  & 82.5 \\
         &  & 85 & C &  & 11.90 & 91.9 & 94.3 &  & 3.98 & 95.3 & 94.9 &  & 82.1 \\
         &  &  & U &  & 11.20 & 93.0 & 94.5 &  & 3.71 & 94.8 & 94.8 &  & 83.2 \\
         &  & 90 & C &  & 11.46 & 92.7 & 94.4 &  & 3.80 & 95.0 & 94.8 &  & 82.9 \\
         &  &  & U &  & 11.03 & 94.0 & 94.6 &  & 3.44 & 94.8 & 94.9 &  & 84.2 \\
         &  & 95 & C &  & 11.07 & 93.7 & 94.5 &  & 3.46 & 94.8 & 94.9 &  & 84.0 \\
         &  &  & U &  & 10.28 & 95.0 & 94.6 &  & 2.47 & 94.4 & 94.9 &  & 85.6 \\ \hline
        50 & 50 & — & --- &  & 13.50 & 87.7 & 93.7 &  & 5.52 & 98.0 & 94.5 &  & 85.4* \\
         &  & 80 & C &  & 13.08 & 88.5 & 93.6 &  & 5.11 & 97.0 & 94.7 &  & 88.2 \\
         &  &  & U &  & 12.73 & 89.0 & 93.8 &  & 4.99 & 96.5 & 94.5 &  & 88.8 \\
         &  & 85 & C &  & 12.82 & 88.9 & 93.7 &  & 5.02 & 96.7 & 94.6 &  & 88.6 \\
         &  &  & U &  & 12.28 & 89.4 & 93.9 &  & 4.91 & 96.2 & 94.5 &  & 89.6 \\
         &  & 90 & C &  & 12.45 & 89.3 & 93.9 &  & 4.99 & 96.5 & 94.5 &  & 89.3 \\
         &  &  & U &  & 11.58 & 89.8 & 94.0 &  & 4.71 & 95.9 & 94.5 &  & 90.1 \\
         &  & 95 & C &  & 11.78 & 89.7 & 93.9 &  & 4.75 & 95.9 & 94.4 &  & 89.9 \\
         &  &  & U &  & 11.19 & 91.0 & 94.4 &  & 4.47 & 95.3 & 94.3 &  & 91.3 \\ \hline
        60 & 60 & — & --- &  & 11.02 & 88.1 & 94.4 &  & 4.89 & 103.5 & 96.8 &  & 92.0* \\
         &  & 80 & C &  & 10.66 & 88.3 & 94.4 &  & 4.90 & 103.1 & 96.7 &  & 93.1 \\
         &  &  & U &  & 10.51 & 88.8 & 94.6 &  & 4.90 & 102.8 & 96.7 &  & 93.6 \\
         &  & 85 & C &  & 10.51 & 88.6 & 94.4 &  & 4.90 & 102.9 & 96.7 &  & 93.4 \\
         &  &  & U &  & 10.24 & 89.1 & 94.6 &  & 4.92 & 102.8 & 96.7 &  & 93.8 \\
         &  & 90 & C &  & 10.35 & 88.9 & 94.5 &  & 4.94 & 102.9 & 96.7 &  & 93.7 \\
         &  &  & U &  & 9.88 & 89.3 & 94.4 &  & 4.90 & 102.6 & 96.7 &  & 94.4 \\
         &  & 95 & C &  & 9.93 & 89.2 & 94.5 &  & 4.89 & 102.7 & 96.7 &  & 94.4 \\
         &  &  & U &  & 9.01 & 90.4 & 94.4 &  & 4.85 & 102.1 & 96.6 &  & 95.0 \\ \hline
        100 & 100 & --- & --- &  & 6.03 & 86.2 & 93.9 &  & 3.49 & 102.4 & 95.6 &  & 98.8*
        \end{tabular}
    }
    \label{simulation 1b table 1}
    \vspace{0.4cm}
    \scriptsize \\
    {
    \setlength{\baselineskip}{0.5\baselineskip}
        \raggedright{
            2000 simulated datasets.  
            Number of centers for each stage: $J$ = 4. 
            True coefficients: $(\beta^*_0, \beta^*_{11}, \beta^*_{12}) = (0.1, 0.3, 0.15)$. \\
            A fractional factorial design was used in stage 1 with intervention package components set to: (0,0), (1,0), (0,4), and (1,4), and true success probabilities = 0.525, 0.599, 0.668, and 0.731. 
            Cost function (linear): $C(\boldsymbol{x})=x_1 + 4x_2$. \\
            Ranges: for $x_1$, $[0,4]$, for $x_2$, $[0,8]$. 
            Outcome goal: $\Tilde{p}$ = 0.7455. 
            No unplanned variation in the interventions.\\
            \medskip
            $^*$: percent power calculated using the LAGO design with only an outcome goal.\\
            $n_j^{(1)}$: number of participants in each center $j$ at stage 1. $n_j^{(2)}$: number of participants in each center $j$ at stage 2.\\
            \%Power Goal: percent power goal of the two-sample z-test for the difference between two proportions. For scenarios with only an outcome goal, the \%Power Goal is set to ---.  \\
            Power Goal Approach (U/C): U -- the unconditional power approach, C -- the conditional power approach. \\ 
            \%RelBias: percent relative bias $|100(\hat{\beta}-\beta^\star)/\beta^\star|$.\\
            SE: mean estimated standard error, 
            EMP.SD: empirical standard deviation.\\
            CP95: empirical coverage rate of the 95\% confidence intervals.\\
            \%Power: percent power of the two-sample z-test for the difference between two proportions at the end of the LAGO trial.\\
        }}
\end{table}

\newpage
\begin{table}[h!]
\renewcommand{\arraystretch}{0.5}
    \caption{
    Scenario 1b simulation results for the estimated optimal intervention in LAGO trials with binary outcomes and a linear cost function.}
    \vspace{0.4cm}
    \centering
    \scriptsize{
        \begin{tabular}{llllllllllll}
         &  &  &  &  & \multicolumn{3}{c}{Stage 1} &  & \multicolumn{3}{c}{Stage 2 / LAGO Optimized} \\ \cline{6-8} \cline{10-12} 
        $n_j^{(1)}$ & $n_j^{(2)}$ & \begin{tabular}[c]{@{}l@{}}\%\\ Power\\ Goal\end{tabular} & \begin{tabular}[c]{@{}l@{}}Power\\ Goal\\ Approach\\ (U/C)\end{tabular} &  & \begin{tabular}[c]{@{}l@{}}\%Rel\\ Bias\\ $\hat{x}_1^{opt}$\end{tabular} & \begin{tabular}[c]{@{}l@{}}\%Rel\\ Bias\\ $\hat{x}_2^{opt}$\end{tabular} & \begin{tabular}[c]{@{}l@{}}PrOpt\\ (Q2.5,Q97.5)\end{tabular} &  & \begin{tabular}[c]{@{}l@{}}\%Rel\\ Bias\\ $\hat{x}_1^{opt}$\end{tabular} & \begin{tabular}[c]{@{}l@{}}\%Rel\\ Bias\\ $\hat{x}_2^{opt}$\end{tabular} & \begin{tabular}[c]{@{}l@{}}PrOpt\\ (Q2.5,Q97.5)\end{tabular} \\ \hline
        40 & 40 & --- & --- &  & 31.3 & 6.8 & (0.550, 0.837) &  & 18.9 & 12.0 & (0.611, 0.838) \\
         &  & 80 & C &  &  &  &  &  & 18.9 & 9.5 & (0.610, 0.838) \\
         &  &  & U &  &  &  &  &  & 18.9 & 8.7 & (0.612, 0.838) \\
         &  & 85 & C &  &  &  &  &  & 18.9 & 9.3 & (0.612, 0.838) \\
         &  &  & U &  &  &  &  &  & 19.0 & 7.8 & (0.609, 0.838) \\
         &  & 90 & C &  &  &  &  &  & 18.9 & 8.1 & (0.610, 0.838) \\
         &  &  & U &  &  &  &  &  & 19.3 & 7.2 & (0.610, 0.836) \\
         &  & 95 & C &  &  &  &  &  & 19.3 & 7.3 & (0.609, 0.835) \\
         &  &  & U &  &  &  &  &  & 19.7 & 6.2 & (0.608, 0.832) \\ \hline
        50 & 50 & --- & --- &  & 28.2 & 2.1 & (0.577, 0.827) &  & 15.6 & 12.3 & (0.630, 0.830) \\
         &  & 80 & C &  &  &  &  &  & 15.9 & 11.1 & (0.630, 0.829) \\
         &  &  & U &  &  &  &  &  & 16.1 & 10.3 & (0.630, 0.827) \\
         &  & 85 & C &  &  &  &  &  & 16.1 & 10.5 & (0.630, 0.826) \\
         &  &  & U &  &  &  &  &  & 16.1 & 10.1 & (0.631, 0.827) \\
         &  & 90 & C &  &  &  &  &  & 16.1 & 10.2 & (0.631, 0.827) \\
         &  &  & U &  &  &  &  &  & 16.1 & 9.1 & (0.628, 0.824) \\
         &  & 95 & C &  &  &  &  &  & 16.0 & 9.3 & (0.629, 0.823) \\
         &  &  & U &  &  &  &  &  & 16.3 & 8.5 & (0.626, 0.826) \\ \hline
        60 & 60 & --- & --- &  & 21.3 & 0.6 & (0.612, 0.835) &  & 11.1 & 11.2 & (0.652, 0.830) \\
         &  & 80 & C &  &  &  &  &  & 11.2 & 10.6 & (0.649, 0.829) \\
         &  &  & U &  &  &  &  &  & 11.1 & 10.5 & (0.648, 0.829) \\
         &  & 85 & C &  &  &  &  &  & 11.2 & 10.5 & (0.649, 0.829) \\
         &  &  & U &  &  &  &  &  & 11.1 & 10.0 & (0.648, 0.826) \\
         &  & 90 & C &  &  &  &  &  & 11.3 & 10.2 & (0.648, 0.826) \\
         &  &  & U &  &  &  &  &  & 11.4 & 9.3 & (0.645, 0.824) \\
         &  & 95 & C &  &  &  &  &  & 11.3 & 9.4 & (0.646, 0.824) \\
         &  &  & U &  &  &  &  &  & 11.2 & 7.7 & (0.643, 0.825) \\ \hline
        100 & 100 & --- & --- &  & 14.5 & 3.6 & (0.627, 0.834) &  & 5.8 & 7.6 & (0.665, 0.814)
        \end{tabular}
        \label{simulation 1b table 2}
    }
    \vspace{0.4cm}
    \scriptsize \\
    {
    \setlength{\baselineskip}{0.5\baselineskip}
        \raggedright{
            2000 simulated datasets.  
            Number of centers for each stage: $J$ = 4. 
            True coefficients: $(\beta^*_0, \beta^*_{11}, \beta^*_{12}) = (0.1, 0.3, 0.15)$. \\
            A fractional factorial design was used in stage 1 with intervention package components set to: (0,0), (1,0), (0,4), and (1,4), and true success probabilities = 0.525, 0.599, 0.668, and 0.731. 
            Cost function (linear): $C(\boldsymbol{x})=x_1 + 4x_2$. \\
            Ranges: for $x_1$, $[0,4]$, for $x_2$, $[0,8]$. 
            Outcome goal: $\Tilde{p}$ = 0.7455. 
            No unplanned variation in the interventions.\\
            \medskip
            $n_j^{(1)}$: number of participants in each center $j$ at stage 1. 
            $n_j^{(2)}$: number of participants in each center $j$ at stage 2.\\
            \%Power Goal: percent power goal of the two-sample z-test for the difference between two proportions. For scenarios with only an outcome goal, the \%Power Goal is set to ---.  \\
            Power Goal Approach (U/C): U -- the unconditional power approach, C -- the conditional power approach. \\ 
            \%Rel Bias $\hat{x}_1^{opt}$: relative bias of the first component of the estimated optimal intervention $100(\hat{x}_1^{opt} - {x}_1^{opt})/{x}_1^{opt}$. \\
            \%Rel Bias $\hat{x}_2^{opt}$: relative bias of the second component of the estimated optimal intervention $100(\hat{x}_2^{opt} - {x}_2^{opt})/{x}_2^{opt}$.\\
            PrOpt: true success probability under the recommended intervention, calculated using true coefficients. \\
            Q2.5, Q97.5: 2.5\% and 97.5\% quantiles.\\
        }}
\end{table}

\newpage 
\textcolor{white}{xxx}
\newpage 
\subsection{Scenario 2a Results}\label{simulation scenario 2a appendix}

Scenario 2 includes 2 cases: 2a and 2b, both comparing the performance of the LAGO design with a power goal to the non-LAGO, fractional factorial design, with a focus on power. The fractional factorial design used the same interventions as stage 1 of the LAGO trial throughout.  
The setups for Scenarios 2a and 2b were similar to Scenarios 1a and 1b, with the difference being that the LAGO designs used in Scenarios 2a and 2b only included a power goal. 
Scenario 2a considered the same cubic cost function as described in Scenario 1a, while Scenario 2b considered the same linear cost function as described in Scenario 1b. 
Other parameters for Scenarios 2a and 2b, including the true coefficients, minimum and maximum values of the intervention components, model for the outcome, outcome goal, power goal, and sample sizes were the same as described in Scenarios 1a and 1b, respectively. 

Table \ref{simulation 2a table} reports bias, the ratios between the average estimated standard error and the empirical standard error, coverage rates of the 95\% confidence intervals of the coefficients, and the power of the two-sample z-test for the difference between two proportions at the end of the LAGO trials. The first entry for each set of per-center sample sizes shows results under the fractional factorial design.

Under the LAGO design with a power goal, the relative bias of the coefficients, the ratio of the average estimated standard error to the empirical standard error of the coefficients, and the empirical coverage rate of the 95\% confidence intervals were all comparable to those calculated under the fractional factorial design. 
Table \ref{simulation 2a table} shows that the LAGO design with a power goal had noticeably higher power compared to the fractional factorial design. As expected, higher power goals corresponded to higher resulting power at the end of the LAGO trial.  
With $n_j^{(1)}=n_j^{(2)}=50$ and higher power goals, e.g., when the power goal was 0.90, neither the unconditional power or the conditional power approach reached the power goal. The unconditional power approach led to higher final power compared to the conditional power approach. With $n_j^{(1)}=n_j^{(2)}=60$ and lower power goals, e.g., when the power goal was 0.8, both unconditional and conditional power approaches reached the power goal. As expected, the conditional power approach led to power that exceeded the desired power goal less than the unconditional power approach. 

\begin{table}[h!]
\renewcommand{\arraystretch}{0.5}
    \caption{
    Scenario 2a simulation results for individual package component effects and power in LAGO trials with binary outcomes and a cubic cost function.}
    \vspace{0.4cm}
    \centering
    \scriptsize{
        \begin{tabular}{p{0.1in}p{0.1in}lp{0.25in}lllllllllllll}
         &  &  &  &  & \multicolumn{3}{c}{$\hat{\beta}_{11}$} &  & \multicolumn{3}{c}{$\hat{\beta}_{12}$} &  &  \\ \cline{6-8} \cline{10-12}
            $n_j^{(1)}$ & $n_j^{(2)}$ & \begin{tabular}[c]{@{}l@{}}\%\\ Power\\ Goal\end{tabular} & \begin{tabular}[c]{@{}l@{}}Power\\ Goal\\ Approach\\ (U/C)\end{tabular} &  & \begin{tabular}[c]{@{}l@{}}\%Rel\\ Bias\end{tabular} & \begin{tabular}[c]{@{}l@{}}SE/\\ EMP.SD\\ ($\times 100$)\end{tabular} & CP95 &  & \begin{tabular}[c]{@{}l@{}}\%Rel\\ Bias\end{tabular} & \begin{tabular}[c]{@{}l@{}}SE/\\ EMP.SD\\ ($\times 100$)\end{tabular} & CP95 &  & \begin{tabular}[c]{@{}l@{}}\%\\ Power\end{tabular} \\ \hline
            40 & 40 & — & --- &  & 1.94 & 99.4 & 94.6 &  & 0.07 & 99.4 & 94.9 &  & 59.2* \\
             &  & 80 & C &  & 2.30 & 97.8 & 95.0 &  & 0.37 & 91.9 & 94.4 &  & 72.4 \\
             &  &  & U &  & 2.90 & 98.2 & 95.1 &  & 0.07 & 91.8 & 94.4 &  & 77.4 \\
             &  & 85 & C &  & 2.61 & 98.0 & 95.1 &  & 0.19 & 91.9 & 94.4 &  & 76.2 \\
             &  &  & U &  & 3.38 & 98.7 & 95.3 &  & 0.48 & 92.1 & 94.5 &  & 80.2 \\
             &  & 90 & C &  & 3.15 & 98.6 & 95.2 &  & 0.25 & 92.2 & 94.4 &  & 79.1 \\
             &  &  & U &  & 3.38 & 99.3 & 95.3 &  & 1.11 & 92.9 & 94.7 &  & 83.5 \\
             &  & 95 & C &  & 3.43 & 99.3 & 95.4 &  & 1.13 & 92.9 & 94.6 &  & 83.3 \\
             &  &  & U &  & 3.52 & 99.7 & 95.4 &  & 2.15 & 94.9 & 95.1 &  & 87.4 \\ \hline
            50 & 50 & — & --- &  & 0.90 & 96.0 & 94.4 &  & 0.53 & 102.4 & 95.6 &  & 68.3* \\
             &  & 80 & C &  & 1.45 & 94.9 & 94.0 &  & 1.88 & 93.0 & 94.6 &  & 75.7 \\
             &  &  & U &  & 0.42 & 95.2 & 94.0 &  & 1.67 & 92.4 & 94.4 &  & 80.2 \\
             &  & 85 & C &  & 0.80 & 95.2 & 94.0 &  & 1.70 & 92.5 & 94.5 &  & 78.6 \\
             &  &  & U &  & 0.56 & 95.7 & 93.9 &  & 1.35 & 92.3 & 94.6 &  & 83.8 \\
             &  & 90 & C &  & 0.12 & 95.6 & 93.9 &  & 1.51 & 92.5 & 94.5 &  & 82.5 \\
             &  &  & U &  & 1.63 & 96.5 & 94.2 &  & 0.62 & 93.0 & 94.9 &  & 87.8 \\
             &  & 95 & C &  & 1.56 & 96.3 & 94.2 &  & 0.76 & 92.8 & 94.7 &  & 87.1 \\
             &  &  & U &  & 2.09 & 97.5 & 94.7 &  & 0.27 & 94.3 & 94.8 &  & 91.7 \\ \hline
            60 & 60 & --- & --- &  & 1.72 & 102.7 & 95.5 &  & 0.88 & 102.0 & 95.4 &  & 76.5* \\
             &  & 80 & C &  & 1.37 & 98.4 & 95.4 &  & 1.18 & 100.6 & 96.1 &  & 80.0 \\
             &  &  & U &  & 2.01 & 98.9 & 95.5 &  & 1.16 & 100.7 & 96.3 &  & 83.6 \\
             &  & 85 & C &  & 1.91 & 98.8 & 95.4 &  & 1.10 & 100.8 & 96.4 &  & 82.5 \\
             &  &  & U &  & 2.58 & 99.3 & 95.2 &  & 0.83 & 100.8 & 96.3 &  & 86.6 \\
             &  & 90 & C &  & 2.42 & 99.1 & 95.4 &  & 0.84 & 101.0 & 96.3 &  & 85.7 \\
             &  &  & U &  & 3.21 & 99.7 & 95.0 &  & 0.38 & 101.0 & 96.2 &  & 91.0 \\
             &  & 95 & C &  & 2.99 & 99.7 & 95.1 &  & 0.36 & 101.0 & 96.2 &  & 90.4 \\
             &  &  & U &  & 4.57 & 100.8 & 95.0 &  & 0.31 & 101.9 & 96.0 &  & 94.8
            \end{tabular}
        \label{simulation 2a table}
    }
    \vspace{0.4cm}
    \scriptsize \\
    {
    \setlength{\baselineskip}{0.5\baselineskip}
        \raggedright{
            2000 simulated datasets.  
            Number of centers for each stage: $J$ = 4. 
            True coefficients: $(\beta^*_0, \beta^*_{11}, \beta^*_{12}) = (0.1, 0.3, 0.15)$. \\
            A fractional factorial design was used in stage 1 with intervention package components set to: (0,0), (1,0), (0,4), and (1,4), and true success probabilities = 0.525, 0.599, 0.668, and 0.731. \\
            Cost function (cubic): $C(\boldsymbol{x})=2x_1^3-1.19x_1^2+10x_1+10 + 0.1x_2^3-0.2x_2^2+2x_2$. 
            Ranges: for $x_1$, $[0,2]$, for $x_2$, $[0,8]$. \\ 
            No unplanned variation in the interventions.\\
            \medskip
            $^*$: percent power calculated using the fractional factorial design.\\
            $n_j^{(1)}$: number of participants in each center $j$ at stage 1. $n_j^{(2)}$: number of participants in each center $j$ at stage 2.\\
            \%Power Goal: percent power goal of the two-sample z-test for the difference between two proportions. For scenarios without the power goal, the \%Power Goal is set to ---.  \\
            Power Goal Approach (U/C): U -- the unconditional power approach, C -- the conditional power approach. \\ 
            \%RelBias: percent relative bias $|100(\hat{\beta}-\beta^\star)/\beta^\star|$.\\
            SE: mean estimated standard error, 
            EMP.SD: empirical standard deviation.\\
            CP95: empirical coverage rate of the 95\% confidence intervals.\\
            \%Power: percent power of the two-sample z-test for the difference between two proportions at the end of the LAGO trial.\\
        }}
\end{table}

\newpage 
\textcolor{white}{xxx}
\newpage
\subsection{Simulation 2b Results}\label{simulation scenario 2b appendix}

\begin{table}[h!]
\renewcommand{\arraystretch}{0.5}
    \caption{
    Simulation 2b study results for individual package component effects and power in LAGO trials with a binary outcome and a linear cost function.}
    \vspace{0.4cm}
    \centering
    \scriptsize{
        \begin{tabular}{llllllllllllll}
         &  &  &  &  & \multicolumn{3}{c}{$\hat{\beta}_{11}$} &  & \multicolumn{3}{c}{$\hat{\beta}_{12}$} &  &  \\ \cline{6-8} \cline{10-12}
        $n_j^{(1)}$ & $n_j^{(2)}$ & \begin{tabular}[c]{@{}l@{}}\%\\ Power\\ Goal\end{tabular} & \begin{tabular}[c]{@{}l@{}}Power\\ Goal\\ Approach\\ (U/C)\end{tabular} &  & \begin{tabular}[c]{@{}l@{}}\%Rel\\ Bias\end{tabular} & \begin{tabular}[c]{@{}l@{}}SE/\\ EMP.SD\\ ($\times 100$)\end{tabular} & CP95 &  & \begin{tabular}[c]{@{}l@{}}\%Rel\\ Bias\end{tabular} & \begin{tabular}[c]{@{}l@{}}SE/\\ EMP.SD\\ ($\times 100$)\end{tabular} & CP95 &  & \begin{tabular}[c]{@{}l@{}}\%\\ Power\end{tabular} \\ \hline
        40 & 40 & — & --- &  & 1.94 & 99.4 & 94.6 &  & 0.07 & 99.4 & 94.9 &  & 59.2* \\
         &  & 80 & C &  & 2.45 & 88.6 & 94.6 &  & 1.61 & 95.0 & 94.5 &  & 72.7 \\
         &  &  & U &  & 1.90 & 88.1 & 94.5 &  & 1.78 & 95.2 & 94.4 &  & 77.0 \\
         &  & 85 & C &  & 1.95 & 88.1 & 94.6 &  & 1.82 & 95.1 & 94.5 &  & 76.1 \\
         &  &  & U &  & 1.44 & 88.2 & 94.5 &  & 1.94 & 95.4 & 94.4 &  & 79.6 \\
         &  & 90 & C &  & 1.65 & 88.2 & 94.6 &  & 1.86 & 95.5 & 94.4 &  & 78.7 \\
         &  &  & U &  & 0.46 & 88.2 & 94.5 &  & 2.10 & 95.9 & 94.7 &  & 83.1 \\
         &  & 95 & C &  & 0.53 & 88.0 & 94.5 &  & 2.07 & 95.8 & 94.6 &  & 82.5 \\
         &  &  & U &  & 1.35 & 89.2 & 94.4 &  & 1.84 & 95.9 & 94.8 &  & 86.7 \\ \hline
        50 & 50 & — & --- &  & 0.90 & 96.0 & 94.4 &  & 0.53 & 102.4 & 95.6 &  & 68.3* \\
         &  & 80 & C &  & 1.53 & 86.5 & 93.2 &  & 2.12 & 96.7 & 95.0 &  & 76.3 \\
         &  &  & U &  & 1.64 & 86.0 & 93.5 &  & 2.37 & 97.0 & 94.6 &  & 80.3 \\
         &  & 85 & C &  & 1.57 & 86.2 & 93.5 &  & 2.31 & 96.9 & 94.8 &  & 79.3 \\
         &  &  & U &  & 1.68 & 85.3 & 93.3 &  & 2.64 & 97.3 & 94.9 &  & 84.3 \\
         &  & 90 & C &  & 1.71 & 85.4 & 93.2 &  & 2.61 & 97.3 & 94.6 &  & 83.1 \\
         &  &  & U &  & 1.57 & 84.9 & 93.2 &  & 2.80 & 98.1 & 94.9 &  & 87.2 \\
         &  & 95 & C &  & 1.46 & 84.9 & 93.2 &  & 2.81 & 97.9 & 94.8 &  & 86.8 \\
         &  &  & U &  & 0.10 & 85.8 & 93.6 &  & 3.16 & 98.2 & 95.0 &  & 91.0 \\ \hline
        60 & 60 & --- & --- &  & 1.72 & 102.7 & 95.5 &  & 0.88 & 102.0 & 95.4 &  & 76.5* \\
         &  & 80 & C &  & 3.98 & 90.5 & 94.9 &  & 1.19 & 104.3 & 96.3 &  & 80.5 \\
         &  &  & U &  & 4.18 & 90.0 & 94.6 &  & 1.49 & 104.8 & 96.4 &  & 84.6 \\
         &  & 85 & C &  & 4.14 & 90.2 & 94.9 &  & 1.40 & 104.7 & 96.5 &  & 83.7 \\
         &  &  & U &  & 4.05 & 89.2 & 94.6 &  & 1.84 & 105.6 & 96.5 &  & 87.5 \\
         &  & 90 & C &  & 4.22 & 89.4 & 94.7 &  & 1.71 & 105.4 & 96.6 &  & 86.8 \\
         &  &  & U &  & 3.09 & 88.3 & 94.7 &  & 2.36 & 106.0 & 96.5 &  & 90.7 \\
         &  & 95 & C &  & 3.25 & 88.4 & 94.6 &  & 2.27 & 106.0 & 96.5 &  & 90.4 \\
         &  &  & U &  & 1.90 & 88.2 & 94.6 &  & 3.01 & 105.7 & 96.6 &  & 94.0
        \end{tabular}
        \label{simulation 2b table}
    }
    \vspace{0.4cm}
    \scriptsize \\
    {
    \setlength{\baselineskip}{0.5\baselineskip}
        \raggedright{
            2000 simulated datasets.  
            Number of centers for each stage: $J$ = 4. 
            True coefficients: $(\beta^*_0, \beta^*_{11}, \beta^*_{12}) = (0.1, 0.3, 0.15)$. \\
            A fractional factorial design was used in stage 1 with intervention package components set to: (0,0), (1,0), (0,4), and (1,4), and true success probabilities = 0.525, 0.599, 0.668, and 0.731. 
            Cost function (linear): $C(\boldsymbol{x})=x_1 + 4x_2$. \\
            Ranges: for $x_1$, $[0,2]$, for $x_2$, $[0,8]$. 
            No unplanned variation in the interventions.\\
            \medskip
            $^*$: percent power calculated using the fractional factorial design.\\
            $n_j^{(1)}$: number of participants in each center $j$ at stage 1. $n_j^{(2)}$: number of participants in each center $j$ at stage 2.\\
            \%Power Goal: percent power goal of the two-sample z-test for proportions. For scenarios without the power goal, the \%Power Goal is set to ---.  \\
            power approach (U/C): U -- the unconditional power approach, C -- the conditional power approach. \\ 
            \%RelBias: percent relative bias $|100(\hat{\beta}-\beta^\star)/\beta^\star|$.\\
            SE: mean estimated standard error, 
            EMP.SD: empirical standard deviation.\\
            CP95: empirical coverage rate of the 95\% confidence intervals.\\
            \%Power: percent power of the two-sample z-test for proportions at the end of the LAGO trial.\\
        }}
\end{table}

\newpage 
\section{LAGO Design with More Than Two Stages}\label{K>2}
Section \ref{K>2} discusses the LAGO design with both an outcome goal and a power goal for the case of LAGO trials with more than 2 stages, $K>2$.

We first define the notation for stage $k$, where $k=2,\ldots,K$ and $K>2$.
Let $\hat{\boldsymbol{x}}_{j}^{\left(k, \bar{n}_{k-}\right)}$ be the recommended intervention package for center $j$ in stage $k$. 
The superscript in $\hat{\boldsymbol{x}}_{j}^{\left(k, \bar{n}_{k-}\right)}$ reminds us that $\hat{\boldsymbol{x}}_{j}^{\left(k, \bar{n}_{k-}\right)}$ is calculated using data from all participants from stages 1 to $k-1$.
Let $\boldsymbol{A}_{j}^{\left(k, \bar{n}_{k-}\right)}$ be the actual intervention package for center $j$ in stage $k$, where 
$\boldsymbol{A}_{j}^{\left(k, \bar{n}_{k-}\right)} = \hat{\boldsymbol{x}}_{j}^{\left(k, \bar{n}_{k-}\right)}$ in settings without unplanned variation in the intervention.
In settings with unplanned variation in the intervention, $\boldsymbol{A}_{j}^{\left(k, \bar{n}_{k-}\right)} =  h_j^{(k)} \bigl(\hat{\boldsymbol{x}}_{j}^{\left(k, \bar{n}_{k-}\right)}\bigr)$, and $h_j^{(k)}$, as for LAGO trials with 2 stages, is a continuous deterministic function for center $j$ in stage $k$. 

Let $n_j^{(k)}$ be the number of participants in each center $j$ of stage $k$, where $j=1,\ldots,J^{(k)}$.
Let $\boldsymbol{Y}_{j}^{\left(k, \bar{n}_{k-}\right)}= \bigl(Y_{1 j}^{\left(k, \bar{n}_{k-}\right)}, \ldots, Y_{n^{(k)}_j j}^{\left(k, \bar{n}_{k-}\right)}\bigr) $ be the outcomes for center $j$ in stage $k$. 
Let $\overline{\boldsymbol{A}}^{\left(k, \bar{n}_{k-}\right)}=\bigl(\boldsymbol{A}_{1}^{\left(k, \bar{n}_{k-}\right)}, \ldots, \boldsymbol{A}_{J^{(k)}}^{\left(k, \bar{n}_{k-}\right)}\bigr)$, 
$\overline{{\boldsymbol{x}}}^{\left(k, \bar{n}_{k-}\right)}=\bigl(\hat{{\boldsymbol{x}}}_{1}^{\left(k, \bar{n}_{k-}\right)}, \ldots, \hat{\boldsymbol{x}}_{J^{(k)}}^{\left(k, \bar{n}_{k-}\right)}\bigr)$,
and $\overline{\boldsymbol{Y}}^{\left(k, \bar{n}_{k-}\right)}=\bigl(\boldsymbol{Y}_{1}^{\left(k, \bar{n}_{k-}\right)}, \ldots, \boldsymbol{Y}_{J^{(k)}}^{\left(k, \bar{n}_{k-}\right)}\bigr)$ be the actual interventions, the recommended interventions and outcomes for all $J^{(k)}$ centers in stage $k$, respectively. 
Let $\widetilde{\boldsymbol{x}}^{\left(k, \bar{n}_{k-}\right)}=\bigl(\overline{\boldsymbol{x}}^{(1)}, \overline{\boldsymbol{x}}^{\left(2, n^{(1)}\right)}, \ldots, \overline{\boldsymbol{x}}^{\left(k, \bar{n}_{k-}\right)}\bigr)$, 
$\widetilde{\boldsymbol{A}}^{\left(k, \bar{n}_{k-}\right)}=\bigl(\overline{\boldsymbol{a}}^{(1)}, \overline{\boldsymbol{A}}^{\left(2, n^{(1)}\right)}, \ldots, \overline{\boldsymbol{A}}^{\left(k, \bar{n}_{k-}\right)}\bigr)$, and
$\widetilde{\boldsymbol{Y}}^{\left(k, \bar{n}_{k-}\right)}=\bigl(\overline{\boldsymbol{Y}}^{(1)}, \ldots, \overline{\boldsymbol{Y}}^{\left(k, \bar{n}_{k-}\right)}\bigr)$ be the recommended intervention packages, actual intervention packages, and outcomes from stages $1$ to $k$, respectively. 

Next, we show the modified assumptions for settings with $K>2$ stages.

\begin{assu}\label{conditional indep assumption}
    Given $\overline{{\boldsymbol{x}}}^{\left(k, \bar{n}_{k-}\right)},\bigl(\overline{\boldsymbol{A}}^{\left(k, \bar{n}_{k-}\right)}, \overline{\boldsymbol{Y}}^{\left(k, \bar{n}_{k-}\right)}\bigr)$ are independent of the data from the previous stages, $\bigl(\widetilde{\boldsymbol{A}}^{\left(k-1, \bar{n}_{(k-1)-}\right)}, \widetilde{\boldsymbol{Y}}^{\left(k-1, \bar{n}_{(k-1)-}\right)}\bigr)$.
\end{assu}
\noindent Assumption \ref{conditional indep assumption} states that learning from data collected so-far takes place only through the determination of the recommended interventions.

\begin{assu}\label{K>2 intervention_cvg_assumption}
    For non center-specific stage $k$ recommended interventions, $\hat{\boldsymbol{x}}^{\left(k, \bar{n}_{k-}\right)} \xrightarrow{P} {\boldsymbol{x}}^{\left(k\right)}$, where ${\boldsymbol{x}}^{\left(k\right)}$ is the limit of the recommended intervention.
    For center-specific stage $k$ interventions, for each $j=\{1, \ldots, J^{(k)}\}$, $\hat{\boldsymbol{x}}_j^{\left(k, \bar{n}_{k-}\right)} \xrightarrow{P} {\boldsymbol{x}}_j^{\left(k\right)}$.
\end{assu}
\noindent Following Section 4 of the main text, Assumption \ref{K>2 intervention_cvg_assumption} holds when the stage $k$ recommended intervention is obtained by solving an optimization problem with both an outcome goal and a power goal. 

As in \citet{nevo2021analysis, bing2023learnasyougo}, under Assumption \ref{K>2 intervention_cvg_assumption}, the definition of the $h_j^{(k)}$ and the continuous mapping theorem imply that $\boldsymbol{A}_j^{\left(k, \bar{n}_{k-}\right)}$ converge in probability to their center-specific limits $\boldsymbol{a}_j^{(k)}$.

We show that Theorems 1 and 2 from the main text hold for settings with $K>2$ stages. 
First, notice that given the cost function $C(\boldsymbol{x})$, to satisfy the outcome goal, the stage $k$ recommended intervention $\hat{\boldsymbol{x}}^{\left(k, \bar{n}_{k-}\right)}$ is obtained by solving
\begin{equation}\label{binary outcome goal simplified supp}
    \operatorname{expit}\left(\hat{\beta}_0^{(k-1)}+\left(\hat{\boldsymbol{\beta}}_1^{(k-1)}\right)^T \hat{\boldsymbol{x}}^{\left(k, \bar{n}_{k-}\right)} \right)
    \geq \tilde{p}.
\end{equation}
To discuss the calculation of the recommended intervention $\hat{\boldsymbol{x}}^{\left(k, \bar{n}_{k-}\right)}$ that satisfies the power goal, it is assumed that $\hat{\boldsymbol{x}}^{\left(k, \bar{n}_{k-}\right)}$ will be implemented in stage $k$ and onwards, both when including an unconditional power goal (Theorem \ref{unconditional power theorem stage k}) and when including a conditional power goal (Theorem \ref{conditional power goal theorem K>2}).

\begin{thm}{Two-sample z-test with unpooled variance, difference between two proportions, unconditional power goal, number of stages $K>2$.}\label{unconditional power theorem stage k}
    \textcolor{white}{xxx}\\
    Let $\chi^2_{\alpha, 1}$ be the upper $\alpha$ quantile of the central $\chi^2$ distribution with 1 degree of freedom. 
    For $\alpha=0.05$, $\chi^2_{\alpha, 1}=3.84$. 
    Let ${\lambda}_{min}$ be the minimum value of the non-centrality parameter for the non-central $\chi^2$ distribution with 1 degree of freedom, so that for a variable $T$ from a non-central $\chi^2$ distribution with non-centrality parameter ${\lambda}_{min}$, the probability of $T$ exceeding $\chi^2_{\alpha, 1}$ equals $\Pi$. 
    Let
    \begin{equation}
        \hat{S}_1^{(k+)}\left(\hat{\boldsymbol{x}}^{(k,\bar{n}_{k-})}, \hat{\boldsymbol{\beta}}^{(k-1)}\right) = \left(\sum_{l=k}^{K} n_1^{(l)} \right) \operatorname{expit}\left(\hat{\beta}_0^{(k-1)}+\left(\hat{\boldsymbol{\beta}}_1^{(k-1)}\right)^T\hat{\boldsymbol{x}}^{\left(k, \bar{n}_{k-}\right)}\right),  
        \label{mu_1_2_stage_k}
    \end{equation}
    \begin{equation}
        \hat{S}_0^{(k+)}\left( \hat{\boldsymbol{\beta}}^{(k-1)} \right) = \left(\sum_{l=k}^{K} n_0^{(l)} \right) \operatorname{expit}\left( \hat{\beta}_0^{(k-1)}\right),
        \label{mu_0_2_stage_k}
    \end{equation}
    \begin{equation*}
        VAR_{1} = \frac{\sum_{l=1}^{k-1} S_1^{(l)}+\hat{S}_1^{(k+)}(\hat{\boldsymbol{x}}^{(k,\bar{n}_{k-})}, \hat{\boldsymbol{\beta}}^{(k-1)})}{N_1} \left(1-\frac{\sum_{l=1}^{k-1} S_1^{(l)}+\hat{S}_1^{(k+)}(\hat{\boldsymbol{x}}^{(k,\bar{n}_{k-})}, \hat{\boldsymbol{\beta}}^{(k-1)})}{N_1} \right) \frac{1}{N_1}, 
    \end{equation*}
    \begin{equation*}
        VAR_{2} = { \frac{{\sum_{l=1}^{k-1} S_0^{(l)}+\hat{S}_0^{(k+)}(\hat{\boldsymbol{\beta}}^{(k-1)})}}{N_0} \left(1-\frac{{\sum_{l=1}^{k-1} S_0^{(l)}+\hat{S}_0^{(k+)}(\hat{\boldsymbol{\beta}}^{(k-1)})}}{N_0}\right)} \frac{1}{N_0},
    \end{equation*}
    and
    % \footnotesize
    \begin{equation*}
    \begin{aligned}
        &{\lambda}\left(\hat{\boldsymbol{x}}^{(k,\bar{n}_{k-})}; \hat{\boldsymbol{\beta}}^{(k-1)}\right) = \\
        &
        % \left(
        \frac{\left(\left.\left( \sum_{l=1}^{k-1} S_1^{(l)}+\hat{S}_1^{(k+)}\bigl(\hat{\boldsymbol{x}}^{(k,\bar{n}_{k-})}, \hat{\boldsymbol{\beta}}^{(k-1)}\bigl) \right) \right/{N_1} - \left.\left({\sum_{l=1}^{k-1} S_0^{(l)}+\hat{S}_0^{(k+)}}\bigl( \hat{\boldsymbol{\beta}}^{(k-1)} \bigl)\right)\right/{N_0} \right)^2}
        {VAR_{1} + VAR_{2}}. 
    \end{aligned}
    \end{equation*}
    Under Assumptions \ref{conditional indep assumption} and \ref{K>2 intervention_cvg_assumption}, and Assumptions 1, 5, 7 and 8 from the main text, the stage $k$ recommended intervention $\hat{\boldsymbol{x}}^{(k,\bar{n}_{k-})}$, subject to both an outcome goal and an unconditional power goal, solves the following optimization problem: 
    \begin{equation*}
        \text{Min}_{\boldsymbol{x}} C\left(\boldsymbol{x}\right)
        \; \text {subject to} \; 
        \operatorname{expit}\left(\hat{\beta}_0^{(k-1)}+\left(\hat{\boldsymbol{\beta}}_1^{(k-1)}\right)^T\hat{\boldsymbol{x}}^{\left(k, \bar{n}_{k-}\right)}\right)
        \geq \tilde{p}, \;\;\text{and} \;\;{\lambda}\left(\hat{\boldsymbol{x}}^{(k,\bar{n}_{k-})}; \hat{\boldsymbol{\beta}}^{(k-1)}\right)  \geq \lambda_{min}. 
    \end{equation*}
\end{thm}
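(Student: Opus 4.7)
The plan is to mirror the proof of Theorem 1 from the main text, extending the two-stage bookkeeping to $K>2$ stages in which stages $1,\ldots,k-1$ are fully observed at the time of the $k$-th LAGO optimization. The key observation is that the test statistic $Z$ of equation (5) of the main text uses the totals $S_1=\sum_{l=1}^{K} S_1^{(l)}$ and $S_0=\sum_{l=1}^{K} S_0^{(l)}$ (divided by $N_1$, $N_0$) in the numerator and the corresponding plug-in terms in the denominator. Just as in the two-stage case, under the alternative, $Z^2$ approximately follows a non-central $\chi^2$ distribution with one degree of freedom and some non-centrality parameter $\lambda$, so the unconditional-power requirement $P(Z^2>\chi^2_{\alpha,1})\geq \Pi$ is equivalent to $\lambda\geq\lambda_{\min}$. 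The task then reduces to expressing $\lambda$ in terms of quantities that can be estimated from the currently observed data together with the candidate recommendation $\hat{\boldsymbol{x}}^{(k,\bar{n}_{k-})}$.

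First I would decompose the intervention-group total as $S_1=\sum_{l=1}^{k-1} S_1^{(l)}+\sum_{l=k}^{K} S_1^{(l)}$, and similarly for $S_0$. The first piece is already observed. For the second piece, under the working assumption stated just before the theorem that $\hat{\boldsymbol{x}}^{(k,\bar{n}_{k-})}$ will be deployed in all remaining stages $k,\ldots,K$, Assumption 1 of the main text implies that each intervention-arm participant in any stage $l\geq k$ contributes a Bernoulli outcome with estimated success probability $\operatorname{expit}\bigl(\hat{\beta}_0^{(k-1)}+(\hat{\boldsymbol{\beta}}_1^{(k-1)})^T \hat{\boldsymbol{x}}^{(k,\bar{n}_{k-})}\bigr)$, and each control-arm participant contributes a Bernoulli with estimated probability $\operatorname{expit}\bigl(\hat{\beta}_0^{(k-1)}\bigr)$. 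Aggregating over the pre-specified stage sample sizes $n_1^{(l)}$ and $n_0^{(l)}$ for $l=k,\ldots,K$ yields the estimated means $\hat{S}_1^{(k+)}$ and $\hat{S}_0^{(k+)}$ of equations (\ref{mu_1_2_stage_k}) and (\ref{mu_0_2_stage_k}).

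Next I would substitute these estimated aggregate sums, together with the already-observed partial sums, for the unknown final sums in both the numerator of $Z$ and the plug-in variance terms in the denominator. This produces precisely the explicit formula for $\lambda(\hat{\boldsymbol{x}}^{(k,\bar{n}_{k-})};\hat{\boldsymbol{\beta}}^{(k-1)})$ stated in the theorem. Imposing $\lambda\geq\lambda_{\min}$ alongside the binary outcome-goal condition in the form of equation (\ref{binary outcome goal simplified supp}) then produces the claimed optimization problem, completing the calculation.

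The main obstacle, rather than calculational, is justifying that the non-central $\chi^2$ approximation remains valid when the recommendation is obtained by optimizing over data accumulated across potentially many prior stages. For $K=2$ this was handled via Assumption 4 of the main text and Theorem \ref{asymptotic properties of beta theorem}; for $K>2$ an analogous argument is needed, relying on Assumptions \ref{conditional indep assumption} and \ref{K>2 intervention_cvg_assumption} together with an inductive application of Section 4 of the main text. Specifically, given consistency and asymptotic normality of $\hat{\boldsymbol{\beta}}^{(k-1)}$ built from stages $1,\ldots,k-1$, the Continuous Mapping Theorem applied to the explicit formula for $\lambda$ and to the outcome-goal constraint transfers the argument of Theorem 1 to the present setting, after which the algebraic steps above go through verbatim.
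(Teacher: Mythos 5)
Your proposal matches the paper's proof, which simply adapts the two-stage argument of Theorem 1 by replacing the stage~1 data with the data from stages $1$ through $k-1$ and the stage~2 recommended intervention with the intervention recommended for stages $k$ through $K$ (using the stated working assumption that $\hat{\boldsymbol{x}}^{(k,\bar{n}_{k-})}$ is implemented from stage $k$ onward); your decomposition of the final sums, plug-in of $\hat{S}_1^{(k+)}$ and $\hat{S}_0^{(k+)}$, and the non-centrality condition $\lambda \geq \lambda_{\min}$ combined with the outcome-goal constraint are exactly that adaptation. Your closing remark about justifying the non-central $\chi^2$ approximation via the $K>2$ assumptions and an inductive appeal to Section~4 is a reasonable elaboration of what the paper leaves implicit, not a departure from its route.
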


\begin{proof}
\textcolor{white}{xxx}\\
The proof of Theorem \ref{unconditional power theorem stage k} can be obtained by adapting the proof of Theorem \ref{unconditional power theorem appendix} as follows. In the proof, the data collected in stage 1, as referenced in Theorem \ref{unconditional power theorem appendix}, are replaced by data collected from stages 1 through $k-1$. Similarly, the recommended intervention calculated for stage 2 in Theorem \ref{unconditional power theorem appendix} is replaced by the recommended intervention calculated for stages $k$ through $K$. 
\end{proof}

\begin{thm}{Two-sample z-test with unpooled variance, difference between two proportions, conditional power goal, number of stages $K>2$.}\label{conditional power goal theorem K>2}
    \textcolor{white}{xxx}\\
    Let 
    \begin{equation*}
    \Delta\left(\hat{\boldsymbol{x}}^{(k,\bar{n}_{k-})}, \hat{\boldsymbol{\beta}}^{(k-1)}\right) =
    \frac{ \hat{S}_1^{(k+)}\left(\hat{\boldsymbol{x}}^{(k,\bar{n}_{k-})}, \hat{\boldsymbol{\beta}}^{(k-1)}\right)}{N_1} 
    -\frac{\hat{S}_0^{(k+)}\left( \hat{\boldsymbol{\beta}}^{(k-1)} \right) }{N_0}, \\
    \end{equation*}
    where $\hat{S}_1^{(k+)}\left(\hat{\boldsymbol{x}}^{(k,\bar{n}_{k-})}, \hat{\boldsymbol{\beta}}^{(k-1)}\right)$ and $\hat{S}_0^{(k+)}\left( \hat{\boldsymbol{\beta}}^{(k-1)} \right)$ are defined in Theorem \ref{unconditional power theorem stage k}, equations (\ref{mu_1_2_stage_k}) and (\ref{mu_0_2_stage_k}), respectively.
    Let
    \begin{equation*}
    \begin{aligned}
    &\hat{\sigma}^2\left(\hat{\boldsymbol{x}}^{(k,\bar{n}_{k-})}, \hat{\boldsymbol{\beta}}^{(k-1)}\right) = 
    \frac{\sum_{l=k}^{K} n_0^{(l)} }{N_0^2} \operatorname{expit}\left( \hat{\beta}_0^{(k-1)}\right)\left(1-\operatorname{expit}\left( \hat{\beta}_0^{(k-1)}\right)\right) \\
    & +\frac{ \sum_{l=k}^{K} n_1^{(l)} }{N_1^2} \operatorname{expit}\left(\hat{\beta}_0^{(k-1)}+\left(\hat{\boldsymbol{\beta}}_1^{(k-1)}\right)^T\hat{\boldsymbol{x}}^{\left(k, \bar{n}_{k-}\right)}\right) \left(1-\operatorname{expit}\left(\hat{\beta}_0^{(k-1)}+\left(\hat{\boldsymbol{\beta}}_1^{(k-1)}\right)^T\hat{\boldsymbol{x}}^{\left(k, \bar{n}_{k-}\right)}\right)\right),
    \end{aligned}
    \end{equation*}
    and 
    \begin{equation*}
    \begin{aligned}
        &\hat{\sigma}^2_{uncond}\left(\hat{\boldsymbol{x}}^{(k,\bar{n}_{k-})}, \hat{\boldsymbol{\beta}}^{(k-1)}\right) = \\
        &\hspace{2cm} { \frac{\sum_{l=1}^{k-1} S_1^{(l)}+\hat{S}_1^{(k+)}(\hat{\boldsymbol{x}}^{(k,\bar{n}_{k-})}, \hat{\boldsymbol{\beta}}^{(k-1)})}{N_1} \left(1-\frac{\sum_{l=1}^{k-1} S_1^{(l)}+\hat{S}_1^{(k+)}(\hat{\boldsymbol{x}}^{(k,\bar{n}_{k-})}, \hat{\boldsymbol{\beta}}^{(k-1)})}{N_1} \right)} \frac{1}{N_1} \\
        &\hspace{2cm} +{ \frac{{\sum_{l=1}^{k-1} S_0^{(l)}+\hat{S}_0^{(k+)}(\hat{\boldsymbol{\beta}}^{(k-1)})}}{N_0} \left(1-\frac{{\sum_{l=1}^{k-1} S_0^{(l)}+\hat{S}_0^{(k+)}(\hat{\boldsymbol{\beta}}^{(k-1)})}}{N_0}\right)} \frac{1}{N_0}.
    \end{aligned}
    \end{equation*}
    
    Under Assumptions \ref{conditional indep assumption} and \ref{K>2 intervention_cvg_assumption}, and Assumptions 1, 5, 7 and 8 from the main text, the stage $k$ recommended intervention $\hat{\boldsymbol{x}}^{(k,\bar{n}_{k-})}$, subject to both an outcome goal and the conditional power goal, solves the following optimization problem: 
    \begin{equation*}
        \text{Min}_{\boldsymbol{x}} C\left(\boldsymbol{x}\right)
        \; \text {subject to} \; 
        \operatorname{expit}\left(\hat{\beta}_0^{(k-1)}+\left(\hat{\boldsymbol{\beta}}_1^{(k-1)}\right)^T\hat{\boldsymbol{x}}^{\left(k, \bar{n}_{k-}\right)}\right)
        \geq \tilde{p}, \;\text{and}
    \end{equation*}
    \small 
    \begin{equation*}
    \begin{aligned}
        & z_{\alpha/2}
        \sqrt{ \hat{\sigma}^2_{uncond}\left(\hat{\boldsymbol{x}}^{(k,\bar{n}_{k-})}, \hat{\boldsymbol{\beta}}^{(k-1)}\right)}
        - \frac{\sum_{l=1}^{k-1} S_1^{(l)}}{N_1} + \frac{\sum_{l=1}^{k-1} S_0^{(l)}}{N_0} \\ 
        &\hspace{2cm}- \Delta\left(\hat{\boldsymbol{x}}^{(k,\bar{n}_{k-})},\hat{\boldsymbol{\beta}}^{(k-1)}\right) 
         - z_{\Pi} \; \hat{\sigma}\left(\hat{\boldsymbol{x}}^{(k,\bar{n}_{k-})},\hat{\boldsymbol{\beta}}^{(k-1)}\right)
         \leq 0. \\ 
    \end{aligned}
    \end{equation*}
\end{thm}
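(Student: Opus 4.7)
The plan is to adapt, almost line-by-line, the proof of Theorem~\ref{conditional power goal theorem appendix} (the $K=2$ conditional-power result in Section~\ref{proof of theorem 2 section}) to the multi-stage setting. The key conceptual shift is that ``stage 1 data'' is replaced by the accumulated data from stages $1,\ldots,k-1$, and ``stage 2 quantities'' are replaced by aggregate quantities over stages $k,\ldots,K$, all under the working assumption stated just above the theorem that $\hat{\boldsymbol{x}}^{(k,\bar{n}_{k-})}$ will be implemented in every remaining stage. Under this setup, the roles of $S_1^{(1)},S_0^{(1)}$ become $\sum_{l=1}^{k-1}S_1^{(l)},\;\sum_{l=1}^{k-1}S_0^{(l)}$, and the roles of $n_1^{(2)},n_0^{(2)}$ become $\sum_{l=k}^{K}n_1^{(l)},\;\sum_{l=k}^{K}n_0^{(l)}$, which is exactly how $\hat S_1^{(k+)}$ and $\hat S_0^{(k+)}$ in the theorem statement are built.

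First, I would write the target requirement $P\bigl(Z>z_{\alpha/2}\,\big|\,\widetilde{\boldsymbol{A}}^{(k-1,\bar{n}_{(k-1)-})},\widetilde{\boldsymbol{Y}}^{(k-1,\bar{n}_{(k-1)-})}\bigr)\ge\Pi$ under the alternative, where $Z$ is the end-of-trial two-sample z-test from equation (5) of the main text, now computed on the combined data from all $K$ stages; the probability of $Z<-z_{\alpha/2}$ is negligible under a positive-effect alternative, so a one-sided argument suffices, just as in the proof of Theorem~\ref{conditional power goal theorem appendix}. I would then isolate the only unobserved stochastic piece on the left, namely $S_1^{(k+)}/N_1-S_0^{(k+)}/N_0$, by moving the observed past-stage sums to the right side, yielding the analogue of equation~(\ref{conditional approach satisfy this theorem 2}).

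Next, I would estimate the mean and variance of the conditional distribution of $S_1^{(k+)}/N_1-S_0^{(k+)}/N_0$ given the past. By the Central Limit Theorem applied to each of the future-stage sums, these are approximated under $\hat{\boldsymbol{\beta}}^{(k-1)}$ and $\hat{\boldsymbol{x}}^{(k,\bar{n}_{k-})}$ by $\Delta(\hat{\boldsymbol{x}}^{(k,\bar{n}_{k-})},\hat{\boldsymbol{\beta}}^{(k-1)})$ and $\hat{\sigma}^2(\hat{\boldsymbol{x}}^{(k,\bar{n}_{k-})},\hat{\boldsymbol{\beta}}^{(k-1)})$, whose forms come directly from the binomial approximations $\mathrm{bin}(n_\cdot^{(l)},\mathrm{expit}(\cdot))$ for each remaining stage, summed over $l=k,\ldots,K$. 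Simultaneously, the denominator of $Z$, which depends on the still-unobserved pooled proportions, is replaced by $\hat\sigma_{uncond}$ obtained by substituting the estimated future sums $\hat S_1^{(k+)}$ and $\hat S_0^{(k+)}$ into the usual unpooled variance formula. Standardizing the conditional inequality by $\hat\sigma$ and applying $\Phi$ together with the identity $\Pi=1-\Phi(z_\Pi)$ gives the displayed constraint in the theorem.

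The main obstacle, and the only substantive step beyond the $K=2$ case, is justifying the simultaneous substitutions in both the numerator and the denominator of $Z$: the denominator of $Z$ mixes observed past data with the yet-to-be-observed future data, so replacing it by $\hat\sigma_{uncond}$ requires a Slutsky-type argument. This is handled by noting that under Assumptions~\ref{conditional indep assumption} and~\ref{K>2 intervention_cvg_assumption} together with Assumptions 1, 5, 7 and 8, the stage-$(k-1)$-based estimator $\hat{\boldsymbol{\beta}}^{(k-1)}$ is consistent for $\boldsymbol{\beta}^*$ (by the multi-stage extension discussed in Section~\ref{asymptotic properties section}), so the plug-in estimates of the future-stage binomial means and variances converge in probability to their targets, and the ratio of the approximated denominator to the true denominator tends to one. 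The remainder is routine algebra mirroring equations~(\ref{conditional approach satisfy this theorem 2})--(\ref{conditional power goal final derivation}), and combining the resulting inequality with the outcome-goal constraint~(\ref{binary outcome goal simplified supp}) yields Theorem~\ref{conditional power goal theorem K>2}.
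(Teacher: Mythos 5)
Your proposal is correct and matches the paper's intended argument: the paper itself proves this theorem by stating that it follows immediately from the $K=2$ conditional-power proof (Theorem~\ref{conditional power goal theorem appendix}), with stage-1 data replaced by the accumulated data from stages $1,\ldots,k-1$ and stage-2 quantities replaced by aggregates over stages $k,\ldots,K$ under the assumption that $\hat{\boldsymbol{x}}^{(k,\bar{n}_{k-})}$ is implemented in all remaining stages, which is exactly the substitution-and-rerun strategy you carry out. Your added Slutsky-type remark justifying the plug-in of $\hat{\sigma}_{uncond}$ for the unknown end-of-trial denominator is a slightly more explicit treatment of a step the paper handles implicitly, but it is the same route.
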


Following the same idea as for the Theorem \ref{unconditional power theorem stage k}, the proof of Theorem \ref{conditional power goal theorem K>2} follows immediately from the proof of Theorem  \ref{conditional power goal theorem appendix} and is therefore omitted.

\section{Relationship between the Power Goal and the Outcome Goal}\label{power goal and outcome goal appendix}

Section \ref{power goal and outcome goal appendix} provides a detailed discussion of the relationship between the outcome goal and power goal. 
Results from Section 6 of the main text show that in settings with large per-center sample sizes and small number of centers, e.g., $n_j^{(1)}=n_j^{(2)}=100$ and $J=4$, the LAGO design with only an outcome goal already achieves high power at the end of the LAGO trial. In these settings, adding a power goal has minimal effect and does not alter the recommended intervention package compositions; the outcome goal dominates the power goal in these settings.

To identify which settings benefit from adding a power goal in addition to an outcome goal, we considered the power goal of the two-sample z-test for the difference between two proportions as a concrete example.

We computed the threshold of the outcome goal needed to dominate the power goal across various per-center sample sizes. Consider a setting similar to Scenario 1 of the Simulations.
The total number of centers in both the intervention and control groups was $J = 4$ for each stage.
The per-center sample sizes were $n_j^{(1)}=n_j^{(2)}=20, 30, 40, ..., 100$.
Stage 1 of the two-stage LAGO trial used a fractional factorial design. The two intervention components $x_1$ and $x_2$ were set to $(0,0)$ in the control group, and to $(1,0)$, $(0,4)$, $(1,4)$ in the intervention group.
The minimum and maximum values for $x_1$ and $x_2$ were $[\mathcal{L}_1, \mathcal{U}_1] = [0,2]$ and $[\mathcal{L}_2, \mathcal{U}_2] = [0,8]$, respectively. 
The model for the binary outcome was $\operatorname{logit}\left(\operatorname{pr}\left(Y_{i j}=1 \mid \boldsymbol{A}=\right. \left.\boldsymbol{a}, \boldsymbol{X}=\boldsymbol{x}; \boldsymbol{\beta}\right)\right)
= \beta_0+\boldsymbol{\beta}_1^T \boldsymbol{a}.$
The true coefficients were $(\beta^*_0, \beta^*_{11}, \beta^*_{12}) = (0.1, 0.3, 0.15)$.
The power goals considered were $\Pi = 0.80 \;\text{and}\; 0.90$.

Figure \ref{relationship between goals plot} shows the threshold of the outcome goal needed to dominate the power goal. The percentages above the lines represent the outcome goal as a percentage increase from the control value $\operatorname{expit}(\beta^*_0)=0.525$. For example, a value of 61.9\% indicates that the outcome goal needed is 1.619 times the control value (e.g., 0.525*1.619=0.8499). The first row displays results using the unconditional power approach, while the second row shows results using the conditional power approach.
The first row shows that with smaller sample sizes, e.g., $n_j^{(1)}=n_j^{(2)}=40$, adding a power goal of 0.8 (unconditional power approach) is beneficial when the outcome goal is below 35.4\% above the control value (i.e., below 1.354 times control). Similarly, adding a power goal of 0.9 (unconditional power approach) is beneficial when the outcome goal is below 46.5\% above the control value. For larger sample sizes, e.g., $n_j^{(1)}=n_j^{(2)}=100$, adding a power goal is beneficial when the relative outcome goal is below 10\% above the control value. 
The second row exhibits a pattern similar to the first row. 

In summary, the power goal is particularly useful with small sample sizes and when the pre-specified outcome goal is close to the control outcome value. The definition of ``close" varies substantially across different implementation science research areas, and setting an appropriate outcome goal before the trial can be challenging. In settings with large sample sizes, the outcome goal typically provides sufficient power at the end of the LAGO trial. 
In conclusion, it can be expected that for the LAGO design with an outcome goal, the power goal functions as an additional safety net that helps prevent trial failure, especially when sample sizes are small, or when the outcome goal turns out to be close to the control probability/mean outcome.

\begin{figure}[hpt]
    \centering
    \includegraphics[width=1\linewidth]{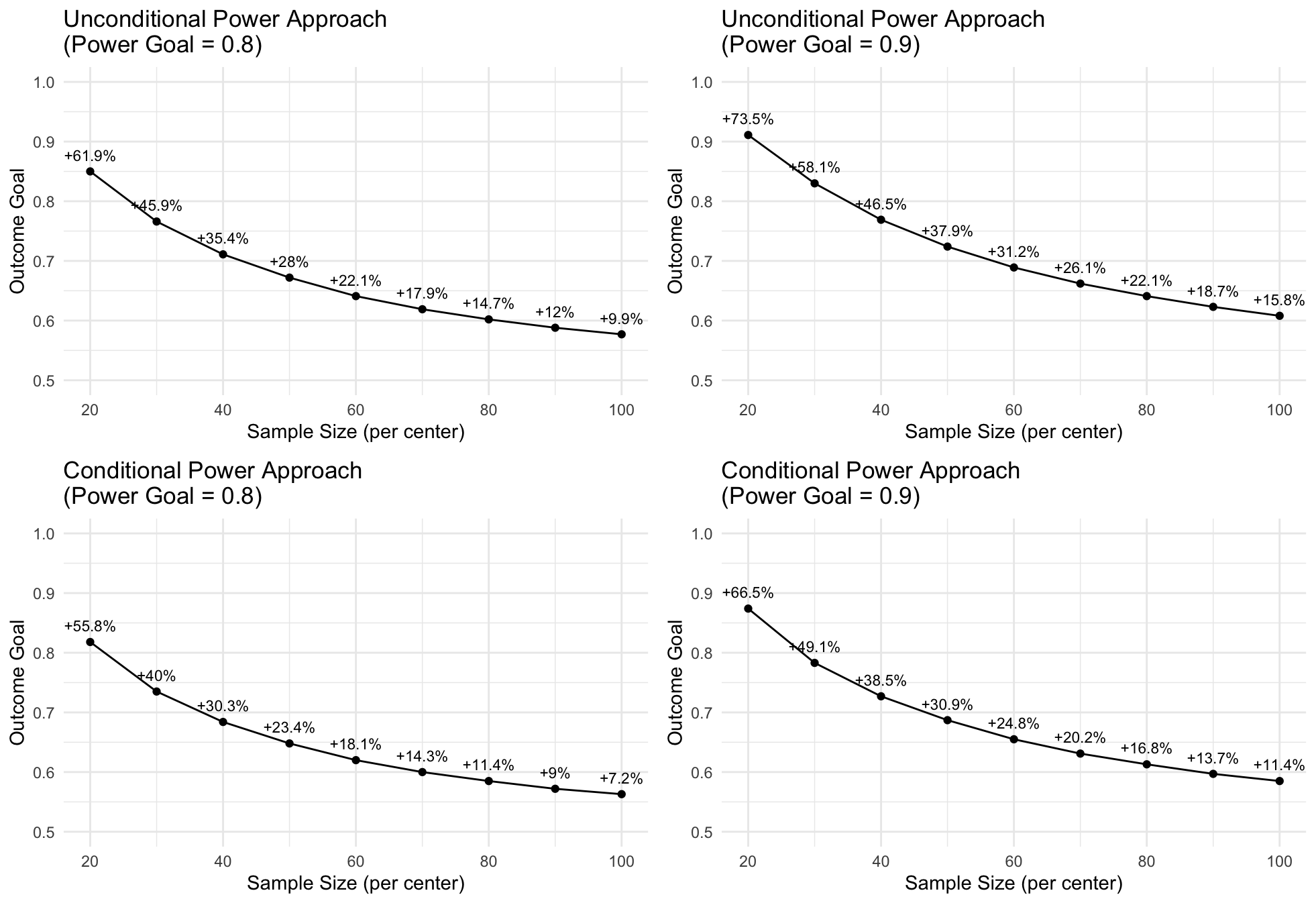}
    \caption{Threshold of the outcome goal needed to dominate the power goal at various per-center sample sizes, in setting of Section \ref{power goal and outcome goal appendix}.}
    \label{relationship between goals plot}
\end{figure}

\newpage 
\section{Method for Empirically Verifying Assumption 7 of Convergence of the Recommended Intervention with Non-linear Cost Functions}

To empirically verify Assumption~7 from the main text, which states that equation~(1) from the main text has a unique solution $\boldsymbol{x}$ in an open neighborhood of $\boldsymbol{\beta}^*$ with continuous dependence on $\boldsymbol{\beta}$, the following method can be used. 
This method is particularly useful for non-linear cost functions where theoretical guarantees may be difficult to achieve. We first introduce the method then describe an extension. 

\bigskip 

First, the outcome model is fit to the observed data to obtain an estimate $\hat{\boldsymbol{\beta}}$. Then, equation~(1) from the main text is solved using $\boldsymbol{\beta}$ in place of $\hat{\boldsymbol{\beta}}$, yielding a solution $\hat{\boldsymbol{x}}$. 
Next, an open neighborhood is constructed as $\mathcal{N} = \{\boldsymbol{\beta} : \|\boldsymbol{\beta} - \hat{\boldsymbol{\beta}}\|_2 < \epsilon\}$, where $\epsilon$ is predetermined by the trial statisticians. Within this neighborhood, $L$ parameter values $\{\boldsymbol{\beta}_{1}, \ldots, \boldsymbol{\beta}_{L}\}$ are randomly selected.
For each $\boldsymbol{\beta}_{l}$, the corresponding solution to equation (1) from the main text, denoted  $\boldsymbol{x}_{l}$ is computed.
Lastly, the maximum difference between $\boldsymbol{x}_{l}$ and $\hat{\boldsymbol{x}}$ is calculated as
\[
\delta^{\text{max}} = \max_{1 \leq l \leq L} \|\boldsymbol{x}_{l} - \hat{\boldsymbol{x}}\|.
\]
If $\delta^{\text{max}}$ is below a predefined tolerance threshold $\eta$, it suggests that Assumption~7 from the main text holds.

\bigskip 
The method described above can be extended to check the assumption across a wider range of likely parameter values, not just those immediately around $\hat{\boldsymbol{\beta}}$.

\bigskip 
First, using the 95\% confidence intervals of $\hat{\boldsymbol{\beta}}$, $M$ candidate vectors $\{\boldsymbol{\beta}_m\}_{m=1}^M$ are selected based on quantiles within the confidence intervals. 
For each $\boldsymbol{\beta}_m$, equation~(1) from the main text is solved using $\boldsymbol{\beta}$ in place of $\boldsymbol{\beta}_m$, yielding a solution ${\boldsymbol{x}}_m$.
Next, for each $\boldsymbol{\beta}_m$, an open neighborhood is constructed as $\mathcal{N}_m = \{\boldsymbol{\beta} : \|\boldsymbol{\beta} - \boldsymbol{\beta}_m\|_2 < \epsilon\}$. Within each neighborhood $\mathcal{N}_m$, $L$ parameter vectors $\{\boldsymbol{\beta}_{m,1}, \ldots, \boldsymbol{\beta}_{m,L}\}$ are randomly selected, and their corresponding solutions to equation (1) from the main text, $\boldsymbol{x}_{m,l}$, are computed.
Lastly, the maximum difference between $\boldsymbol{x}_{m,l}$ and ${\boldsymbol{x}}_m$ is calculated as
\[
\delta_m^{\text{max}} = \max_{1 \leq l \leq L} \|\boldsymbol{x}_{m,l} - \boldsymbol{x}_m\|.
\]
If $\delta_m^{\text{max}}$ remains below the tolerance threshold $\eta$ for all $m$, this provides stronger evidence that Assumption~7 from the main text holds.

\end{document}